\providecommand{\U}[1]{\protect\rule{.1in}{.1in}}
\newtheorem{theorem}{Theorem}
\newtheorem{corollary}[theorem]{Corollary}
\newtheorem{definition}[theorem]{Definition}
\newtheorem{example}[theorem]{Example}
\newtheorem{lemma}[theorem]{Lemma}
\newtheorem{proposition}[theorem]{Proposition}
\newenvironment{proof}[1][Proof]{\noindent\textbf{#1.} }{\ \rule{0.5em}{0.5em}}
\definecolor{nblue}{rgb}{0.2,0.2,0.7}
\definecolor{ngreen}{rgb}{0.2,0.6,0.2}
\definecolor{nred}{rgb}{0.7,0.2,0.2}
\definecolor{nblack}{rgb}{0,0,0}
\begin{document}

\title{Modes of asymmetry: the application of harmonic analysis to symmetric quantum dynamics and quantum reference frames}

\author{Iman Marvian}
\affiliation{Perimeter Institute for Theoretical Physics, 31 Caroline St. N, Waterloo, \\
Ontario, Canada N2L 2Y5}
\affiliation{Institute for Quantum Computing, University of Waterloo, 200 University Ave. W, Waterloo, Ontario, Canada N2L 3G1}
\affiliation{Department of Physics and Astronomy, Center for Quantum Information Science and Technology, University of Southern California, Los Angeles, CA 90089}

\author{Robert W. Spekkens}
\affiliation{Perimeter Institute for Theoretical Physics, 31 Caroline St. N, Waterloo, \\
Ontario, Canada N2L 2Y5}

\date{Oct. 28, 2013}

\begin{abstract}
Finding the consequences of symmetry for open system quantum dynamics is a problem with broad applications, including describing thermal relaxation, deriving quantum limits on the performance of amplifiers, and exploring quantum metrology in the presence of noise.  The symmetry of the dynamics may reflect a symmetry of the fundamental laws of nature, a symmetry of a low-energy effective theory, or it may describe a practical restriction such as the lack of a reference frame. In this paper, we apply some tools of harmonic analysis together with ideas from quantum information theory to   this problem. 
The central idea is to study the decomposition of quantum operations---in particular, states, measurements and channels---into different modes, which we call \emph{modes of asymmetry}.
Under symmetric processing, a given mode of the input is mapped to the corresponding mode of the output, implying that one can only generate a given output if the input contains all of the necessary modes.
By defining monotones that quantify the asymmetry in a particular mode, we also derive 
quantitative constraints on the resources of asymmetry that are required to simulate a given asymmetric operation.
We present applications of our results for deriving bounds on the probability of success in nondeterministic state transitions, such as quantum amplification, and a simplified formalism for studying the degradation of quantum reference frames. 

\end{abstract}

\maketitle

\tableofcontents

\newpage

\section{Introduction}

Extracting non-trivial information about a system's dynamics based on its symmetries is a standard technique in physics. Noether's theorem is a prime example: it allows one to infer conservation laws from symmetries of closed-system dynamics.  
As it turns out however, for \emph{mixed} quantum states, the Noether conservation laws do not capture all of the constraints on state transitions that arise from symmetries.   Furthermore, for open-system dynamics, there are nontrivial constraints on state transitions arising from symmetries even though Noether's theorem does not imply any~\cite{MarvianSpekkensNoether}.

The symmetry of a closed-system dynamics is simply the symmetry of the Hamiltonian that describes the dynamics. 
In an open-system dynamics, the system is not isolated but interacts with its environment.  Then, if the total Hamiltonian of the system and  environment respects a symmetry, and furthermore if the  initial state of the environment also respects that symmetry then the effective evolution of the system will also have that symmetry.  

Finding the set of \emph{all} constraints on state transitions that are implied by symmetries of the dynamics (open or closed) is an important open problem. Solving it motivates the development of a general theory of the \emph{asymmetry properties} of states, that is, the properties which describe the manner in which a state \emph{breaks} symmetries, because these are the properties that determine the possibility of state transitions under symmetric dynamics.

Developing such a theory is also important for the study of quantum references frames (see \cite{BRS07} for a review).  In a context where the only experimental operations that can be freely implemented by an agent are symmetric, an asymmetric state becomes a resource because it can be used to simulate asymmetric channels and asymmetric measurements~\cite{BRST06,BRST06b,Poulin-Yard,Sher-Bart,Mar-Man,BRST04,BIM,Ahm-Rud,MarvianSpekkensWAY} (See Fig.~\ref{Fig:Simulation}).  The restriction to symmetric operations can be understood as the result of lacking a reference frame, and an asymmetric state can be understood as a quantum token of the missing reference frame, allowing the agent to simulate measurements and transformations that are defined relative to the frame.  So this provides another motivation for developing a general theory of asymmetry, one that characterizes not only the asymmetry properties of states, but of channels and measurements as well.

\begin{figure}[h!]
 \center{   \includegraphics[width=6cm]{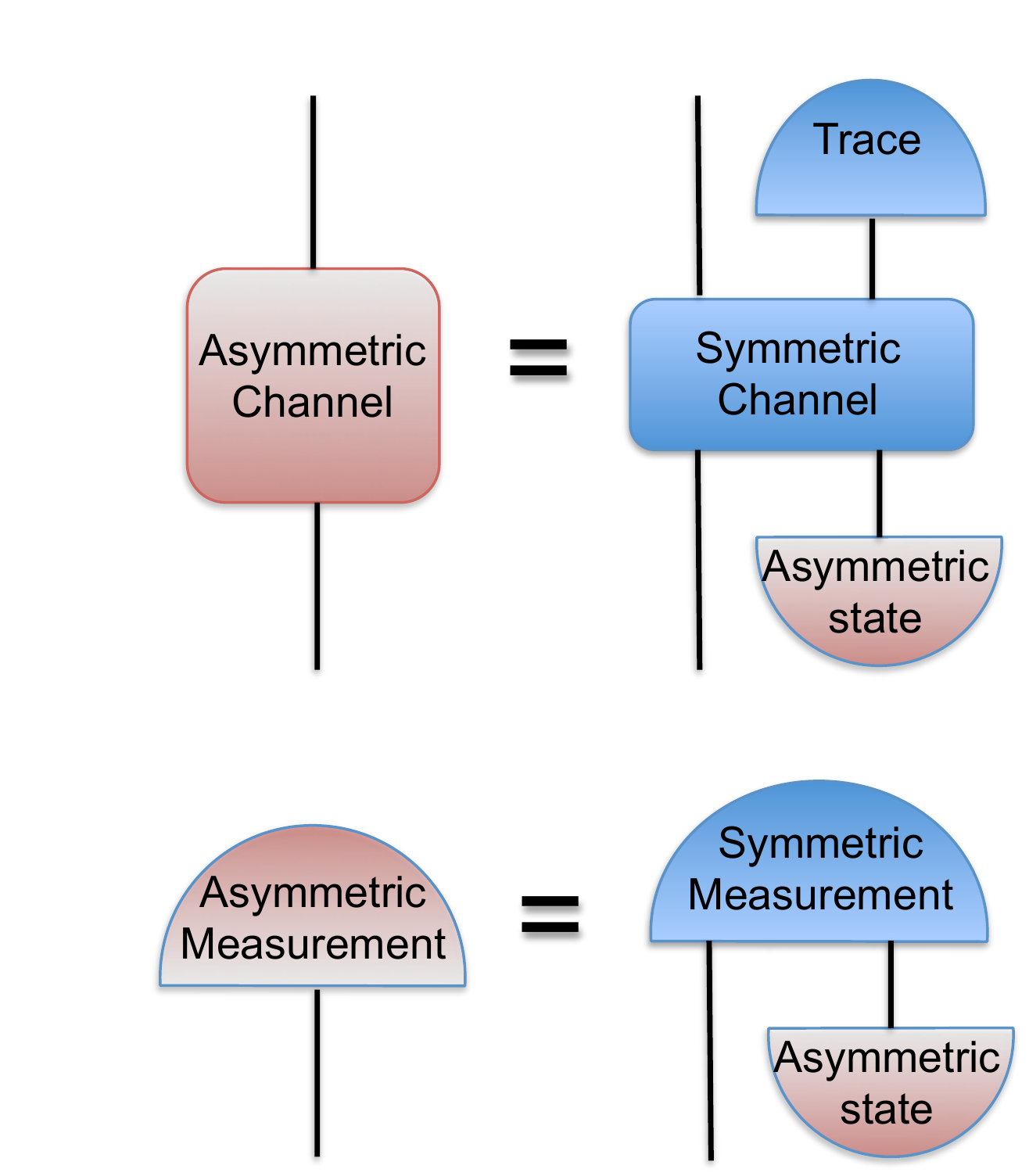}}
\caption{\label{Fig:Simulation}Circuit diagrams depicting how an asymmetric state on one system can be used together with symmetric operations to simulate an asymmetric channel or an asymmetric measurement on another system.}
\end{figure}

There has been significant progress towards this goal in recent years, in particular, on the asymmetry properties of \emph{pure} states~\cite{GS07, GMS09, MS11, MS11-Short, thesis}. For instance, Ref.~\cite{MS11} provides a characterization of the equivalence classes of pure asymmetric states and the necessary and sufficient conditions for one pure state to be converted to another by symmetric processing for any symmetry corresponding to a compact Lie group.  In the case of general mixed states, however, the problem is much harder and much less is known. 
Furthermore, there has been very little work on developing a unified framework for characterizing the asymmetry properties of quantum channels and measurements for arbitrary symmetry groups.

The theory of asymmetry also provides a framework for understanding quantum coherence as a resource. Coherence is considered in many cases to be the signature of quantum behaviour.
Famous quantum phenomena such as the wave nature of particles, superconductivity, and superfluidity can all be interpreted as manifestations of quantum coherence.  To understand the relation between asymmetry and coherence, consider the following example from quantum optics.  Suppose $|n\rangle$ is  the state with $n$ photons in a given mode, and $|0\rangle$ is the vacuum state.  Consider the coherent superposition $\frac{1}{\sqrt{2}}(|0\rangle+|n\rangle)$ and the incoherent mixture $\frac{1}{{2}}(|0\rangle\langle 0|+|n\rangle\langle n|)$.   
One way to understand the  difference between these two states is that the coherent superposition is sensitive to phase shifts while the incoherent  mixture is not.  As such, coherence can be defined as asymmetry relative to phase shifts.  This connection is explored further in Appendix~\ref{App:coherence}.


This article develops the theory of asymmetry by focusing on Fourier decompositions of quantum states, quantum measurements and quantum channels.  

To give the flavour of our approach, we begin by recalling the significance of harmonic analysis (equivalently, Fourier analysis) for classical signal processing.  If a processing is both linear and symmetric under time-translations (one says that the system is \emph{linear time invariant} in this case), then one can decompose the input and output signals to different Fourier modes, i.e. different frequencies, such that 
a signal with frequency $\omega$ at the input can only generate a signal with the same frequency $\omega$ at the output (See Fig.~\ref{Fig:LTI}). 

\begin{figure}[h!]
 \center{   \includegraphics[width=8cm]{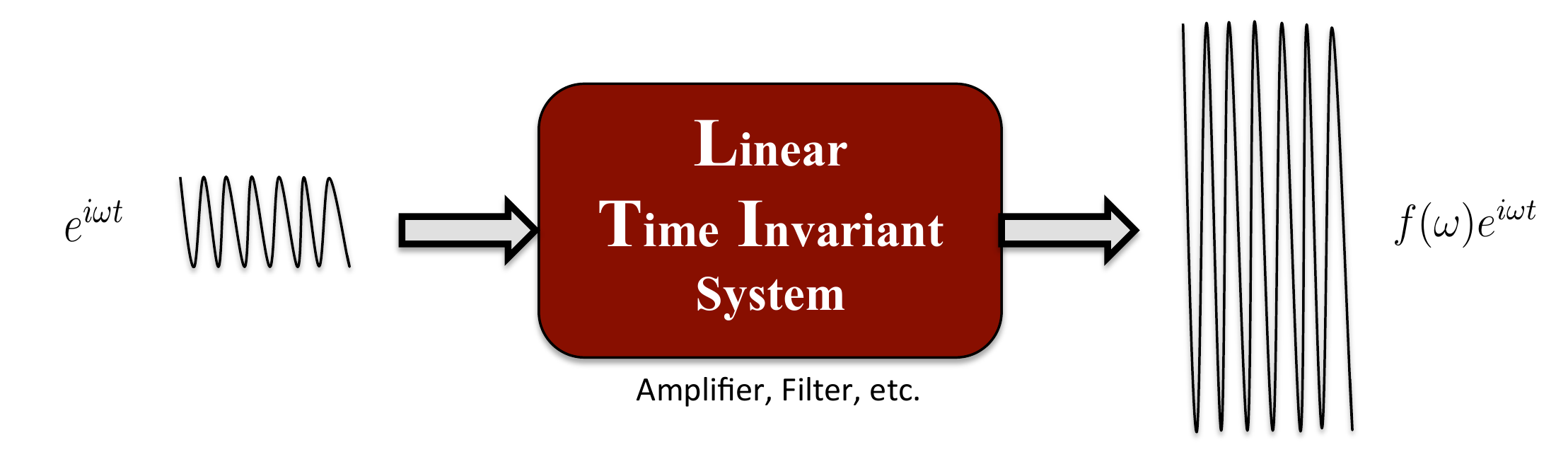}}
    \caption{\label{Fig:LTI}
  Any linear time invariant  process  transforms an input signal of frequency  $\omega$ to an output signal of the same frequency.  In other words, linearity together with time invariance implies that the process cannot change the frequency of the input. It follows that any linear time invariant system can be uniquely specified by a complex function $f(\omega)$ specifying the change in amplitude and phase of the mode $\omega$. This explains  why Fourier analysis is extremely useful for the study of these processes. 
    }
\end{figure}

 
 

We here consider an analogous decomposition of quantum states, measurements and channels into different modes.  
 The key mathematical tool is the notion of \emph{irreducible tensor operators}.  Using these, one can develop a notion of a decomposition into modes for any symmetry described by a finite or compact Lie group.
We refer to the modes appearing in such a decomposition as \emph{modes of asymmetry}. Roughly speaking, different modes of asymmetry of a state (or measurement or channel) are different characteristic ways in which it can break a given symmetry. If for a given symmetry group, a state does not have a particular mode of asymmetry then under a symmetric dynamics it can never evolve to a state which has that mode of asymmetry. Similarly, it cannot be used as a quantum reference frame for simulating measurements or channels which have that mode of asymmetry. 
 We also introduce some novel measures of asymmetry (i.e. asymmetry monotones) that can quantify the amount of asymmetry associated to a particular mode.
 
This approach provides us with a powerful tool for the study of asymmetry,
one that is particularly well adapted to understanding asymmetric quantum states of finite-dimensional systems, i.e. quantum reference frames, as physical resources. For example, these tools allow one to determine which aspects of the quantum reference frame are relevant for the degree of success that can be achieved in a reference frame alignment protocol and more generally in covariant quantum estimation problems. Similarly, they allow one to determine which aspects of the quantum reference frame state are relevant for being able to simulate asymmetric channels or asymmetric measurements.

Previous work has sometimes identified, for certain tasks such as simulating measurements and channels, which properties of a quantum reference frame are relevant for performing that task, but these insights were achieved in an ad hoc manner and only for particular groups (See e.g. \cite{BRST06,BRST06b,Poulin-Yard,Sher-Bart}).  The framework presented in this paper provides a unified and systematic way of determining what aspects of an asymmetric state are relevant for any such task, and it can also be applied to any finite or compact Lie group.

In the following we provide a couple of examples of results that one can derive with this framework.

\subsection{Some examples of applications}

\subsubsection{Spin-$j$ system as quantum reference frame}

Many authors have considered the example of
a spin-$j$ system as a quantum reference frame for direction (See e.g. \cite{BRST04, BRST06b, BRST06, BRS07, Poulin-Yard, Mar-Man,Sher-Bart, BIM,Ahm-Rud,MarvianSpekkensWAY}). In particular, one interesting question which has been studied in several works is the problem of simulating measurements and channels that break rotational symmetry using rotationally-invariant interactions and the resource of a spin-$j$ system as a quantum reference frame. 

For instance, Ref.~\cite{Sher-Bart} considers this problem for the special case of  simulating channels and measurements on a spin-$\tfrac{1}{2}$ system using a spin-$j$ system as resource. To simplify the problem, it is assumed that the state of the spin-$j$ system is invariant under rotations around a direction $\hat{n}$, which is to say that it merely acts as a reference direction rather than as a full Cartesian frame. Using this assumption, Ref.~\cite{Sher-Bart} argues that the state of the quantum reference frame can be uniquely specified by $2j$ real numbers corresponding to $2j$ moments of ${L}_{\hat{n}}$, i.e. $\{\text{tr}(\rho {L}^{k}_{\hat{n}}): 1 \le k\le 2j \}$ where $\rho$ is the state of the quantum reference frame and  where  $L_{\hat{n}}$ is the angular momentum operator in the $\hat{n}$ direction.   This characterization is then used  to quantify how well a given measurement or channel on a spin-$\tfrac{1}{2}$ system can be simulated.

As an example of the applications of our general results we reconsider this problem in section \ref{sec:Spin-j}  and we show how our approach leads to a great deal of simplification. In particular, we show that 
the quality of simulating a measurement (respectively a channel) on a spin-$\tfrac{1}{2}$ system  only depends on $\text{tr}(\rho L_{\hat{n}})$  (respectively ($\text{tr}\rho L_{\hat{n}})$ and $\text{tr}(\rho L^{2}_{\hat{n}})$). In other words, all the higher moments are irrelevant in this problem.  More generally, we consider the problem of simulating measurements and channels on a spin-$l$ system instead of a spin-$\tfrac{1}{2}$. In this case we show   that the quality of simulating a  measurement (respectively a channel)  depends only on  $2l$ (respectively $4l$)  real parameters corresponding to the moments $\{\text{tr}\left(\rho L^{k}_{\hat{n}}\right): 1 \le k\le 2l\}$ (respectively $\{\text{tr}\left(\rho L^{k}_{\hat{n}}\right): 1 \le k\le 4l\}$).

Finally, we consider the general case where the spin-$j$ system does not have any symmetries, and hence can act as a full Cartesian frame.  In theorem \ref{prop-gen-spin-l} we show that in this general case  the quality of simulating  a measurement (respectively a channel) is determined uniquely by the expectation value of all irreducible spherical tensor operators with rank less than or equal to $2l$ (respectively $4l$), i.e. by $(2l+1)^{2}-1$ (respectively $(4l+1)^{2}-1$) real parameters.

This example exhibits the power of the simple idea of mode decompositions of states, measurements and channels.

\subsubsection{Bounds on Non-deterministic Amplifiers}

An interesting application of the theory of asymmetry  is to study the quantum noise generated by optical or electronic amplifiers. Such noise is inevitable in any amplification process and has a quantum origin. The traditional explanation is based on commutation relations in together with the linearity of the equations of motion \cite{Caves}. But this approach  cannot be applied to study the noise generated in non-linear amplifiers. Also, \emph{non-deterministic amplifiers}, which were introduced in \cite{Ralph-Lund}, fall outside the scope of applicability of this  approach.  A non-deterministic amplifier is defined as one that is allowed to only succeed with some nonzero probability (it must, however, produce a flag specifying whether it has succeeded or not), and this makes it possible to achieve amplification which produces less noise than would a deterministic amplifier, when it succeeds.

It turns out that the quantum noise generated in an optical amplification process can be explained as a consequence of a $U(1)$ symmetry of the amplifier. This is the symmetry corresponding to the phase shifts of the input and output signals. So, a phase insensitive amplifier can be thought as an open-system dynamics with $U(1)$ symmetry (amplification is necessarily an open-system dynamics because it requires a source of energy). Therefore, we can apply the general theory of asymmetry to find the consequences of symmetry in this open system dynamics. In particular, we can explain the origin of quantum noise in the following way: a symmetric dynamics cannot increase the amount of asymmetry. If a phase-insensitive amplifier did not generate noise and, for example, perfectly transformed a coherent state to another coherent state with larger amplitude, then the asymmetry of the output signal would be larger than the asymmetry of the input. 
This statement can be made quantitative using the notion of asymmetry monotones (see section \ref{sec:U(1)-mon}). 

In this paper, we introduce a particular type of asymmetry monotone, one which quantifies the amount of asymmetry in a particular mode.  Under a phase-insensitive amplifier, the amount of asymmetry in each mode is non-increasing. 
Therefore, for every mode of asymmetry, we get an independent constraint on the output signal.
The advantage of this approach for explaining the origin of the noise is that it can be applied to a much broader range of amplification processes. In particular, it can be applied to nonlinear and non-deterministic amplifiers. 
In the following, we present an example of some results which can be obtained in this way.

Let $\hat{N}$ be the number operator with eigenvectors $\{|n\rangle, n=0,1,\cdots\}$ such that $\hat{N}|n\rangle=n|n\rangle$. Here, the  eigenvalue $n$ corresponds to the number of photons (excitations) in the input/output signals. This means that a phase shift $\phi$ is described by the unitary $e^{i\phi\hat{N}}$.

Consider a general input state described by the density operator $\rho=\sum_{n,m}\rho_{nm} |n\rangle\langle m|$. Suppose that under a phase-insensitive non-deterministic amplification process this state is transformed to $\rho'=\sum_{n,m}\rho'_{nm} |n\rangle\langle m|$ with probability of success $p$. Then, we show in section \ref{sec:U(1)-mon} that the following inequalities hold  
\begin{equation}
\forall k\in \mathbb{N}:\ \ \ p\le \frac{\sum_{n} |\rho_{n+k,k}|}{\sum_{n} |\rho'_{n+k,k}|}.
\end{equation}
The $k$th such inequality is a constraint derived from the non-increase of asymmetry in the $k$th mode.
In particular, these inequalities imply that if $\rho=|\psi\rangle\langle\psi|$ and $\rho'=|\psi'\rangle\langle\psi'|$ are pure states and $|\psi\rangle=\sum_{n}\psi_{n}|n\rangle$ and $|\psi'\rangle=\sum_{n}\psi'_{n}|n\rangle$   then
\begin{equation}\label{pure-state-mon}
\forall k\in \mathbb{N}:\ \ \ p\le \frac{\sum_{n} |\psi_{n+k}||\psi_{n}|}{\sum_{n} |\psi'_{n+k}||\psi'_{n}|}.
\end{equation}
To see an example of the consequences of these constraints, assume the input is a coherent state $|\alpha\rangle\equiv e^{-|\alpha|^2/2} \sum_n ({\alpha^n}/{\sqrt{n!}}) |n\rangle$. There have been speculations that  a nondeterministic quantum amplifier might be able to transform a coherent state  $|\alpha\rangle$  to a coherent state $|\alpha'\rangle$ for $|\alpha'|> |\alpha|$ with a nonzero probability~\cite{Ralph-Lund}. Then, using Eq.~(\ref{pure-state-mon}) we find
\begin{align}
\forall k\in \mathbb{N}:\ \ \ p&\le \frac{e^{-\frac{|\alpha|^{2}}{2}}\sum_{n} |\alpha|^{2n+k}/\sqrt{n!(n+k)!}}{e^{-\frac{|\alpha'|^{2}}{2}}\sum_{n} |\alpha'|^{2n+k}/{\sqrt{n!(n+k)!}}}\nonumber \\ &\le e^{-\frac{|\alpha|^{2}-|\alpha'|^{2}}{2}} \left(\frac{|\alpha|}{|\alpha'|}\right)^{k}.
\end{align}
In the limit $k\to\infty$, we can easily see that if $|\alpha'|> |\alpha|$, then the probability of transforming  the coherent state $|\alpha\rangle$ to $|\alpha'\rangle$ is zero. The conclusion that the probability of achieving a nontrivial amplification of a coherent state is strictly zero was also found in Refs.~\cite{MenziesCroke} and \cite{Amp13} by completely different arguments. 


\subsection{The structure of this paper}

We begin by explaining the main ideas of the article using the simple example of a U(1)-symmetry associated with a (non-projective) unitary representation. Then we present the generalization to the case of arbitrary finite or compact Lie group with arbitrary projective representations. In the following we present an overview of the contents of the article.

In section~\ref{sec:U(1)}, we present the idea of a mode decomposition for the special case of the group U(1). In section \ref{sec:U(1)-mon}, we introduce asymmetry monotones which quantify the amount of asymmetry in each mode.  Then in sections \ref{sec:misalign} and \ref{sec:align-q} we present some applications of the idea of mode decompositions in the context of phase references. 


To generalize the concept  of modes of asymmetry to arbitrary finite and compact Lie groups, we use the notion of  irreducible tensor operators. We provide a short review of this subject in section \ref{sec-rev}. In section \ref{sec:new-rep-g-cov},  we use this notion to  introduce a representation of \emph{$G$-covariant quantum operations}, i.e., quantum operations which have symmetry relative to a unitary representation of group $G$.  This representation of $G$-covariant operations  basically characterizes them in terms of how they act on the irreducible tensor operators. 
We use this representation to define the notion of modes of asymmetry of states for arbitrary finite and compact Lie group in section \ref{Arbit-group}, and
we introduce the notion of asymmetry monotones that quantify the amount of asymmetry in a particular mode.

In section \ref{sec:Quant-Mode}, we generalize the idea of mode decompositions to quantum channels and measurements. The main motivation for this generalization is to study the problem of simulating quantum channels and measurements using quantum reference frames. This is done in section \ref{Modes of Q.RF} where we show how the mode decomposition of a quantum reference frame determines the measurements and channels which can be simulated by it. 
Finally, in Section \ref{sec:Spin-j}, we apply these results to the important example of a spin-$j$ system as a directional quantum reference frame.

\section{Modes of asymmetry for the group U(1)} \label{sec:U(1)}

Let $e^{i\theta}\rightarrow U(\theta)$ be an arbitrary  unitary representation of the group $U(1)$. Let $\{|n,\alpha\rangle\}$ be an orthonormal basis in which the representation $e^{i\theta}\rightarrow U(\theta)$ is decomposed into irreducible representations as
\begin{equation}
U(\theta)=\sum_{n} e^{i n \theta } \sum_{\alpha} |n,\alpha\rangle\langle n,\alpha |
\end{equation}
where the integer $n$ specifies the irrep of $U(1)$ and $\alpha$ is the multiplicity index.


Let $\mathcal{B}(\mathcal{H})$ be the space of linear operators on $\mathcal{H}$, which is clearly spanned by $\{ |n,\alpha\rangle\langle m,\beta|\}$. Consider the subspace in $\mathcal{B}(\mathcal{H})$ spanned by  operators  $\{\forall n,\alpha,\beta :\   |n+k,\alpha\rangle\langle n,\beta| \}$. We denote this subspace by $\mathcal{B}^{(k)}$. We call any operator in this subspace a \emph{mode $k$ operator}.

Suppose $A^{(k)}$ is a mode $k$ operator, i.e., it lives in $\mathcal{B}^{(k)}$. We may then write it as
\begin{equation}
A^{(k)}= \sum_{n,\alpha,\beta} |n+k,\alpha\rangle\langle n,\beta|\ \textrm{tr}(A |n,\beta\rangle\langle n+k,\alpha|).
\end{equation}
It also follows that 
\begin{equation}\label{eq:modekoperator}
U(\theta) A^{(k)} U^\dag(\theta)=e^{i k\theta }A^{(k)}, \; \forall \theta \in [0,2\pi).
\end{equation}
On the other hand, if an operator $A$  satisfies Eq.~\eqref{eq:modekoperator}
 then by virtue of the linear independence of functions  $\{e^{ik\theta}\}$ we can conclude that $A$ necessarily lives in the subspace $\mathcal{B}^{(k)}$. Therefore, we have
\begin{equation}\label{transformation rule for U(1)}
A^{(k)}\in \mathcal{B}^{(k)}\ \ \   \Longleftrightarrow \ \ \forall \theta:\  U(\theta) A^{(k)} U^\dag(\theta)=e^{i k\theta }A^{(k)}
\end{equation}
For an arbitrary operator $A$, we can express it as a decomposition $A=\sum_{k} A^{(k)}$ where $A^{(k)}\in  \mathcal{B}^{(k)}$.  Here, $A^{(k)}$ is called the component of $A$ in mode $k$.
Note that for all $k\neq 0$ we have $\text{tr}(A^{(k)})=0$.  Furthermore
\begin{equation}
U(\theta) A U^\dag(\theta)=\sum_k e^{i k\theta } A^{(k)}.
\end{equation}
So to decompose  a given operator $A$ to its modes we can use the following relation
\begin{equation}
\forall k:\ \ A^{(k)}= \frac{1}{2\pi}\int d\theta\ e^{-i k\theta } \ U(\theta) A U^\dag(\theta).
\end{equation}
Note that for any  Hermitian operator $A$ it holds  that ${A^{(k)}}^\dag=A^{(-k)}$.

Suppose $\mathcal{E}$ is a U(1)-covariant super-operator, i.e.
$$\forall \theta:\ \  \mathcal{E}\left(U(\theta)(\cdot)U^\dag(\theta)\right)=U(\theta)\mathcal{E}\left(\cdot\right)U^\dag(\theta).$$
Then,  if both sides of this equation act on  an arbitrary operator $A^{(k)}\in\mathcal{B}^{(k)}$,  we get
\begin{equation} \label{U(1)-inv}
e^{ik \theta} \mathcal{E}(A^{(k)})= U(\theta) \mathcal{E}(A^{(k)})U^\dag(\theta)
\end{equation}
We can then infer from Eq.~(\ref{transformation rule for U(1)}) that $\mathcal{E}(A^{(k)})$ also lives in $\mathcal{B}^{(k)}$. 

Note that this result did not require $\mathcal{E}$ to be a completely positive map nor to be trace-preserving, but it certainly applies in these cases.  We use the term \emph{quantum operation} to refer to a completely-positive trace-nonincreasing superoperator, and \emph{quantum channel} to refer to a deterministic (i.e. trace-preserving) quantum operation. 
 We infer that U(1)-covariant quantum operations cannot change the mode of a state; they just map an operator in one mode to another operator in the same mode. In particular,
if a U(1)-covariant channel $\mathcal{E}$ maps state $\rho$ to $\sigma$ then
\begin{equation}
\mathcal{E}(\rho^{(k)})=\sigma^{(k)},
\end{equation}
where
\begin{align}
\rho&=\sum_k \rho^{(k)} \text{ with } \rho^{(k)}\in \mathcal{B}^{(k)}, \text{ and} \\
\sigma&=\sum_k \sigma^{(k)} \text{ with } \sigma^{(k)}\in \mathcal{B}^{(k)},
\end{align}
are the mode decompositions of $\rho$ and $\sigma$.

This suggests that we can interpret different $k$ as different modes of asymmetry: they cannot be interconverted to each other under U(1)-covariant quantum channels. In particular, if the initial state does not have a particular mode, then the final state of a U(1)-covariant dynamics also does not have that mode. (Of course, a mode can be eliminated if the associated component is mapped to zero by the dynamics.) Furthermore, a state $\rho$ is U(1)-invariant if and only if it contains only mode $k=0$.

Let $\text{Modes}(\rho)\equiv \{ k : \rho^{(k)}\ne 0 \}$ be the set of all integer $k$'s for which  the state $\rho$ has a nonzero component in mode $k$ (This will always include $k=0$).  
So using this notation the above observation can be summarized as follows.
\begin{proposition}\label{pro-Mode}
Assume a state $\rho$ can be transformed to another state $\sigma$ under a U(1)-covariant operation (deterministic or stochastic). Then
\begin{equation}
\text{Modes}(\sigma)\subseteq \text{Modes}(\rho).
\end{equation}
\end{proposition}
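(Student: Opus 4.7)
The proof should be essentially a direct consequence of the mode-preservation property for U(1)-covariant operations that was already established in the paragraph preceding the proposition. The plan is to combine linearity, U(1)-covariance, and the observation captured in Eq.~(\ref{U(1)-inv}) to argue that each mode of the output is the image of the corresponding mode of the input, and then invoke the contrapositive.

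More concretely, I would let $\mathcal{E}$ be a U(1)-covariant quantum operation realizing the transition, so that $\mathcal{E}(\rho) = \sigma$ in the deterministic case (trace-preserving) and $\mathcal{E}(\rho) = p\sigma$ for some $p\in(0,1]$ in the stochastic case (a single branch of a U(1)-covariant instrument). Writing the mode decompositions $\rho = \sum_k \rho^{(k)}$ and $\sigma = \sum_k \sigma^{(k)}$, linearity of $\mathcal{E}$ gives $\mathcal{E}(\rho) = \sum_k \mathcal{E}(\rho^{(k)})$. By the property established in the text (each $\mathcal{E}(\rho^{(k)})$ lies in $\mathcal{B}^{(k)}$, since U(1)-covariance forces it to transform as an operator of mode $k$), this sum is already a mode decomposition, so by the uniqueness of the mode decomposition we can read off
\begin{equation}
\mathcal{E}(\rho^{(k)}) = \sigma^{(k)} \quad (\text{deterministic case})
\end{equation}
or $\mathcal{E}(\rho^{(k)}) = p\sigma^{(k)}$ in the stochastic case.

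From here the proposition follows by contraposition: if $k \notin \mathrm{Modes}(\rho)$, then $\rho^{(k)} = 0$, so by linearity $\mathcal{E}(\rho^{(k)}) = 0$, which forces $\sigma^{(k)} = 0$ (since $p > 0$ in the stochastic case), hence $k \notin \mathrm{Modes}(\sigma)$. Equivalently, $\mathrm{Modes}(\sigma) \subseteq \mathrm{Modes}(\rho)$.

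I do not anticipate a genuine obstacle here, since all the technical work has been done in deriving the fact that U(1)-covariant superoperators map $\mathcal{B}^{(k)}$ into itself. The only subtlety worth explicit mention is the stochastic case, where one has to note that a single branch of a U(1)-covariant quantum instrument is itself a U(1)-covariant (trace-nonincreasing, completely positive) superoperator, so the mode-preservation argument of Eq.~(\ref{U(1)-inv}) applies without modification and the overall positive scalar $p$ is irrelevant for deciding which modes vanish.
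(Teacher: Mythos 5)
Your proposal is correct and follows essentially the same route as the paper: the text preceding the proposition establishes via Eq.~(\ref{U(1)-inv}) that a U(1)-covariant superoperator maps $\mathcal{B}^{(k)}$ into $\mathcal{B}^{(k)}$ (explicitly noting this needs neither complete positivity nor trace preservation), and the proposition is then read off from $\mathcal{E}(\rho^{(k)})=\sigma^{(k)}$ by exactly the contrapositive you describe. Your explicit handling of the stochastic branch as a trace-nonincreasing covariant map with $\mathcal{E}(\rho)=p\sigma$ is a welcome but minor elaboration of what the paper leaves implicit.
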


This proposition can  be understood as a refined version of the simple fact that if the initial state of a U(1)-covariant operation is invariant under a U(1)-subgroup then the final state will also be invariant under that U(1)-subgroup. To see this, first recall that under the action of the symmetry group, state $\rho$ transforms as
\begin{equation}\label{mode-decomp-rot}
U(\theta) \rho U^\dag(\theta)=\sum_k e^{i k\theta } \rho^{(k)}.
\end{equation}
Now suppose a state $\rho$ is invariant under the unitary $U(\frac{2\pi}{l})$ for some integer $l$ such that $U(\frac{2\pi}{l}) \rho U^{\dag}(\frac{2\pi}{l})=\rho$.  Using Eq.~(\ref{mode-decomp-rot}) and noting that the set $\{\rho^{(k)}\}$ are all linearly independent, we can conclude that for all modes $k$ which are not equal to an integer time $l$, $\rho^{(k)}$ must be equal to zero. On the other hand, if for all $k$'s which are not equal to some integer times $l$, $\rho^{(k)}=0$ then the state is invariant under  $U(\frac{2\pi}{l})$. So, we conclude that $\text{Modes}(\rho)$ uniquely specifies the symmetries of $\rho$, i.e., all U(1)-subgroups which leave  $\rho$ invariant.

\begin{example}
Consider a pure state  $|\psi\rangle=\sum_{n,\alpha} \psi_{n,\alpha}|n,\alpha\rangle$. Let $\Delta(\psi)$ be the difference between the highest and lowest $n$ for which  $\sum_{\alpha} |\psi_{n,\alpha}|^2$ is nonzero. Then clearly, $\Delta(\psi)=\max\{\text{Modes}(\psi)\}$. Now  proposition \ref{pro-Mode} implies that   if there exists a U(1)-covariant channel which transforms a pure state $|\psi\rangle$ to another pure  state $|\phi\rangle$ with a nonzero probability then $\text{Modes}(\phi)\subseteq \text{Modes}(\psi)$. This implies that $\max\{\text{Modes}(\phi)\} \le \max\{\text{Modes}(\psi)\}$ and therefore $\Delta(\phi)\le \Delta(\psi)$. This result has been obtained in Ref.~\cite{GS07} using a totally different argument\footnote{The proof in Ref.~\cite{GS07} proceeds by first finding a characterization of the Kraus operators of U(1)-covariant channels and then finding which pure state transformations are possible under quantum channels with this type of Kraus decomposition.}.  So the above proposition capture this result as a particular case.
\end{example}

We finish this section by providing a list of useful facts about modes of asymmetry:

\textbf{1) Modes of asymmetry of a joint system}:
Suppose $\rho_1$ and $\rho_2$ are two states with the mode decompositions
$$\rho_1=\sum_k \rho_1^{(k)}: \rho_{1}^{(k)}\in \mathcal{B}^{(k)},\ \ \ \text{and}\ \ \ \rho_2=\sum_l \rho_2^{(l)}: \rho_{2}^{(k)}\in \mathcal{B}^{(k)}$$
 We denote the mode decomposition of  $\rho_1\otimes \rho_2$ as $\rho_1\otimes \rho_2=\sum_j {(\rho_1\otimes \rho_2)}^{(j)}$. Then  we can easily see that
\begin{equation}
{(\rho_1\otimes \rho_2)}^{(j)}= \sum_k \rho_1^{(k)}\otimes \rho_2^{(j-k)}.
\end{equation}

\textbf{2) Mode decomposition for a weighted twirling operation}: Let $p(\theta)$ be an arbitrary probability density and
\begin{equation}
\sigma\equiv \int d\theta\  p(\theta)\  U(\theta) \rho U^\dag(\theta).
\end{equation}
Let
$$\rho=\sum_k \rho^{(k)}: \rho^{(k)}\in \mathcal{B}^{(k)},\ \ \ \text{and}\ \ \  \sigma=\sum_k \sigma^{(k)}: \sigma^{(k)}\in \mathcal{B}^{(k)}$$ 
be the mode decomposition of $\rho$ and $\sigma$.  Then
\begin{equation}\label{eq:modedecomp}
\sigma^{(k)}=p_{-k} \rho^{(k)},
\end{equation}
where $p_{k}=\int d\theta\  p(\theta) e^{-i\theta k}$
is the $k$th component of the Fourier transform of $p(\theta)$.

\subsection{ Quantifying the degree of U(1)-asymmetry in a given mode}\label{sec:U(1)-mon}

Asymmetry monotones are functions from states to real numbers which quantify the amount of symmetry-breaking of any given state, such that the value of these functions are non-increasing under symmetric dynamics. The intuition is that since symmetric dynamics cannot generate  asymmetry, any measure of asymmetry should be non-increasing  under this type of dynamics.  We take this as the defining property of asymmetry monotones. 
Introducing the notation $\rho\xrightarrow{\text{G-cov}}\sigma$ to denote the fact that there exists a $G$-covariant channel which transforms state $\rho$ to state $\sigma$, the definition is as follows~\cite{BRST06,GS07}. 
\begin{definition}
A function $F$ from states to real numbers is an asymmetry monotone  if $\rho\xrightarrow{\text{G-cov}}\sigma$ implies $F(\rho)\ge F(\sigma)$.  
\end{definition}
 Recently several examples of asymmetry monotones have been proposed  \cite{MarvianSpekkensNoether,Vac-Wise-Jac,GS07,Skot-Gour,Tol-Gour-Sand,GMS09}. 

In this section, we consider the problem of quantifying the amount of asymmetry \emph{in each mode}. In other words, we find asymmetry monotones which only measure the degree of asymmetry associated with some specific mode of asymmetry.

One family of such monotones can be constructed from the trace-norm. Recall that for an arbitrary operator $X$ the trace-norm of $X$ is $\|X \|\equiv \text{tr}(\sqrt{X^{\dag} X})$. This norm is  non-increasing under quantum channels (trace preserving, completely positive linear super-operators). So for any arbitrary quantum channel $\mathcal{E}$, we have $$\|\mathcal{E}(X)\| \leq \|X\|.$$

In the previous section we have seen that if $\mathcal{E}$ is a U(1)-covariant channel which maps state $\rho$ to $\sigma$ (with the mode decomposition  $\rho=\sum_k \rho^{(k)}: \rho^{(k)}\in \mathcal{B}^{(k)},\ \ \ \text{and}\ \ \  \sigma=\sum_k \sigma^{(k)}: \sigma^{(k)}\in \mathcal{B}^{(k)}$) then
$\forall k:\ \mathcal{E}(\rho^{(k)})=\sigma^{(k)}$. Now the monotonicity of the trace-norm implies
$$\forall k:\  \|\sigma^{(k)} \| \le \|\rho^{(k)} \|.$$
So we can think of  $\|\rho^{(k)} \|$  as a measure of the amount of asymmetry of the state $\rho$ in the mode $k$. 

Now suppose a given state $\rho$ can be transformed to another state $\sigma$ under a U(1)-covariant channel with probability $p$. If this is possible then there exists a U(1)-covariant channel which maps state $\rho$ to
$$\tilde{\sigma}\equiv p\ \sigma\otimes |\text{succ}\rangle\langle \text{succ} |+ (1-p) \frac{I}{d}\otimes |\text{fail}\rangle\langle \text{fail} |, $$
where $ |\text{succ}\rangle, |\text{fail}\rangle$ are two orthonormal states which are invariant under the symmetry transformations and   $\frac{I}{d}$ is the completely mixed state on the Hilbert space of $\sigma$ and is  clearly invariant under all symmetry transformations.  Now the fact that this channel is U(1)-covariant implies that for all $k$:
$|\tilde{\sigma}^{(k)} | \le |\rho^{(k)} |$.
However, because states $ |\text{succ}\rangle, |\text{fail}\rangle$  and   $\frac{I}{d}$ are invariant under the symmetry transformations this implies that for all $k\ne 0 $  it holds that
$$
\|{\tilde{\sigma}}^{(k)} \|=\left\|\frac{1}{2\pi}\int d\theta e^{-i\theta k} U(\theta) \tilde{\sigma} U^{\dag}(\theta)\right\|= p \|{\sigma}^{(k)} \| \le \|\rho^{(k)} \|.
$$
So to summarize, we have shown that
\begin{proposition}\label{prop-mode-mon-U(1)}
Suppose there is a U(1)-covariant channel which maps a state $\rho$ to state $\sigma$ with probability $p$. Then  it holds that
\begin{equation}\label{eq:prop4}
\forall k:\ \ p {\|\sigma^{(k)} \|} \le { \|{\rho}^{(k)} \| }.
\end{equation}
\end{proposition}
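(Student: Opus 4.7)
The plan is to lift the stochastic statement to a deterministic one and then invoke the mode-wise contractivity of the trace norm that is already implicit in the preceding discussion. Specifically, if a U(1)-covariant quantum operation $\mathcal{E}$ satisfies $\mathcal{E}(\rho)=p\sigma$, I would dilate it to a trace-preserving U(1)-covariant channel $\tilde{\mathcal{E}}$ whose output lives on the original system tensored with a two-dimensional classical flag register carrying the trivial U(1)-representation, namely
\begin{equation}
\tilde{\mathcal{E}}(\tau)=\mathcal{E}(\tau)\otimes|\text{succ}\rangle\langle\text{succ}|+\bigl(1-\text{tr}\,\mathcal{E}(\tau)\bigr)\frac{I}{d}\otimes|\text{fail}\rangle\langle\text{fail}|.
\end{equation}
By construction $\tilde{\mathcal{E}}(\rho)=\tilde\sigma$, with $\tilde\sigma$ as defined in the text just before the proposition.

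Next, since $\tilde{\mathcal{E}}$ is a deterministic U(1)-covariant channel, the mode-preservation property already established, namely $\tilde{\mathcal{E}}(\rho^{(k)})=\tilde\sigma^{(k)}$, combined with monotonicity of the trace norm under completely positive trace-preserving maps, yields $\|\tilde\sigma^{(k)}\|\le\|\rho^{(k)}\|$ for every $k$. To relate this to the desired inequality I would apply the tensor-product rule $(A\otimes B)^{(j)}=\sum_k A^{(k)}\otimes B^{(j-k)}$ from the list of useful facts. Because both flag projectors and $I/d$ are invariant under the symmetry, they consist purely of a mode-zero component, so for $k\ne 0$ only the first summand in $\tilde\sigma$ contributes and $\tilde\sigma^{(k)}=p\,\sigma^{(k)}\otimes|\text{succ}\rangle\langle\text{succ}|$. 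Multiplicativity of the trace norm over tensor products, together with $\||\text{succ}\rangle\langle\text{succ}|\|=1$, then gives $\|\tilde\sigma^{(k)}\|=p\,\|\sigma^{(k)}\|$ and hence the claimed bound. The $k=0$ case is immediate: $\rho^{(0)}$ and $\sigma^{(0)}$ are the twirled states, which are normalized and positive, so both trace norms equal one and the inequality reduces to $p\le 1$.

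The single point that requires care, and the main obstacle I would flag, is verifying that the dilation really is U(1)-covariant. Covariance of the first summand is inherited directly from $\mathcal{E}$; covariance of the failure branch relies on the fact that it is assembled from ingredients that are all invariant under the symmetry, namely the maximally mixed state on the system and a flag state carrying the trivial representation. Once this is checked, the remainder is routine bookkeeping with the mode decomposition and standard properties of the trace norm.
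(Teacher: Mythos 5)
Your proof follows essentially the same route as the paper: dilate the stochastic operation to a deterministic U(1)-covariant channel with a symmetry-invariant flag register, apply mode-preservation together with trace-norm monotonicity, and then extract the factor of $p$ from the $k\neq 0$ modes of $\tilde\sigma$ using the invariance of the flag states and of $I/d$. The only cosmetic point is that the failure branch should be written linearly, as $\left(\text{tr}(\tau)-\text{tr}\,\mathcal{E}(\tau)\right)\frac{I}{d}\otimes|\text{fail}\rangle\langle\text{fail}|$, so that $\tilde{\mathcal{E}}$ is a genuine (linear, completely positive) channel; this agrees with your expression on normalized states and does not affect the argument.
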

This proposition can be thought of as a quantitative version of proposition \ref{pro-Mode}.

Using a similar argument, one can prove the following more general proposition about transforming a state to an ensemble of states.
\begin{proposition}\label{prop-mode-mon-U(1)-ens}
Suppose there is a U(1)-covariant channel that maps the state $\rho$ to the ensemble consisting of states $\sigma_{i}$ with probabilities $p_{i}$,  where the value of $i$ becomes known at the end of the process. Then  it holds that
\begin{equation}\label{eq:prop5}
\forall k:\ \ \sum_{i} p_{i} {\|\sigma_{i}^{(k)} \|} \le { \|{\rho}^{(k)} \| }.
\end{equation}
\end{proposition}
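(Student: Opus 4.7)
The plan is to mirror the construction used for Proposition~\ref{prop-mode-mon-U(1)}, but to absorb the classical label $i$ into a quantum flag register on which the U(1)-representation acts trivially. If the original U(1)-covariant quantum operation produces the ensemble $\{p_i,\sigma_i\}$ together with the classical information $i$, then by adjoining an auxiliary system carrying a distinguished orthonormal set $\{|i\rangle\}$ of symmetry-invariant ``flag'' states, I can package the whole process as a single U(1)-covariant channel $\mathcal{E}$ mapping
\begin{equation}
\rho \;\longmapsto\; \tilde\sigma \equiv \sum_{i} p_{i}\,\sigma_{i}\otimes |i\rangle\langle i|.
\end{equation}
Covariance of $\mathcal{E}$ is inherited from the covariance of the underlying operation together with the triviality of the representation on the flag register.

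Next I would invoke Proposition~\ref{prop-mode-mon-U(1)} (applied deterministically with $p=1$) to $\mathcal{E}$ to obtain, for every $k$,
\begin{equation}
\|\tilde\sigma^{(k)}\| \;\le\; \|\rho^{(k)}\|.
\end{equation}
The remaining task is to rewrite $\|\tilde\sigma^{(k)}\|$ as the asserted sum. Because each $|i\rangle\langle i|$ is invariant under the representation, it sits in mode $0$ on the flag register, so the joint-mode formula from Fact~1 of the preceding list gives
\begin{equation}
\tilde\sigma^{(k)} \;=\; \sum_{i} p_{i}\,\sigma_{i}^{(k)}\otimes |i\rangle\langle i|.
\end{equation}

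The key observation is that the orthogonality of the flag states $\{|i\rangle\}$ makes $\tilde\sigma^{(k)}$ block-diagonal in the $i$ index. Since the trace norm is additive on orthogonal direct summands and since $\|X\otimes |i\rangle\langle i|\|=\|X\|$, this yields
\begin{equation}
\|\tilde\sigma^{(k)}\| \;=\; \sum_{i} p_{i}\,\|\sigma_{i}^{(k)}\|,
\end{equation}
and combining with the previous inequality gives the claim. The one step that requires a small amount of care is verifying that the orthogonality of the flags really does turn the trace norm into a sum of trace norms of the individual blocks; this is the only place where the argument goes beyond a direct restatement of the deterministic case, and I expect it to be the main (though minor) technical obstacle.
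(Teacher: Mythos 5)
Your proposal is correct and follows essentially the same route the paper intends: the paper proves Proposition~\ref{prop-mode-mon-U(1)} by flagging the outcomes with orthogonal invariant states and invoking trace-norm monotonicity, and explicitly states that Proposition~\ref{prop-mode-mon-U(1)-ens} follows by ``a similar argument,'' which is exactly your construction with the flag register $\sum_i p_i\,\sigma_i\otimes|i\rangle\langle i|$. The block-diagonality step you flag as the only technical point is indeed sound, since for a block-diagonal operator $X=\bigoplus_i X_i$ one has $\sqrt{X^\dag X}=\bigoplus_i\sqrt{X_i^\dag X_i}$ and hence $\|X\|=\sum_i\|X_i\|$.
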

This result subsumes proposition~\ref{prop-mode-mon-U(1)} as a special case because Eq.~\eqref{eq:prop5} implies that for any given value of $i$, $\forall k:\ \ p_{i} {\|\sigma_{i}^{(k)} \|} \le { \|{\rho}^{(k)} \| }.$

In the following, we calculate $\|\rho^{(k)} \|$ for arbitrary state $\rho$ in the case where the representation is multiplicity free, so that $U(\theta)=\sum_{n} e^{i\theta n} |n\rangle\langle n|$. (Note that all the previous results work for any representation of $U(1)$ no matter if the representation has multiplicity or not.)  Consider an arbitrary density operator $\rho=\sum_{n,m} \rho_{n,m} |n\rangle\langle m|$.  Then
$$\rho^{(k)}=\sum_{n} \rho_{n+k,n}|n+k\rangle\langle n|.$$
Therefore
\begin{equation}
\|\rho^{(k)} \|=\text{tr}\left(\sqrt{\rho^{(k)}\rho^{(k)}{ }^{\dag} } \right)=\sum_{n} |\rho_{n+k,k}|.
\end{equation}
In particular, if the state is pure, i.e., $\rho=|\psi\rangle\langle\psi |$ where  $|\psi\rangle=\sum_{n}\psi_{n}|n\rangle$, then
\begin{equation}
\|\rho^{(k)} \|=\sum_{n} |\psi_{n+k}||\psi_{n}|.
\end{equation}
Also, note that if the state is pure, then
\begin{equation}\label{eq:maxvalueasymmetry}
\|\rho^{(k)} \| \le 1,
\end{equation}
where the bound follows from the Cauchy-Schwartz  inequality.

It is also worth noting that the sum of this monotone over all modes, $\sum_k \|\rho^{(k)} \|= \sum_{k,n} |\rho_{n+k,k}|$, is also an asymmetry monotone.  It is equivalent to the asymmetry monotone presented in Eq.~(44) of Ref.~\cite{TolouiGour}, with the equivalence manifest when the expression is worked out for pure states, in Eq.~(49).

\begin{example}
Consider the sequence of states
\begin{equation}
\left\{ |\psi_{N}\rangle\equiv\frac{1}{\sqrt{N }} \sum_{n=1}^{N} |n\rangle : N\in \mathbb{N} \right\}.
\end{equation}

One can easily see that for any given state $|\phi\rangle$ there is a U(1)-covariant channel $\mathcal{E}_{N}$ which transforms $|\psi_{N}\rangle$ to a state arbitrary close to $|\phi\rangle$ in the limit of large $N$.   $\mathcal{E}_{N}$  is given by
\begin{equation}
\mathcal{E}_{N}(\rho)= \int d\theta\ U({\theta})|\phi\rangle\langle\psi_{N}| U^{\dag}(\theta)\ \rho \ U(\theta) |\psi_{N}\rangle\langle\phi|U^{\dag}(\theta).
\end{equation}
The sufficiency of $|\psi_N\rangle$ for forming approximations to any other state in the limit of large $N$ suggests that $|\psi_{N}\rangle$ has the maximal possible asymmetry in this limit.  This can be made precise as follows.
\begin{align*}
\left\||\psi_{N}\rangle\langle\psi_{N}|^{(k)} \right\|=\sum_{n} |\psi_{n+k}||\psi_{n}|&=1-\frac{|k|}{N}\ \ \ |k|\le N \\  &=0\ \ \ \ \ \ \  \text{otherwise.}
\end{align*}
So for all modes $k$ for which $|k|\ll N$, the  state  $|\psi_{N}\rangle$ has almost the maximal value of asymmetry for mode $k$ with respect to this monotone (namely, the value 1, as shown in Eq.~\eqref{eq:maxvalueasymmetry}). 
\end{example}

\subsection{Effect of misalignment of phase references} \label{sec:misalign}

To be able to measure a quantity with high precision one fundamental requirement is to have a precise reference frame, for instance, in the case of measuring a time interval, a high precision clock.  Any uncertainty in the configuration of the reference frame will limit the precision of the measurements that one can perform.

In this section, we consider the problem of misalignment of  phase references. So we assume the system under consideration carries a non-trivial representation of the group U(1) given by $e^{i\theta}\rightarrow U(\theta)$. The U(1) group may have different physical interpretations: It may describe a rotation around some axis or a phase shift  between states with different numbers of photons.



We assume there is an ideal perfect reference frame possessed by Alice and there is a noisy reference frame possessed by Bob. For example, Bob can be on a satellite  and so has access to a clock with low accuracy while Alice is on earth and has access to a high precision atomic clock.

 Assume they know that the phase shift relating Bob's reference frame to Alice's is $\theta$ with probability $p(\theta)$.
If $\theta$ were known, then a state which is described by $\rho$ relative to Alice's reference frame would be described by $U(\theta)\rho U^{\dag}(\theta)$ relative to Bob's reference frame.  Given that $\theta$ is only known to be distributed according to $p(\theta)$, it follows that the state is described relative to Bob's reference frame as 
\begin{equation}
\widetilde{\rho}\equiv \int d\theta\ p(\theta)\ U(\theta)\rho U^\dag(\theta),
\end{equation}
which is generally a mixed state.  This explains how the lack of a perfect reference frame can limit Bob's ability to get information about an unknown state $\rho$.


 By Eq.~\eqref{eq:modedecomp}, the state $\tilde{\rho}$ can be rewritten in terms of the mode decomposition of  $\rho$ as
\begin{equation}
\widetilde{\rho}=\sum_k p_{-k} \rho^{(k)},
\end{equation}
where $\rho^{(k)}$ is the $k$th component of the mode decomposition of $\rho$ and
\begin{equation}
p_{k}=\int d\theta\ p(\theta) e^{-i\theta k}
\end{equation}
is the $k$th component of the Fourier transform of $p(\theta)$.



So to understand how the uncertainty about the phase reference can affect Bob, it is helpful to consider the Fourier transform of the probability distribution $p(\theta)$. For example, if the Hilbert space of the system under consideration carries a finite number of irreps of U(1), then there will be a finite set of modes in which a state can have nonzero components. Then any quantity which quantifies the effect of misalignment described by the probability distribution $p(\theta)$ should only depend on the Fourier component of $p(\theta)$ in those particular modes. 
\subsubsection{Example}

Consider the representation of U(1) given by $e^{i\theta}\rightarrow U(\theta)$ where
$$U(\theta)=\sum_{n=n_{\text{min}}}^{n_{\text{max}}} e^{i\theta n} |n\rangle\langle n|$$ Assume the phase difference between Alice and Bob's local reference frames is $\theta$ with probability $p(\theta)$. Now, to quantify the effect of this misalignment on the description of an arbitrary state of this system we only need to consider the Fourier components of $p(\theta)$, denoted $p_{k}$, for $0 \le k \le n_{\text{max}}-n_{\text{min}}$. For example, suppose Bob wants  to estimate the phase $\phi$ of the state
\begin{equation}
\frac{1}{\sqrt{2}}\left(|n=0\rangle+e^{i\phi}|n=l\rangle\right).
\end{equation}
The information about this phase lives only in the modes $l,-l$. So the only property of the probability distribution $p(\theta)$ which is relevant for this estimation problem is its $l$th Fourier component.
In particular,  if this component
is zero, then Bob cannot get any information about the phase $\phi$. This can happen even if the two phase references are highly correlated. For example, if
 $$p(\theta)=\frac{1}{2}\delta(\theta)+\frac{1}{4}\delta(\theta-\pi/l)+\frac{1}{4}\delta(\theta+\pi/l),$$
 then $p(\theta)$ has no component in the mode $l$ and therefore Bob cannot get any information about the phase $\phi$. This simple observation shows that a measure of the alignment of two reference frames should be chosen based on the specific task to which the reference frames are being applied.

 In many practical situations we can assume that the probability distribution $p(\theta)$ is almost Gaussian. In particular, this is the case if Bob's knowledge of $\theta$ is obtained by averaging over many independent estimations. Let $\delta \theta$ be the standard deviation of $\theta$. Then, for Gaussian distributions we know that for all modes with $|k|\ll 1/\delta \theta$, $|p_{k}|\approx 1$ and therefore for these modes the distribution is effectively a delta function over $\theta$.  So in the case of the  above example where Bob is interested in estimating the phase $\phi$ of the state $1/\sqrt{2}\left(|n=0\rangle+e^{i\phi}|n=l\rangle\right)$, if $\delta\theta \ll 1/l$ then the imperfectness of Bob's local frame does not put any significant limitation on his performance.

\subsection{Alignment of phase references using U(1)-asymmetric states}\label{sec:align-q}

If Alice wishes to ensure that Bob's reference frame is aligned with her own, she can send him a \emph{quantum reference frame}, i.e., a quantum system prepared in an asymmetric state which carries information about her reference frame. For example, Alice can send Bob many copies of the state described by $(1/\sqrt{2})(|0\rangle+|1\rangle)$ relative to her reference frame  and also tell him the description of this state relative to her reference frame. Then Bob can use these quantum systems to obtain information about the relative phase between his reference frame and Alice's.

Assume Alice and Bob's prior knowledge about the phase difference between their local phase references is described by the probability distribution $p(\theta)$. Consider an arbitrary state described by  $\rho$ relative to Alice's reference frame.  As we have seen before, the lack of information regarding the relation of Bob's reference frame to Alice's prevents him from 
obtaining as much information about the unknown state $\rho$ as Alice could.
Now assume that Alice also sends Bob a quantum reference frame in the state $\tau$ and assume that the representation of phase shifts on this system is given by $e^{i\theta}\rightarrow V(\theta)$.

To find more precise information about $\rho$, Bob can first use the quantum reference frame $\tau$ to align his reference frame with Alice's and then perform some measurement on $\rho$. But, this procedure does  not describe the most general process that Bob can implement. The most general process is to perform a joint measurement on the state $\rho$ and the quantum reference frame $\tau$. In this case the information Bob can obtain about the unknown state $\rho$ is the information he can extract from the state
\begin{equation}
\int d\theta \ p(\theta) \left(V(\theta)\otimes U(\theta)\right)   \tau\otimes\rho  \left(V^{\dag}(\theta)\otimes U^{\dag}(\theta)\right).
\end{equation}
This state is equal to
\begin{equation}
 \sum_{k_{1},k_{2}} p_{-k_{1}} \tau^{(-k_{2})} \otimes \rho^{(k_{1}+k_{2})},
\end{equation}
where $\tau=\sum_{k}\tau^{(k)}$ is the mode decomposition of $\tau$ and $p_{k}$ is the $k$th Fourier component of $p(\theta)$. This shows precisely how the information Bob can obtain about
different modes of $\rho$ is determined by which modes are present in the state of the quantum reference frame and in the probability distribution $p(\theta)$ decribing the misalignment.

\subsubsection{Example}

Suppose Alice and Bob's local reference frames are initially  uncorrelated and therefore the prior distribution $p(\theta)$ is uniform.

Assume Bob wants to find information about the phase $\phi$ of the state
\begin{equation} \label{eq:k2state}
\frac{1}{\sqrt{2}}\left(|n=0\rangle+e^{i\phi}|n=2\rangle\right)
\end{equation}
Note that here the information is encoded in the modes $2$ and $-2$. So to enable Bob to encode this information Alice should send him a quantum reference frame which has modes  $2$ and $-2$. In particular, the reference frame should not be invariant under $U(\pi)$, because if $U(\pi)\tau U^{\dag}(\pi)=\tau$ then the state $\tau$ will not have any component in mode 2. But,  lack of  this  symmetry does not imply that the quantum reference frame has mode 2. For example, assume Alice sends Bob the quantum reference frame
\begin{equation}
|{\psi}\rangle= \frac{|{0}\rangle+|{1}\rangle}{\sqrt{2}}.
\end{equation}
This state is not invariant under any subgroup of \text{U(1)}. But it still does not have any component in the mode $k=2$ and so it does not help Bob to obtain information about the phase $\phi$ of the state \eqref{eq:k2state}.

\section{Representation of G-covariant channels in the irreducible tensor operator basis} \label{sec:G-cov-chan}

In this section  we first present a short review of irreducible tensor operators (See e.g.  \cite{Cornwell} and \cite{Sakuraii} for more information on this subject.).
Then,  we  introduce a new representation of G-covariant channels which basically describes a G-covariant channel by specifying how it acts on an \emph{irreducible tensor operator basis}.

\subsection{Review of irreducible tensor operators} \label{sec-rev}


Let $\mathcal{B}(\mathcal{H})$ be  the space of all bounded operators acting on the Hilbert space $\mathcal{H}$. For any unitary $V\in \mathcal{B}(\mathcal{H})$ the super-operator $V(\cdot)V^{\dag}$ preserves the Hilbert-Schmidt inner product on $\mathcal{B}(\mathcal{H})$, defined as $\langle A,B\rangle\equiv\text{tr}(A^\dag B)$  for arbitrary $A,B\in\mathcal{B}(\mathcal{H})$. So the super-operator $V(\cdot)V^{\dag}$  can be thought of as a unitary acting on the space $\mathcal{B}(\mathcal{H})$. 

Suppose $g\rightarrow U(g)$ is a projective unitary representation of a finite or compact Lie group $G$ on the Hilbert space $\mathcal{H}$.   Then $g\rightarrow \mathcal{U}_g$ where $\mathcal{U}_{g}[\cdot]\equiv U(g)(\cdot)U^{\dag}(g)$ is a unitary representation of $G$ on $\mathcal{B}(\mathcal{H})$. Note that this representation is always non-projective,
\begin{equation}
\forall g_{1},g_{2}\in G:\ \ \mathcal{U}_{g_{2}}\circ \mathcal{U}_{g_{1}}=\mathcal{U}_{g_{2}g_{1}}.
\end{equation}

Let $\{T_{m}^{(\mu,\alpha)}\}$ be a basis of $\mathcal{B}(\mathcal{H})$ in which the representation $g\rightarrow \mathcal{U}_g$ decomposes to the irreps of $G$  such that
\begin{equation}\label{eq-def-tens-irrep}
\mathcal{U}_{g}[T_{m}^{(\mu,\alpha)}]=\sum_{m'} u^{(\mu)}_{m'm}(g) \ T_{m'}^{(\mu,\alpha)},
\end{equation}
where
\begin{equation}
u^{(\mu)}_{m'm}(g) \equiv \langle \mu,m' | U^{(\mu)}(g) |\mu,m\rangle,
\end{equation}
are the matrix elements of $U^{(\mu)}(g)$, the unitary (non-projective) irreducible representation of $G$ labeled by $\mu$.
We choose this basis to be normalized such that
\begin{equation}
\text{tr}( T_{m}^{(\mu,\alpha)}{ }^{\dag} T_{m'}^{(\mu',\alpha')}) =\delta_{\mu,\mu'}\delta_{\alpha,\alpha'}\delta_{m,m'}
\end{equation}

 Here, $\alpha$ can be thought of as a multiplicity index. We call the basis $\{T_{m}^{(\mu,\alpha)}\}$ the \emph{irreducible tensor operator basis}. Also, the elements of the set $\{T_{m}^{(\mu,\alpha)}\}$ for a fixed $\mu$ and $\alpha$ are called \emph{components} of the irreducible tensor $T^{(\mu,\alpha)}$. We call the irrep label $\mu$ the \emph{rank} of the tensor operator $T_{m}^{(\mu,\alpha)}$. 
 


 Consider the   Hermitian conjugate of both sides of Eq.~(\ref{eq-def-tens-irrep}),
 \begin{equation}\label{comp-conj}
\left(\mathcal{U}_{g}[T_{m}^{(\mu,\alpha)}]\right)^{\dag}=\mathcal{U}_{g}[T_{m}^{(\mu,\alpha)}{ }^{\dag}]=\sum_{m'} \bar{u}^{(\mu)}_{m'm}(g) \ T_{m'}^{(\mu,\alpha)}{ }^{\dag}
\end{equation}
where $\bar{u}^{(\mu)}_{m'm}(g)$ denotes the complex conjugate of ${u}^{(\mu)}_{m'm}(g)$. This implies that  for any component $T_{m}^{(\mu,\alpha)}$ of a tensor operator of rank $\mu$, its Hermitian conjugate $T_{m}^{(\mu,\alpha)}{ }^{\dag}$ is in the subspace spanned by rank $\bar{\mu}$ irreducible tensor operators where  $\bar{\mu}$ denotes the irrep equivalent to the complex conjugate of  irrep $\mu$. In particular, in the case of SO(3) (or equivalently SU(2)) where the complex conjugate of any irrep $\mu$ is equivalent to the irrep $\mu$,  the Hermitian conjugate of a component of an irreducible tensor operator with rank $\mu$ is in the subspace spanned by the irreducible tensor operators with rank $\mu$.


To find an irreducible tensor operator basis in $\mathcal{B}(\mathcal{H})$ it is helpful to use the Liouville representation of operators in which an operator will be represented by a vector formed by stacking all the rows of its matrix representation (in some specific basis defining the representation) in a column vector \cite{Sher-Bart}.  This is equivalent to the Choi isomorphism between operators on $\mathcal{H}$ and vectors on $\mathcal{H}\otimes \mathcal{H}$.

Then the Liouville (or Choi) representation of the super-operator $\mathcal{U}_{g}$ will be $U(g)\otimes \bar{U}(g)$ where  $\bar{U}(g)$ denotes the complex conjugate of $U(g)$ in the basis that defines the representation. So the ranks of all tensor operators which show up in the space  $\mathcal{B}(\mathcal{H})$ corresponds to the set of all irreps of $G$ which show up in the representation $g\rightarrow U(g)\otimes \bar{U}(g)$.  Furthermore, to decompose a particular operator in $\mathcal{B}(\mathcal{H})$ to irreducible tensor operators we can write the Liuoville representation of that operator and find out how it decomposes into the irreducible basis of $G$ defined by the representation $g\rightarrow U(g)\otimes \bar{U}(g)$.


One can construct higher ranks of irreducible tensor operators by decomposing the product of irreducible tensor operators with lower ranks. Let $\{T_{m}^{(\mu_{1})}\}$ be the components of a rank $\mu_{1}$ tensor operator and   $\{R_{m}^{(\mu_{2})}\}$  be the components of a rank $\mu_{2}$ tensor operator. Finally, let $C^{\mu_{3},m_{3},\alpha}_{\mu_{1},m_{1};\mu_{2},m_{2}}$ be the Clebsch-Gordon coefficients (see e.g. \cite{Sakuraii}).
Then the set of operators $\{S_{m}^{(\mu_{3},\alpha)}\}$ defined by
\begin{equation}\label{Clebsh-Gordon-tens}
S_{m}^{(\mu_3,{\alpha})}= \sum_{m_{1},m_{2},\mu_{1},\mu_{2} } C^{\mu_{3},m_{3},\alpha}_{\mu_{1},m_{1};\mu_{2},m_{2}}
   T_{m_{1}}^{(\mu_{1})} R_{m_{2}}^{(\mu_{2})}
\end{equation}
are components of a rank $\mu_{3}$ irreducible tensor operator.

Finally, we present the Wigner-Eckart theorem which gives a useful tool to find the irreducible tensor operator basis (See e.g. \cite{Cornwell}):

\begin{theorem}(\textbf{Wigner-Eckart})
Let $G$ be a finite group or a compact Lie group. Let $T_{m_{1}}^{(\mu_{1},\alpha)}$ be an element of a tensor operator. Then
\begin{align}
\langle\mu_{3},m_{3}&| T_{m_{1}}^{(\mu_{1},\alpha)}|\mu_{2},m_{2}\rangle \nonumber \\
&=\sum_{\beta} \left(C^{\mu_{3},m_{3},\beta}_{\mu_{1},m_{1};\mu_{2},m_{2}}\right)^{\ast} \left(\mu_{3}|T^{(\mu_{1},\alpha)}|\mu_{2}\right)_{\beta}
\end{align}
where $\beta$ is a multiplicity index that counts the number of copies of the $\mu_{3}$ irrep that can be formed by composing irreps $\mu_{1}$ and $\mu_{2}$, $C^{\mu_{3},m_{3},\beta}_{\mu_{1},m_{1};\mu_{2},m_{2}}$  are the Clebsch-Gordon coefficients for this composition and  $\left(\mu_{3}|T^{(\mu_{1},\alpha)}|\mu_{2}\right)_{\beta}$ is a number which is independent of $m_{1},m_{2}$, and $m_{3}$.
\end{theorem}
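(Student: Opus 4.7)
My plan is to prove the Wigner-Eckart theorem by observing that the vector $T_{m_1}^{(\mu_1,\alpha)}|\mu_2,m_2\rangle$ transforms under $G$ exactly as the uncoupled product vector $|\mu_1,m_1\rangle\otimes|\mu_2,m_2\rangle$ does in the representation $U^{(\mu_1)}\otimes U^{(\mu_2)}$, then to decompose via Clebsch--Gordan, and finally to extract the reduced matrix element by Schur's lemma.

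First, I would compute the action of $U(g)$ on $T_{m_1}^{(\mu_1,\alpha)}|\mu_2,m_2\rangle$ by inserting $U^{\dagger}(g)U(g)$ between the two factors and using both Eq.~(\ref{eq-def-tens-irrep}) and the irreducible transformation law of $|\mu_2,m_2\rangle$. This yields
\begin{equation}
U(g)\,T_{m_1}^{(\mu_1,\alpha)}|\mu_2,m_2\rangle=\sum_{m_1',m_2'}u^{(\mu_1)}_{m_1'm_1}(g)\,u^{(\mu_2)}_{m_2'm_2}(g)\,T_{m_1'}^{(\mu_1,\alpha)}|\mu_2,m_2'\rangle,
\end{equation}
which is precisely the transformation law of an uncoupled basis vector in $\mu_1\otimes\mu_2$.

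Next I would apply the Clebsch--Gordan decomposition. With the convention fixed by Eq.~(\ref{Clebsh-Gordon-tens}), the coupled basis vectors are $\sum_{m_1,m_2}C^{\mu_3,m_3,\beta}_{\mu_1,m_1;\mu_2,m_2}|\mu_1,m_1\rangle|\mu_2,m_2\rangle$, so by unitarity of the CG matrix the inverse expansion of an uncoupled vector in coupled vectors carries a complex conjugate. Since $T_{m_1}^{(\mu_1,\alpha)}|\mu_2,m_2\rangle$ transforms identically to an uncoupled basis vector, the analogous expansion must hold inside $\mathcal{H}$:
\begin{equation}
T_{m_1}^{(\mu_1,\alpha)}|\mu_2,m_2\rangle=\sum_{\mu_3,m_3,\beta}\bigl(C^{\mu_3,m_3,\beta}_{\mu_1,m_1;\mu_2,m_2}\bigr)^{\ast}\,|\Phi^{(\mu_3,\beta)}_{m_3}\rangle,
\end{equation}
where each family $\{|\Phi^{(\mu_3,\beta)}_{m_3}\rangle\}_{m_3}$ transforms irreducibly as $\mu_3$ under the group action on $\mathcal{H}$, $\beta$ indexes the multiplicity of $\mu_3$ inside $\mu_1\otimes\mu_2$, and the vectors $|\Phi^{(\mu_3,\beta)}_{m_3}\rangle$ depend on $\mu_1,\mu_2,\alpha$ but not on $m_1$ or $m_2$.

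Finally, I would take the inner product with $\langle\mu_3,m_3|$. Because both $\{|\Phi^{(\mu_3,\beta)}_{m_3}\rangle\}_{m_3}$ and $\{|\mu_3,m_3\rangle\}_{m_3}$ carry the $\mu_3$ irrep of $G$, Schur's lemma forces the overlap to take the form $\langle\mu_3,m_3|\Phi^{(\mu_3',\beta)}_{m_3'}\rangle=\delta_{\mu_3,\mu_3'}\delta_{m_3,m_3'}\,(\mu_3|T^{(\mu_1,\alpha)}|\mu_2)_\beta$, with a scalar depending only on $\mu_1,\mu_2,\mu_3,\alpha,\beta$. Substituting yields the theorem. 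The main obstacle is the bookkeeping around multiplicities: one must keep the tensor-operator label $\alpha$ distinct from the Clebsch--Gordan label $\beta$, and, more delicately, invoke Schur's lemma with enough care to conclude that the reduced matrix element is truly $m$-independent despite the non-trivial $\beta$-sum. A secondary subtlety is that if $\mu_3$ appears with multiplicity in $\mathcal{H}$ itself, an additional multiplicity index on $|\mu_3,m_3\rangle$ is implicit and produces a further sum, which the stated form suppresses.
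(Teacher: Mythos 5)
The paper does not actually prove this theorem---it states it as a standard result and cites Cornwell---so there is no in-paper argument to compare against; your proof is the standard textbook derivation (transform $T_{m_1}^{(\mu_1,\alpha)}|\mu_2,m_2\rangle$ like an uncoupled vector of $\mu_1\otimes\mu_2$, re-couple via Clebsch--Gordan, extract the reduced matrix element by Schur's lemma) and it is correct, including your handling of the distinct multiplicity labels $\alpha$ and $\beta$ and the conjugate on the CG coefficients forced by the convention of Eq.~(\ref{Clebsh-Gordon-tens}). The only caveats are the two you already flag (a suppressed multiplicity index if $\mu_3$ recurs in $\mathcal{H}$) plus one you do not: since the paper allows $g\rightarrow U(g)$ to be projective, $U(g)|\mu_2,m_2\rangle$ carries a cocycle that $\mathcal{U}_g[T_{m_1}^{(\mu_1,\alpha)}]$ does not, so $\mu_2$ and $\mu_3$ must be read as projective irreps with the same cocycle for the coupling and the Schur step to apply.
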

Note that the left hand side of the equality can be interpreted as the matrix elements of the unitary acting on $\mathcal{B}(\mathcal{H})$ which transforms the orthonormal basis $\{|\mu_{3},m_{3}\rangle\langle\mu_{2},m_{2} |\}$ to the orthonormal basis $\{T_{m_{1}}^{(\mu_{1},\alpha)}\}$.

\subsubsection*{Example: SO(3)}\label{Ex-tensors-SU(2)}
In the case of SO(3), the complex conjugate of any representation is unitarily equivalent  to the original representation:   Suppose $\bar{U}(g)$ is the complex conjugate of $U(g)$ in the basis in which $L_{z}$ is diagonal and all the matrix elements of $L_{x}$ are real numbers. Then
\begin{equation}
\forall g\in \text{SO(3)}\ \ \bar{U}(g)=e^{-i\pi L_{y}} U(g) e^{i\pi L_{y}}.
\end{equation}
Let $g\rightarrow U(g)$ be an arbitrary projective unitary  representation of SO(3) on $\mathcal{H}$. 
The above discussion implies that one way to find the ranks of tensor operators and their multiplicities for the basis $\{T_{m}^{(\mu,\alpha)}\}$ which spans $\mathcal{B}(\mathcal{H})$ is to find  the irreps and their multiplicities  which show up in the representation 
$$g\rightarrow U(g)\otimes \bar{U}(g).$$

An important special case, which we use later, is when $\mathcal{H}$ carries  a spin-$j$ irrep  of SO(3). Then the above observation implies that  $\mathcal{B}(\mathcal{H})$ is spanned by 
$$\{T_{m}^{(\mu)}:  (\mu,m): 0\le \mu\le 2j, -\mu\le m\le \mu \}$$ 
and there is no multiplicity.  In other words, the maximum rank of the irreducible tensor operators on this space is $2j$. 

Note that the operators $\{T_{m}^{(\mu)} \}$ are uniquely defined only when we fix the basis we use to  represent the matrix elements $u_{m'm}^{(\mu)}(g)$  in Eq.~(\ref{eq-def-tens-irrep}). In the case of SO(3), we always use the basis in which the matrix representation of $L_{z}$ is diagonal and the matrix elements of $L_{x}$ are all real numbers. 

Then, it follows that in this basis
\begin{align*}
\mu=0&:\ \ &&T^{(\mu=0)}=c_{0}\mathbb{I}\\
\mu=1&:\ \ &&T^{(\mu=1)}_{m=0}=c_1 L_{z},\ \  T^{(\mu=1)}_{m=\pm 1}=\pm c_1\frac{ 1}{\sqrt{2}}L_{\pm}
\end{align*}
where $\mathbb{I}$ is the identity operator on $\mathcal{H}$, $L_{\pm}\equiv L_{x} \pm i L_{y}$ and $c_0,c_{1}$ are normalization factors \cite{Sakuraii}.

One can generate all higher rank tensor operators on this space, by considering the products of  $T^{(\mu=1)}_{m_{1}}T^{(\mu=1)}_{m_{2}}\cdots$  and decomposing them to irreducible tensor operators using Eq.~(\ref{Clebsh-Gordon-tens}). Following this method one can show that the rank-2 irreducible tensor operators are
\begin{align*}
\mu=2: \ \ \ \  &T^{(\mu=2)}_{m=\pm 2}=c_2\frac{1}{{2}}L_{\pm}^{2},\ \ \\ &T^{(\mu=2)}_{m=\pm 1}=c_2\frac{\pm 1}{{2}}(L_{\pm} L_{z}+ L_{z} L_{\pm}),\ \ \\  &T^{(\mu=2)}_{m=0}=c_2\frac{1}{\sqrt{6}}\left(3L_{z}^{2}-L^{2}\right)
 \end{align*}
where $L^{2}=L_{x}^{2}+L_{y}^{2}+L_{z}^{2}$ is the total angular momentum and $c_{2}$ is a normalization factor (see e.g. \cite{Sakuraii}).






\subsection{A representation of G-covariant super-operators }\label{sec:new-rep-g-cov}

In this section, we introduce a representation of G-covariant super-operators which will be useful in the rest of this paper. 

Recall that a super-operator $\mathcal{E}$ is G-covariant if it commutes with the super-operator representation of the group G,
\begin{equation}
\forall g\in G: \ \  \mathcal{E}\circ \mathcal{U}_g=  \mathcal{U}_g \circ \mathcal{E}
\end{equation}
Then Schur's lemma implies that $\mathcal{E}$ should be block diagonal in any basis of the operator space $\mathcal{B}(\mathcal{H})$ which decomposes the representation $g\rightarrow\mathcal{U}_g$ into the irreps of $G$.  But, this is exactly the definition of an irreducible tensor operator basis and therefore $G$-covariant channels are block diagonal in the irreducible tensor operator bases.  The following lemma states this result.

\begin{lemma}\label{G-cov-Sup-tensor}
Let $g\rightarrow U_{\text{in}}(g)$ and $g\rightarrow U_{\text{out}}(g)$
be projective unitary representations of the group $G$ on the Hilbert spaces $\mathcal{H}_{\text{in}}$ and  $\mathcal{H}_{\text{out}}$.  Let $\{T_{m}^{(\mu,\alpha)}\}$ and $\{S_{m}^{(\mu,\beta)}\}$ be the corresponding normalized irreducible tensor operator bases for $\mathcal{B}(\mathcal{H}_{in})$ and $\mathcal{B}(\mathcal{H}_{out})$. Consider a linear superoperator $\mathcal{E}:\mathcal{B}(\mathcal{H}_{in})\rightarrow \mathcal{B}(\mathcal{H}_{out})$  which is G-covariant, i.e., $\forall g\in G:\ \mathcal{E}\left(U_{\text{in}}(g)\cdot U^{\dag}_{\text{in}}(g)\right)=U_{\text{out}}(g)\mathcal{E}\left(\cdot \right)U^{\dag}_{\text{out}}(g)$. Then \begin{equation}
\mathcal{E}(X)= \sum_{\mu,m,\alpha}  \text{tr}\left( T_{m}^{(\mu,\alpha)}{ }^{\dag} X \right)\left[ \sum_{\beta}c_{\beta\alpha}^{(\mu)}\  S_{m}^{(\mu,\beta)} \right]
\end{equation}
where $c_{\beta\alpha}^{(\mu)}\equiv \text{tr}\left( S_{m}^{(\mu,\beta)}{}^{\dag} \mathcal{E}(T_{m}^{(\mu,\alpha)}) \right)$ (which turns out to be independent of $m$).
\end{lemma}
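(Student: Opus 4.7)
The plan is to derive the stated block-diagonal form by applying Schur's lemma to the induced representations $g\to\mathcal{U}_g^{\text{in}}$ and $g\to\mathcal{U}_g^{\text{out}}$ on the operator spaces $\mathcal{B}(\mathcal{H}_{\text{in}})$ and $\mathcal{B}(\mathcal{H}_{\text{out}})$. The $G$-covariance assumption
\[
\mathcal{E}\circ\mathcal{U}_g^{\text{in}}=\mathcal{U}_g^{\text{out}}\circ\mathcal{E},\qquad \forall g\in G,
\]
is precisely the statement that $\mathcal{E}$ is an intertwiner between these two representations, so Schur's lemma will dictate its structure relative to any basis that simultaneously reduces both representations into irreps.

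First I would act $\mathcal{E}$ on an arbitrary basis element $T_{m}^{(\mu,\alpha)}$ and write the result in the output irreducible tensor operator basis as
\[
\mathcal{E}(T_{m}^{(\mu,\alpha)})=\sum_{\nu,n,\beta} M_{(\nu,n,\beta),(\mu,m,\alpha)}\,S_{n}^{(\nu,\beta)}.
\]
Applying $\mathcal{U}_g^{\text{out}}$ to both sides, using the transformation rule (\ref{eq-def-tens-irrep}) for the $S$-basis, and comparing with $\mathcal{E}(\mathcal{U}_g^{\text{in}}[T_m^{(\mu,\alpha)}])$ expanded via the $T$-basis transformation rule, yields the identity
\[
\sum_{n'} u_{nn'}^{(\nu)}(g)\,M_{(\nu,n',\beta),(\mu,m,\alpha)}=\sum_{m'} M_{(\nu,n,\beta),(\mu,m',\alpha)}\,u_{m'm}^{(\mu)}(g)
\]
for all $g\in G$. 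By the Schur orthogonality relations for the matrix coefficients $u^{(\mu)}_{m'm}(g)$, this forces $M_{(\nu,n,\beta),(\mu,m,\alpha)}=\delta_{\nu\mu}\,\delta_{nm}\,c^{(\mu)}_{\beta\alpha}$ for some scalars $c^{(\mu)}_{\beta\alpha}$ independent of $m$. Hence
\[
\mathcal{E}(T_{m}^{(\mu,\alpha)})=\sum_{\beta}c^{(\mu)}_{\beta\alpha}\,S_{m}^{(\mu,\beta)},
\]
and taking the Hilbert--Schmidt inner product with $S_m^{(\mu,\beta)}$, using the normalization of the output basis, identifies $c^{(\mu)}_{\beta\alpha}=\text{tr}(S_m^{(\mu,\beta)\,\dagger}\mathcal{E}(T_m^{(\mu,\alpha)}))$ and confirms its independence of $m$.

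Finally I would expand an arbitrary operator $X\in\mathcal{B}(\mathcal{H}_{\text{in}})$ in the normalized input irreducible tensor basis,
\[
X=\sum_{\mu,m,\alpha}\text{tr}\bigl(T_{m}^{(\mu,\alpha)\,\dagger}X\bigr)\,T_{m}^{(\mu,\alpha)},
\]
and use linearity of $\mathcal{E}$ together with the formula just derived to recover the expression claimed in the lemma. The main subtlety I anticipate is the Schur-style argument that $m=n$ and $\mu=\nu$ on the nose: one must be careful that the $T$- and $S$-bases are aligned so that the \emph{same} matrix coefficients $u^{(\mu)}_{m'm}(g)$ govern the transformation rule of isomorphic irreps in both input and output spaces, so that orthogonality of distinct matrix coefficients yields the diagonal structure in $(\mu,m)$ while leaving the multiplicity label $\beta,\alpha$ as the only free indices. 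Everything else reduces to bookkeeping with the normalization $\text{tr}(T_m^{(\mu,\alpha)\,\dagger}T_{m'}^{(\mu',\alpha')})=\delta_{\mu\mu'}\delta_{\alpha\alpha'}\delta_{mm'}$ and its analogue for the $S$-basis.
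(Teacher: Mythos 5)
Your proposal is correct and follows essentially the same route as the paper's own proof: expand $\mathcal{E}(T_m^{(\mu,\alpha)})$ in the output tensor basis, use $G$-covariance and the transformation rules to obtain an intertwining relation on the coefficient matrix, invoke Schur orthogonality of the matrix coefficients $u^{(\mu)}_{m'm}(g)$ to force the $\delta_{\nu\mu}\delta_{nm}$ structure with $c^{(\mu)}_{\beta\alpha}$ independent of $m$, and finish by expanding an arbitrary $X$ in the normalized input basis and using linearity. The alignment subtlety you flag (both bases transforming under the same matrix coefficients) is indeed the one implicit assumption in the paper's setup, via Eq.~(\ref{eq-def-tens-irrep}).
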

The proof is straightforward and is presented in appendix \ref{app:proofs}.
This representation simply means that under G-covariant super-operators, an  input operator in the mode $(\mu,m)$ can only be mapped to an output operator in the same mode (for a general linear super-operator there is no such constraint on the output).

Lemma \ref{G-cov-Sup-tensor} implies that any  linear G-covariant super-operator can be uniquely specified by specifying the set of matrices $\{c^{(\mu)}\}$ for the set of all  $\mu$ which show up as ranks of irreducible tensor operators in both input and output spaces.  In the next chapter we use this representation of G-covariant super-operators to study the asymmetry properties of quantum states. It can also have  applications in other fields such as tomography of G-covariant channels or equivalently tomography of the symmetrized version of a channel (see \cite{Emerson1,Emerson2}).

\begin{example}\label{Ex-Rot-cov}
Consider a rotationally covariant  super-operator from $\mathcal{B}(\mathcal{H}_{j_{1}})$ to $\mathcal{B}(\mathcal{H}_{j_{2}})$ where  the input and output spaces $\mathcal{H}_{j_{1}}$ and $\mathcal{H}_{j_{2}}$ are spin-$j_{1}$ and spin-$j_{2}$  irreps of SO(3) respectively. 

Then, from section \ref{Ex-tensors-SU(2)} we know that the tensor operators for both input and output spaces do not have multiplicity and  their rank varies between $\mu_1^{\text{min}}=0$ and $\mu_1^{\text{max}} = 2j_{1}$ in the input space and between $\mu_2^{\text{min}}=0$ and $\mu_2^{\text{max}}=2j_{2}$ in the output space.  So lemma  \ref{G-cov-Sup-tensor}  implies that an arbitrary rotationally covariant super-operator  from $\mathcal{B}(\mathcal{H}_{j_{1}})$ to $\mathcal{B}(\mathcal{H}_{j_{2}})$  can be described by coefficients $c^{(\mu)}$ where $\mu$ varies between $\mu^{(\text{min})} = 0$  and $\mu^{(\text{max})} =\min\{\mu_1^{\text{max}},\mu_2^{\text{max}}\}.$

If this super-operator is a channel, i.e., it is trace-preserving and completely positive, then we can put more constraints on the coefficients $c^{(\mu)}$. First, we use the fact that any completely positive super-operator maps Hermitian operators to Hermitian operators. This implies that all the coefficients $\{c^{(\mu)}\}$ should be real.  On the other hand, the fact that a quantum channel is trace-preserving fixes one coefficient, namely, $c^{(\mu=0)}$. So any SO(3) covariant channel on these spaces can be described by 
$$2\min\{j_{1},j_{2}\}$$
real numbers. The special case of this result for $j_{1}=j_{2}$ has been observed previously in  \cite{Sher-Bart}. 

 In particular, if the input space is a spin-1/2 system, the channel can be described by just one real parameter. Note that in the absence of symmetry the number of parameters one needs to specify the channel scales as $j_{1}^{2}j_{2}^{2}$.

Let $\{T_{m}^{(\mu)}\}$ and $\{S_{m}^{(\mu)}\}$ be the irreducible tensor operator basis for  $\mathcal{B}(\mathcal{H}_{j_{1}})$ and $\mathcal{B}(\mathcal{H}_{j_{2}})$ and $\{c^{(\mu)}:\ \mu=1,\dots, 2\min\{j_{1},j_{2}\} \}$ be the coefficients describing the rotationally invariant super-operator $\mathcal{E}$ from  $\mathcal{B}(\mathcal{H}_{j_{1}})$ to $\mathcal{B}(\mathcal{H}_{j_{2}})$.  It follows from lemma \ref{G-cov-Sup-tensor} that if $\mathcal{E}(\rho)=\sigma$   then
\begin{equation}
\text{tr}\left(\sigma S_{m}^{(\mu)}{ }^{\dag}\right)=c^{(\mu)}\text{tr}\left(\rho  T_{m}^{(\mu)}{ }^{\dag}  \right).
\end{equation}
Finally,  recall that  the trace norm  is non-increasing under positive and trace-preserving super-operators. This implies that if the super-operator  $\mathcal{E}$ is positive and trace-preserving then $\forall (\mu,m): \|\mathcal{E}(T_{m}^{(\mu)})\|\le  \|T_{m}^{(\mu)} \|$ which by virtue of  lemma \ref{G-cov-Sup-tensor}  implies 
\begin{equation}
\forall (\mu,m): \left|c^{(\mu)} \right| \le \frac{  \|T_{m}^{(\mu)} \|}{  \|S_{m}^{(\mu)} \|}.
\end{equation}
In particular, if the input and output spaces are the same, i.e., $j_{1}=j_{2}$, then 
\begin{equation}
\forall (\mu,m): \left|c^{(\mu)}\right|\le 1.
\end{equation}
\end{example}

Consider the case where the output space of the G-covariant super-operator $\mathcal{E}_{1}$ matches the input space of $\mathcal{E}_{2}$ such that the composition $\mathcal{E}_{2}\circ\mathcal{E}_{1}$ is well-defined. If $\mathcal{E}_{1}$  is described by the set of matrices $\{c^{(\mu)}\}$  and $\mathcal{E}_{2}$  is described by the set of matrices $\{d^{(\mu)}\}$ then $\mathcal{E}_{2}\circ\mathcal{E}_{1}$ is described by the set of matrices  $\{d^{(\mu)}c^{(\mu)}\}$.
This implies that in cases such as the example above,  where all tensor operators are multiplicity free  and $c^{\mu}$ and $d^{\mu}$ are scalars, then all $G$-covariant super-operators commute with each other.
Furthermore, this observation implies that a master equation which describes a G-covariant dynamics can be decomposed to a set of uncoupled  differential equations for each of these matrices.


\section{Modes of asymmetry for an arbitrary group} \label{Arbit-group}

With the framework of irreducible tensor operators in hand, we can now generalize the notion of modes of asymmetry, which we have thus far only defined for the case of U(1), to the case of arbitrary finite groups and compact Lie groups. 


Consider the subspace spanned by $\{T_{m}^{(\mu,\alpha)}:\forall \alpha\}$ for a fixed $m$ and $\mu$. Then lemma \ref{G-cov-Sup-tensor} implies that any G-covariant super-operator maps an operator in this subspace to another operator in this subspace. This suggests the following definition of modes of asymmetry

\begin{definition}\label{Def:Modes of asymmetry}
The $(\mu,m)$ mode component of an operator $X$, denoted $X^{(\mu,m)}$, is defined by
\begin{equation}\label{Def-Mod}
X^{(\mu,m)}\equiv \sum_{\alpha}  {T_{m}^{(\mu,\alpha)}} \text{tr}\left( T_{m}^{(\mu,\alpha)}{ }^{\dag} X \right).
\end{equation}
We call the decomposition $X=\sum_{\mu,m} X^{(\mu,m)}$ the \emph{mode decomposition} of operator $X$.
\end{definition}
Note that in the above definition we have assumed that the basis $\{T_{m}^{(\mu,\alpha)}\}$ is an orthonormal basis, i.e.  $$\text{tr}\left(T_{m}^{(\mu,\alpha)}T_{m'}^{(\mu',\alpha')}{ }^\dag \right)=\delta_{m,m'}\delta_{\mu,\mu'}\delta_{\alpha,\alpha'}.$$

Lemma \ref{G-cov-Sup-tensor} has a simple interpretation in terms of mode decompositions of operators: a G-covariant super-operator $\mathcal{E}$ maps an operator in a particular mode of asymmetry to an operator in the same mode of asymmetry, i.e., if  $Y=\mathcal{E}(X)$ then
\begin{equation}
\forall \mu,m:\ \ Y^{(\mu,m)}=\mathcal{E}(X^{(\mu,m)}).
\end{equation}
So we can think of different pairs $(\mu,m)$ as different independent modes which cannot be mixed under a G-covariant linear super-operator. In particular, if an input $X$ has no component in a particular mode then the corresponding output $Y$ also cannot have any component in that mode.

The above definition is independent of the choice of the tensor operators basis, $\{ T_{m}^{(\mu,\alpha)}\}$. In the following lemma, we present another way to define modes of asymmetry which is explicitly basis independent.

Let $\{g\rightarrow u^{(\mu)}(g)\}$ be the (non-projective) set of all unitary irreps of a finite or compact Lie group $G$ and  $\{u_{mm'}^{(\mu)}(g)\}$ be the matrix elements of these unitary irreps. Recall that these matrix elements satisfy the orthonormality relations,
\begin{equation} \label{mode-orthog-unit}
\int dg\ \bar{u}_{m_{1}m_{2}}^{(\mu)}(g) u_{m_{3}m_{4}}^{(\nu)}(g)=\frac{1}{d_{\mu}}\delta_{\mu,\nu}\delta_{m_{1},m_{3}} \delta_{m_{2},m_{4}},
\end{equation}
where in the case of finite groups the integral is replaced by the summation over all group elements. 
Then one can easily see that the following lemma holds.

\begin{lemma}\label{Mod-decomp-lem}
Let $X=\sum_{\mu,m}X^{(\mu,m)}$ be the mode decomposition of operator $X$. Then
\begin{equation}
X^{(\mu,m)}=d_{\mu}\int dg \  \bar{u}_{mm}^{(\mu)}(g) \ \mathcal{U}_{g}(X)
\end{equation}
where $d_\mu$ is the dimension of the irrep $\mu$, $dg$ is the uniform measure over the group $G$ and the bar represents complex conjugation.
\end{lemma}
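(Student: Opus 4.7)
My plan is to apply the Schur orthogonality relation (\ref{mode-orthog-unit}) for matrix coefficients of the irreps directly to the mode decomposition in Definition \ref{Def:Modes of asymmetry}. Since both sides of the claimed identity are linear in $X$, it suffices to understand how the right-hand side acts on each basis element $T_{n}^{(\nu,\alpha)}$ and then sum.

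Concretely, I would first expand $X = \sum_{\nu,n,\alpha} c^{(\nu,\alpha)}_{n}\, T_{n}^{(\nu,\alpha)}$ with $c^{(\nu,\alpha)}_{n} = \text{tr}(T_{n}^{(\nu,\alpha)\dagger} X)$, so that by Definition \ref{Def:Modes of asymmetry} one has $X^{(\mu,m)} = \sum_{\alpha} c^{(\mu,\alpha)}_{m}\, T_{m}^{(\mu,\alpha)}$. Next I would propagate the group action through this expansion using the transformation rule (\ref{eq-def-tens-irrep}):
\begin{equation*}
\mathcal{U}_g(X) = \sum_{\nu,n,\alpha} c^{(\nu,\alpha)}_{n} \sum_{n'} u^{(\nu)}_{n' n}(g)\, T_{n'}^{(\nu,\alpha)}.
\end{equation*}
Multiplying by $d_\mu\,\bar{u}^{(\mu)}_{mm}(g)$, integrating over $G$, and invoking (\ref{mode-orthog-unit}) term by term forces $\nu=\mu$, $n=m$, and $n'=m$; the explicit factor of $d_\mu$ cancels the $1/d_\mu$ produced by orthogonality, so the surviving sum is exactly $\sum_{\alpha} c^{(\mu,\alpha)}_{m}\, T_{m}^{(\mu,\alpha)} = X^{(\mu,m)}$.

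There is no substantive obstacle: the argument is essentially a one-line application of Schur orthogonality. The only point deserving care is notational consistency, namely that the diagonal matrix element $\bar{u}^{(\mu)}_{mm}(g)$ in the statement really does isolate the single tensor-index $m$. This is guaranteed by the convention in (\ref{eq-def-tens-irrep}) that $T_{m}^{(\mu,\alpha)}$ transforms through the $m$-th column of $u^{(\mu)}(g)$, so that both Kronecker deltas produced by (\ref{mode-orthog-unit}) pin the summation to the desired $(\mu,m)$ sector.
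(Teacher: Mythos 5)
Your proof is correct and follows essentially the same route as the paper: both apply the transformation rule \eqref{eq-def-tens-irrep} together with the Schur orthogonality relation \eqref{mode-orthog-unit} to project onto the $(\mu,m)$ sector, with the paper working directly on the basis elements $T_{m}^{(\mu,\alpha)}$ and you spelling out the expansion of $X$ explicitly before doing so. The bookkeeping of the deltas and the cancellation of the $d_\mu$ factor is handled correctly.
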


\begin{proof}
We start with Eq.~\eqref{eq-def-tens-irrep},
$$\mathcal{U}_{g}(T_{m}^{(\mu,\alpha)})=\sum_{m'}   u_{m'm}^{(\mu)}(g) T_{m'}^{(\mu,\alpha)}. $$
We multiply both sides by $\bar{u}_{nn}^{(\nu)}(g)$ and integrate over $G$. Now we use the orthonormality  relations, Eq.~(\ref{mode-orthog-unit}).
This implies that
\begin{equation}
\forall \alpha:\ \ \ d_{\mu}\int dg\   \bar{u}_{nn}^{(\mu)}(g)\ \mathcal{U}_{g}(T_{m}^{(\mu,\alpha)})= \delta_{m,n} \delta_{\mu,\nu} T_{m}^{(\mu,\alpha)}.
\end{equation}
The lemma follows from this equality together with the definition of mode decompositions given by Eq.~(\ref{Def-Mod}).
\end{proof}

This lemma gives us an alternative method to find the mode decomposition of a given operator.

It is worth emphasizing an important difference between the mode decomposition for the case of non-Abelian groups and the mode decomposition for the case of Abelian groups such as U(1). This difference concerns the result of symmetry transformations on operators in different modes.
Since
\begin{equation}
\mathcal{U}_{g}[T_{m}^{(\mu,\alpha)}]=\sum_{m'}  u_{m'm}^{(\mu)}(g) T_{m'}^{(\mu,\alpha)}
\end{equation}
it follows that modes $(\mu,m)$ and $(\mu',m')$ for which $\mu\neq\mu'$ do not mix together under the action of the group, but modes for which $\mu=\mu'$ and $m\neq m'$  can mix together under this action. This can happen because in general a symmetry transformation $\mathcal{U}_{g}$ is not a G-covariant operation, unless the group $G$ is Abelian. In the Abelian case,  modes are just specified by an irrep label $\mu$.

\subsection{Quantifying the degree of asymmetry in a given mode} \label{sec-modes-monotones}

As we saw in the specific case of  the group U(1), one can quantify, for a given state, the amount of asymmetry in each mode in terms of the trace-norm of the component of the state in that particular mode. By a similar argument, it follows that for each mode $(\mu,m)$ the function defined by 
\begin{equation}
F_{\mu,m}(X)\equiv \|X^{(\mu,m)}\| 
\end{equation}
is an asymmetry monotone.

The constraint on state to ensemble transformations, described in proposition \ref{prop-mode-mon-U(1)-ens} for the U(1) case, generalizes as follows:

\begin{proposition}\label{prop:statetoensembleG}
Suppose there is a G-covariant channel which maps the state $\rho$ to the ensemble containing states $\sigma_{i}$ with probabilities $p_{i}$  where the value of $i$ becomes known at the end of the process. Then 
\begin{equation}
\forall (\mu,m):\ \ \sum_{i} p_{i}\ \| \sigma_{i}^{(\mu,m)} \| \le  \|{\rho}^{(\mu,m)} \|.
\end{equation}
\end{proposition}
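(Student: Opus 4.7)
The plan is to mirror the proof of Proposition~\ref{prop-mode-mon-U(1)-ens} but using the general-group tools from Lemma~\ref{G-cov-Sup-tensor} and Lemma~\ref{Mod-decomp-lem} in place of their U(1) analogues. The essential idea is to encode the outcome label $i$ in a flag register $C$ that carries the \emph{trivial} representation of $G$, thereby reducing the ensemble statement to a single-channel statement to which trace-norm monotonicity applies directly.

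First I would represent the G-covariant instrument as a single G-covariant channel $\mathcal{E}$ from the system to the composite (system $\otimes\,C$), defined by
\[
\mathcal{E}(\rho) \;=\; \sum_i p_i\,\sigma_i \otimes |i\rangle\langle i|_C,
\]
where $\{|i\rangle_C\}$ is an orthonormal set in $C$ and $C$ is assigned the trivial representation $U_C(g)=I_C$. G-covariance of each branch of the instrument, combined with the triviality of the action on $C$, implies that $\mathcal{E}$ is G-covariant on the composite space.

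Next, Lemma~\ref{G-cov-Sup-tensor} applied to $\mathcal{E}$ yields $\mathcal{E}(\rho^{(\mu,m)}) = [\mathcal{E}(\rho)]^{(\mu,m)}$. Because $C$ carries only the trivial irrep, the mode-$(\mu,m)$ projection (as computed via Lemma~\ref{Mod-decomp-lem}) commutes with tensoring by any operator on $C$, so
\[
[\mathcal{E}(\rho)]^{(\mu,m)} \;=\; \sum_i p_i\,\sigma_i^{(\mu,m)} \otimes |i\rangle\langle i|_C.
\]
Since the flag projectors are mutually orthogonal, the trace norm of this block-diagonal operator equals $\sum_i p_i\,\|\sigma_i^{(\mu,m)}\|$. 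Invoking contractivity of the trace norm under completely positive trace-preserving maps, applied to $\rho^{(\mu,m)}$, then gives
\[
\sum_i p_i\,\|\sigma_i^{(\mu,m)}\| \;=\; \|\mathcal{E}(\rho^{(\mu,m)})\| \;\le\; \|\rho^{(\mu,m)}\|,
\]
which is the claimed inequality.

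The main obstacle is justifying that the mode decomposition factors through the trivial flag register in the way claimed. Although intuitively clear because $G$ acts trivially on $C$, this needs to be checked from the basis-independent formula $X^{(\mu,m)} = d_\mu \int dg\,\bar{u}^{(\mu)}_{mm}(g)\,\mathcal{U}_g(X)$ by observing that $\mathcal{U}_g(\sigma_i\otimes|i\rangle\langle i|_C) = \mathcal{U}^{\mathrm{sys}}_g(\sigma_i)\otimes|i\rangle\langle i|_C$, so the integral pulls inside the tensor product. A secondary subtlety is that $\rho^{(\mu,m)}$ is generally non-Hermitian, but trace-norm contractivity under CP trace-preserving maps still holds for arbitrary operators via a Stinespring dilation (isometric embedding preserves singular values and partial trace is contractive in trace norm).
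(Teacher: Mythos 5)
Your proposal is correct and follows essentially the same route the paper takes: the paper proves the U(1) analogue (proposition \ref{prop-mode-mon-U(1)}) by appending a symmetry-invariant flag register and invoking trace-norm contractivity, and states that propositions \ref{prop-mode-mon-U(1)-ens} and \ref{prop:statetoensembleG} follow ``by a similar argument,'' which is precisely the argument you carry out for general $G$ using Lemmas \ref{G-cov-Sup-tensor} and \ref{Mod-decomp-lem}. Your attention to the two subtleties (the mode projection commuting with the trivially-represented flag, and contractivity for non-Hermitian $\rho^{(\mu,m)}$) is sound and, if anything, more careful than the paper's own treatment.
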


Using definition \ref{Def:Modes of asymmetry} we can rewrite this bound as
\begin{align*}
\forall (\mu,m):\  &\sum_{i} p_{i} \sum_{\alpha}  {\text{tr}\left(\sqrt{T_{m}^{(\mu,\alpha)}T_{m}^{(\mu,\alpha)}{ }^\dag} \right)  } \left|\text{tr}\left( T_{m}^{(\mu,\alpha)} \sigma \right)\right|\ \  \\  &\le\ \sum_{\alpha}  {\text{tr}\left(\sqrt{T_{m}^{(\mu,\alpha)}T_{m}^{(\mu,\alpha)}{ }^\dag} \right)  } \left|\text{tr}\left( T_{m}^{(\mu,\alpha)} \rho \right)\right|.
\end{align*}

As a simple corollary of proposition~\ref{prop:statetoensembleG}, if a nondeterministic G-covariant operation maps state $\rho$ to state $\sigma$ with probability $p$, then 
\begin{equation}
\forall (\mu,m):\ \ p\\ \| \sigma^{(\mu,m)} \| \le  \|{\rho}^{(\mu,m)} \|.
\end{equation}

\subsection{Example: spin-$j$ system}
Consider the case of a spin-$j$ representation of SO(3).  Then, as we have seen before, all the modes are multiplicity-free and so
\begin{align*}
\forall (\mu,m):\   F_{\mu,m}(\rho)&\equiv \|\rho^{(\mu,m)} \|\\ &={\text{tr}\left(\sqrt{T_{m}^{(\mu)}T_{m}^{(\mu)}{ }^\dag} \right) } \left|\text{tr}\left( T_{m}^{(\mu)} \rho \right)\right|.
\end{align*}
Now, if a state $\rho$ of the spin-$j$ system evolves under a rotationally invariant dynamics to a state $\sigma_{i}$ of the spin-$j$ system with probability $p_{i}$, then for all modes $ (\mu,m)$ it holds that
\begin{equation}\label{Ex-spinj-modes}
\sum_{i} p_{i}\left|\text{tr}\left(T_{m}^{(\mu)}\sigma_{i} \right)\right| \le \left|\text{tr}\left(T_{m}^{(\mu)} \rho \right)\right|.
\end{equation}
So, for example, in the case of mode $(\mu=1,m=0)$, where $T_{0}^{(1)}=cL_{z}$ for some constant $c$ we find
$$\sum_{i} p_{i}\left|\text{tr}\left(L_{z}\sigma_{i} \right)\right| \le \left|\text{tr}\left(L_{z} \rho \right)\right|.$$
Note that here the direction $\hat{z}$ is chosen arbitrarily and so for any direction $\hat{n}$ it holds that
\begin{equation}\label{bound-spinj-ang}
\sum_{i} p_{i}\left|\text{tr}\left(L_{\hat{n}}\sigma_{i} \right)\right| \le \left|\text{tr}\left(L_{\hat{n}} \rho \right)\right|.
\end{equation}
This result is very intuitive. If a spin-$j$ undergoes a deterministic or stochastic  rotationally-covariant dynamics, the average of the \emph{absolute value} of the expectation value of angular momentum can not increase.  Note that the sign of this expectation value can change, i.e., a state whose angular momentum is negative in the $\hat{z}$ direction can evolve to a state    whose angular momentum is positive in this direction.

In this example we have assumed that the initial and final spaces are both spin-$j$ systems. 
On the other hand, one can easily show that the absolute value of angular momentum can increase if  the final space is allowed to have a higher spin. In the following we will find a bound which applies to the cases where the initial and final spaces have different spins. Before this, we present another consequence of Eq.~(\ref{Ex-spinj-modes}) for the case where both input and output spaces are spin-$j$. 

Although in a rotationally-covariant dynamics of a spin-$j$ system the absolute value of angular momentum  cannot increase, nevertheless the expectation value of higher  powers  of angular momentum can increase. However, using  Eq.(\ref{Ex-spinj-modes}),  we can find non-increasing functions which involve the expectation value of higher powers of angular momentum. For instance, consider the case of $(\mu=2,m=0)$. Then, as we have seen in section \ref{Ex-tensors-SU(2)}, for a spin-$j$ representation of  SO(3),
$$T^{(\mu=2,m=0)}=c (3L_{z}^{2}-L^{2}),$$
where $c$ is a normalization factor. Then  Eq.~(\ref{Ex-spinj-modes}) implies that 
\begin{equation}
p\left| \text{tr}(\sigma L_{z}^{2})-\frac{j(j+1)}{3} \right|\le\left| \text{tr}(\rho L_{z}^{2})-\frac{j(j+1)}{3} \right|,
\end{equation}
where we have used the fact that for all spin-$j$ systems the expectation value of $L^{2}$ is $j(j+1)$. Note that the $\hat{z}$ direction is chosen arbitrarily.  So, for arbitrary direction $\hat{n}$,  $\left| \text{tr}( \rho L_{n}^{2})-{j(j+1)}/{3} \right|$ is non-increasing under rotationally covariant dynamics,  even though $\text{tr}(\rho L_{\hat{n}}^{2})$ can increase.

Now we find a bound on the change of the absolute value of the expectation value of angular momentum when  the input and output spaces have different  spins.
 
To achieve this goal, we calculate $\|\rho^{(\mu,m)} \|$ for the mode $(\mu=1,m=0)$ in the case of a spin-$j$ system. Using the fact that $T^{(\mu=1,m=0)}=c L_{z}$ for some constant $c$,  we find
\begin{align}
F_{\mu=1,m=0}(\rho)\equiv\|\rho^{(1,0)} \|&=\frac{\text{tr}\left(\sqrt{L_{z}^{2}} \right) }{\text{tr}\left(L_{z}^{2}\right)} \left|\text{tr}\left( L_{z} \rho \right)\right|\nonumber \\&=\frac{3\text{tr}\left(\sqrt{L_{z}^{2}} \right) }{\text{tr}\left(L^{2}\right)} \left|\text{tr}\left( L_{z} \rho \right)\right|,
\end{align}
where we have used the normalization condition, i.e. $|c|^{2}\text{tr}\left(L_{z}^{2}\right)=1$. 
One can easily see that $\text{tr}\left({L^{2}} \right)=j(j+1)(2j+1)$ and
\begin{equation}
\text{tr}\left(\sqrt{L_{z}^{2}} \right)=
\left\{
\begin{tabular}[c]{l}
$j(j+1)\ \ \ \ \text{integer}\ j$ \\
$(j+1/2)^{2}\ \ \text{half integer}\ j$
\end{tabular}
\right.
\end{equation}


 So
 \begin{equation}
\|\rho^{(1,0)} \|=
\left\{
\begin{tabular}[c]{l}
$\tfrac{3}{2} \frac{ \left|\text{tr}\left( L_{z} \rho \right)\right|}{j+1/2}\ \ \ \ \  \ \ \ \text{integer}\ j$\\
$\tfrac{3}{2} \frac{ \left|\text{tr}\left( L_{z} \rho \right)\right| (j+1/2)}{j(j+1)}\ \ \text{half integer}\ j$
\end{tabular}
\right.
\end{equation}
So $\|\rho^{(1,0)} \|$ is less than or equal to 3/2 and at the limit of $j$ going to infinity it  tends to $3/2$.

Now we can find an analogue of the bound of Eq.~(\ref{bound-spinj-ang}) for the case where the input and output systems have spins $j$ and $j'$ respectively. If, for example, both $j$ and $j'$ are integer then
\begin{equation}
p \frac{ \left|\text{tr}\left( L_{\hat{n}} \sigma \right)\right|}{j'+1/2}\le \frac{ \left|\text{tr}\left( L_{\hat{n}} \rho \right)\right|}{j+1/2}.
\end{equation}
In proposition \ref{pro-max-succ} below, we show that the quantity $\frac{ \left|\text{tr}\left( L_{z} \rho \right)\right|}{j+1/2}$ admits of an operational interpretation: it quantifies the ability of the state $\rho$ to act as a quantum reference frame for the task of distinguishing, on a spin-1/2 system, the two eigenstates of $L_{z}$, $|j=1/2,m=-1/2\rangle$ and $|j=1/2,m=1/2\rangle$.

\section{Simulating quantum operations by quantum reference frames} \label{Sec:Simulating}

Consider the situation where we are restricted to those Hamiltonians which all have a particular symmetry. Then it is still possible to \emph{simulate} a dynamics which breaks this symmetry if we have access to a state which breaks the symmetry, i.e. a source of asymmetry.   As we have mentioned earlier, this symmetry-breaking state is called a quantum reference frame. By coupling this quantum reference frame to a system via a symmetric dynamics, we can effectively generate an asymmetric dynamics or measurement on this system. In this section we are interested in finding the set of  asymmetric dynamics and measurements which can be simulated using a given quantum reference frame.

As a simple example, consider the case where we are restricted to the rotationally-invariant Hamiltonians. Then by coupling a quantum system to a large magnet with magnetic field in the $\hat{z}$ direction via a rotationally invariant Hamiltonian, we can effectively simulate a rotation around the $\hat{z}$ axis on that quantum system (note that a rotation is not a rotationally invariant operation and so cannot be performed without having access to a system which breaks the rotational symmetry). In this case, we can model the magnet by a spin-$j$ system in a large coherent state polarized in the $\hat{z}$ direction, i.e., in the maximum weight eigenstate of $L_{\hat{z}}$,  $|j,m=j\rangle_{\hat{z}}$. Then, by coupling the quantum system to this quantum reference frame one can realize a quantum channel on the  system such that this channel  at the limit where $j$ goes to infinity approaches a perfect (unitary) rotation. In fact, one can show that using a  spin-$j$ in the coherent state in $\hat{z}$ direction, at the limit of large $j$ any arbitrary dynamics which is invariant under rotation around $\hat{z}$ can be simulated on the system~\cite{Mar-Man}.\footnote{Furthermore, it is shown in Ref.~\cite{Mar-Man} that if in addition one has access to a similar quantum reference frame in a coherent state polarized in the $\hat{x}$ direction then one can simulate arbitrary dynamics on the system.}

Note that by having access to this quantum reference frame  we still cannot  simulate a rotation around $\hat{x}$ or any other dynamics which is not invariant under rotation around $\hat{z}$.  More generally, for a given quantum reference frame, only those time evolutions and measurements can be simulated which have all the symmetries of the quantum reference frame. In this section we generalize this simple observation by finding a relation between the modes of asymmetry  of the quantum reference frame and the modes of asymmetry of a time evolution or measurement that can be simulated using it.

\subsection{Modes of asymmetry of quantum operations} \label{sec:Quant-Mode}
The notion of modes of asymmetry naturally extends to the super-operators. Let $g\rightarrow U(g)$ be the projective unitary representation of $G$ on the Hilbert space $\mathcal{H}$. Also, let $\mathcal{U}_{g}(\cdot)\equiv U(g)(\cdot) U^{\dag}(g)$. Then $g\rightarrow \mathcal{U}_{g}$ is a (non-projective) unitary  representation of $G$ on $\mathcal{B}(\mathcal{H})$. Similarly, we can define a representation of $G$ on the space of all linear super-operators: Consider the linear space of all super-operators from  $\mathcal{B}(\mathcal{H}_{\text{in}})$ to  $\mathcal{B}(\mathcal{H}_{\text{out}})$. Then a natural representation of $G$ on this space is given by the following map
\begin{equation}
\forall g\in G:\ \ \ \  \mathfrak{U}_{g}[\mathcal{E}]\equiv\mathcal{U}^{\text{out}}_{g}\circ\mathcal{E}\circ  \mathcal{U}^{\text{in}}_{g^{-1}}
\end{equation}
for arbitrary $ \mathcal{E}: \mathcal{B}(\mathcal{H}_{\text{in}})\rightarrow \mathcal{B}(\mathcal{H}_{\text{out}})$, where  $g\rightarrow \mathcal{U}^{\text{in}}_{g}$ and $g\rightarrow \mathcal{U}^{\text{out}}_{g}$ are the representations of the symmetry on $\mathcal{B}(\mathcal{H}_{\text{in}})$ and  $\mathcal{B}(\mathcal{H}_{\text{out}})$ respectively. This representation has a natural physical interpretation: suppose the representation $g\rightarrow U(g)$ describes a change of reference frame, such that a state which is described by $|\psi\rangle$ in the old reference frame is described by $U(g)|\psi\rangle$ in the new reference frame. Then, an observable or a density operator which is described by an operator $A$ relative to the old reference frame will be described by $\mathcal{U}_{g}[A]$ relative to the new reference frame.  Similarly, a super-operator which is described by $\mathcal{E}$ relative to the old reference frame will be described by $\mathfrak{U}_{g}[\mathcal{E}]$ relative to the new reference frame.

Now, following the same logic we used to define modes of asymmetry of operators based on  the representation $g\rightarrow \mathcal{U}_{g}$ of group $G$ on the space of operators, we can define the notion of modes of asymmetry of super-operators  based on  the representation $g\rightarrow \mathfrak{U}_{g}$ of group $G$ on the space of super-operators. One way to do this is by defining the analogues of the irreducible tensor operators for super-operators. Alternatively, we can define modes of asymmetry for super-operators using the analogue of lemma \ref{Mod-decomp-lem}:
\begin{definition}\label{def-mode-super}
The mode $(\mu,m)$ of the super-operator $\mathcal{E}$, denoted by $\mathcal{E}^{(\mu,m)}$ is defined by
\begin{equation}
\mathcal{E}^{(\mu,m)}=d_{\mu}\int dg\  \bar{u}_{mm}^{(\mu)}(g)\  \mathfrak{U}_{g}\left[\mathcal{E}\right]
\end{equation}
where $d_{\mu}$ is the dimension of the irrep $\mu$.
We call the decomposition $\mathcal{E}=\sum_{\mu,m} \mathcal{E}^{(\mu,m)}$ the \emph{mode decomposition} of the super-operator $\mathcal{E}$ and $\mathcal{E}^{(\mu,m)}$ the $(\mu,m)$ modal component of $\mathcal{E}$.
\end{definition}

Note that this definition implies that a \emph{G-covariant} super-operator only has nonzero component in the mode which corresponds to the trivial representation of the group, denoted by $\mu=0$.

Let $g\rightarrow U(g)$ be the representation of the symmetry group on the Hilbert space $\mathcal{H}$. As we have seen before, we can find the set of all modes of asymmetry that an operator $X\in\mathcal{B}(\mathcal{H})$ can possibly have, by decomposing the representation $g\rightarrow U(g)\otimes \bar{U}(g)$. Similarly, we can find all modes of asymmetry that a super-operator $ \mathcal{E}: \mathcal{B}(\mathcal{H}_{\text{in}})\rightarrow \mathcal{B}(\mathcal{H}_{\text{out}})$ can possibly have, by decomposing the representation
$$g\rightarrow  U_{\text{out}}(g)\otimes \bar{U}_{\text{out}}(g)\otimes U_{\text{in}}(g)\otimes \bar{U}_{\text{in}}(g)$$
to irreps of $G$.  Here, $g\rightarrow U_{\text{in}}(g)$ and $g\rightarrow U_{\text{out}}(g)$ are the representations of the symmetry group on $\mathcal{H}_{\text{in}}$ and $\mathcal{H}_{\text{out}}$ respectively.
\begin{example}\label{Examp-Modes-Meas}
Consider the group of rotations in $\mathbb{R}^{3}$, i.e. G=SO(3), and assume the input Hilbert space carries a $j_{1}$ irrep and the output Hilbert space carries a $j_{2}$ irrep. Then any super-operator from $\mathcal{B}(\mathcal{H}_{\text{in}})$ to $\mathcal{B}(\mathcal{H}_{\text{out}})$ can have modes $(\mu,m)$ with $\mu\le 2(j_{1}+j_{2})$. In particular, if the input and output spaces of a super-operator are both spin-1/2 systems (i.e. $j_{1}=j_{2}=1/2$), then the super-operator can only have modes $(\mu=0),(\mu=1,m=-1,0,1)$ and $(\mu=2,m=-2,-1,0,1,2)$. On the other hand, if the input space $\mathcal{H}_{\text{in}}$ is a spin-1/2 irrep of SO(3) and the output space is invariant under rotation (i.e. $j_{1}=1/2$ and $j_{2}=0$) then  the super-operator can only have modes $(\mu=0),(\mu=1,m=-1,0,1)$. These latter kind of super-operators can describe, for example,  measurements on a spin-1/2 system where the post-measurement state is always rotationally invariant.
\end{example}
In this example, we  found all modes of asymmetry that a measurement performed on a spin-1/2 system can possibly have. In the following we study the notion of modes of asymmetry of measurements more closely.

\subsubsection{Modes of asymmetry of measurements}
In the  study of the modes of asymmetry of measurements,  we focus on the aspect of a measurement that is relevant for making inferences about the input, that is, its informative aspect, and neglect the aspect that is relevant for making predictions about future measurements on the system, that is, its state-updating aspect. 

Let $\{M_{\lambda}\}$ be the POVM describing an arbitrary measurement. Define the channel
\begin{equation}\label{Meas-chann}
\mathcal{M}(X)\equiv\sum_{\lambda} \text{tr}(X M_{\lambda})|\lambda\rangle\langle\lambda |
\end{equation}
where the set $\{|\lambda\rangle\}$ consists of orthogonal and G-invariant states.
Then, any measurement whose statistics is described by the POVM $\{M_{\lambda}\}$  can be realized by  first applying the channel $\mathcal{M}(\cdot)$ to the state and then measuring the output system in the $\{|\lambda\rangle\}$ basis. But, this latter measurement is  G-covariant and so  one does not need a quantum reference frame to realize it. So to find the asymmetry resources required to implement a measurement of the POVM $\{M_{\lambda}\}$, it suffices to find the asymmetry resources required to implement $\mathcal{M}$.

One can easily show that
\begin{lemma}\label{lem-modes-meas}
$\mathcal{M}^{(\mu,m)}$, the $(\mu,m)$ modal component of $\mathcal{M}$, is equal to
\begin{equation}
\mathcal{M}^{(\mu,m)}(X)=\sum_{\lambda} \text{tr}(X M^{(\mu,m)}_{\lambda}) |\lambda\rangle\langle\lambda |
\end{equation}
where $M^{(\mu,m)}_{\lambda}$ is the $(\mu,m)$ modal component of the operator $M_{\lambda}$.
\end{lemma}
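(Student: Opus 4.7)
The plan is to apply Definition~\ref{def-mode-super} directly to $\mathcal{M}$, and then move the group averaging past the outer structure of $\mathcal{M}$ using two facts: the cyclicity of the trace, and the assumption that the outcome states $|\lambda\rangle$ are $G$-invariant (so that the output representation $\mathcal{U}^{\text{out}}_g$ acts trivially on $|\lambda\rangle\langle\lambda|$). This should reduce the question to exactly the definition of the modal component of the POVM elements $M_\lambda$ given by Lemma~\ref{Mod-decomp-lem}.

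Concretely, first I would compute $\mathfrak{U}_g[\mathcal{M}](X) = \mathcal{U}^{\text{out}}_g\bigl(\mathcal{M}(\mathcal{U}^{\text{in}}_{g^{-1}}(X))\bigr)$. Expanding the inner $\mathcal{M}$ using \eqref{Meas-chann}, the $\lambda$-th term becomes $\text{tr}\bigl(\mathcal{U}^{\text{in}}_{g^{-1}}(X)\,M_\lambda\bigr)\,\mathcal{U}^{\text{out}}_g(|\lambda\rangle\langle\lambda|)$. Pulling the unitaries off $X$ via cyclicity of the trace gives $\text{tr}\bigl(X\,\mathcal{U}^{\text{in}}_g(M_\lambda)\bigr)$, and the $G$-invariance of $|\lambda\rangle$ yields $\mathcal{U}^{\text{out}}_g(|\lambda\rangle\langle\lambda|) = |\lambda\rangle\langle\lambda|$. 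Hence
\begin{equation}
\mathfrak{U}_g[\mathcal{M}](X) = \sum_\lambda \text{tr}\bigl(X\,\mathcal{U}^{\text{in}}_g(M_\lambda)\bigr)\,|\lambda\rangle\langle\lambda|.
\end{equation}

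Next, I would substitute this into Definition~\ref{def-mode-super} and use linearity of the trace and of the sum over $\lambda$ to pass the integral inside:
\begin{align}
\mathcal{M}^{(\mu,m)}(X) &= d_\mu \int dg\,\bar{u}^{(\mu)}_{mm}(g)\,\mathfrak{U}_g[\mathcal{M}](X) \nonumber\\
&= \sum_\lambda \text{tr}\!\left(X\,\Bigl[d_\mu\int dg\,\bar{u}^{(\mu)}_{mm}(g)\,\mathcal{U}^{\text{in}}_g(M_\lambda)\Bigr]\right)|\lambda\rangle\langle\lambda|.
\end{align}
By Lemma~\ref{Mod-decomp-lem}, the bracketed operator is exactly $M^{(\mu,m)}_\lambda$, which gives the claimed formula.

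The manipulations are entirely routine; the only non-trivial ingredient is the $G$-invariance of the $\{|\lambda\rangle\}$, which is what makes the output space effectively ``scalar'' with respect to the group action and therefore allows the $(\mu,m)$ label to be carried entirely by the POVM elements. The swap of $\sum_\lambda$ with the group integral is justified because the sum is finite (or, in the continuous-outcome case, can be handled by the usual Fubini argument once one views $\{M_\lambda\}$ as a measure-valued object, but this subtlety is not needed for the statement as written).
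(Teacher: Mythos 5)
Your proof is correct and follows essentially the same route as the paper's: expand $\mathfrak{U}_g[\mathcal{M}]$, use the triviality of the output representation on the $G$-invariant states $|\lambda\rangle$, move the unitaries onto $M_\lambda$ by cyclicity of the trace, and identify the resulting group average as $M^{(\mu,m)}_\lambda$ via Lemma~\ref{Mod-decomp-lem}. No substantive differences.
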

\begin{proof}
First note that by definition \ref{def-mode-super}, 
\begin{equation}
\mathcal{M}^{(\mu,m)}(X)\equiv d_{\mu}\ \int dg\  \bar{u}_{mm}^{(\mu)}(g)\ \mathcal{U}^{\text{out}}_{g}\circ\mathcal{M}\circ  \mathcal{U}^{\text{in}}_{g^{-1}}(X).
\end{equation}
Here, the representation of symmetry on the output system is trivial and so
\begin{align*}
\mathcal{M}^{(\mu,m)}(X)&=d_{\mu}\int dg\  \bar{u}_{mm}^{(\mu)}(g)\ \mathcal{U}^{\text{out}}_{g}\circ\mathcal{M}\circ  \mathcal{U}^{\text{in}}_{g^{-1}}(X)\\ &=d_{\mu}\sum_{\lambda} |\lambda\rangle\langle\lambda | \int dg\  \bar{u}_{mm}^{(\mu)}(g) \text{tr}( \mathcal{U}^{\text{in}}_{g^{-1}}(X) M_{\lambda})\\&=d_{\mu}\sum_{\lambda} |\lambda\rangle\langle\lambda | \int dg\  \bar{u}_{mm}^{(\mu)}(g) \text{tr}\left( X \mathcal{U}^{\text{in}}_{g^{}} (M_{\lambda})\right)\\ &=\sum_{\lambda} \text{tr}(X M^{(\mu,m)}_{\lambda}) |\lambda\rangle\langle\lambda |.
\end{align*}
\end{proof}

Further on, we will use this observation to infer the asymmetry resources which are required to implement a given measurement.

\subsection{From modes of quantum reference frames to modes of quantum channels} \label{Modes of Q.RF}


As described above, under the assumption that all symmetric dynamics are free we can use a quantum reference frame, which breaks the symmetry, as a \emph{resource} of asymmetry which enables us to \emph{simulate} dynamics which break the symmetry.

\begin{definition}
Let $\mathcal{H}_{\text{sys}}$ and $\mathcal{H}_{\text{RF}}$    be Hilbert spaces  with projective unitary representations $g\rightarrow U_{\text{sys}}(g)$ and $g\rightarrow U_{\text{RF}}(g)$ of group $G$.
We say that a channel $\mathcal{E}: \mathcal{B}(\mathcal{H}_{\text{sys}})\rightarrow\mathcal{B}(\mathcal{H}_{\text{sys}})$ can be simulated using  the resource state $\rho_{\text{RF}}$ if  there exists a channel
$\tilde{\mathcal{E}}: \mathcal{B}(\mathcal{H}_{\text{sys}}\otimes \mathcal{H}_{\text{RF}})\rightarrow\mathcal{B}(\mathcal{H}_{\text{sys}}\otimes \mathcal{H}_{\text{RF}})$ which is G-covariant, i.e.,
$$\forall g\in G:\ \ \tilde{\mathcal{E}}\circ (\mathcal{U}_{g}^{\text{sys}}\otimes \mathcal{U}_{g}^{\text{RF}})=(\mathcal{U}_{g}^{\text{sys}}\otimes \mathcal{U}_{g}^{\text{RF}})\circ\tilde{\mathcal{E}}$$
such that
\begin{equation}\label{def-simul}
\mathcal{E}(X)=\text{tr}_{RF}\left( \tilde{\mathcal{E}}(X\otimes \rho_{RF}) \right).
\end{equation}
 \end{definition}

Now one can easily prove the following result.
\begin{lemma}\label{lem-mode-mode}
Suppose the channel $\mathcal{E}$ can be simulated using a quantum reference frame in the state $\rho$ and a G-covariant channel $\tilde{\mathcal{E}}$ such that $\mathcal{E}(X)=\text{tr}_{RF}\left( \tilde{\mathcal{E}}(X\otimes \rho) \right)$.  Then
\begin{equation}
\mathcal{E}^{(\mu,m)}(X)=\text{tr}_{RF}\left( \tilde{\mathcal{E}}(X\otimes \rho^{(\mu,m)}) \right).
\end{equation}
\end{lemma}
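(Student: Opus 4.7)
The plan is to apply the integral formula for the modal component from Definition~\ref{def-mode-super} to both sides and exploit the G-covariance of $\tilde{\mathcal{E}}$ to transfer the group action from the system register to the reference-frame register, where it will produce $\rho^{(\mu,m)}$ via Lemma~\ref{Mod-decomp-lem}.

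Concretely, I would start from
\begin{equation}
\mathcal{E}^{(\mu,m)}(X) = d_\mu \int dg\, \bar{u}^{(\mu)}_{mm}(g)\, \mathcal{U}^{\text{sys}}_g \circ \mathcal{E} \circ \mathcal{U}^{\text{sys}}_{g^{-1}}(X),
\end{equation}
and substitute the simulation relation $\mathcal{E}(\,\cdot\,) = \text{tr}_{RF}\,\tilde{\mathcal{E}}(\,\cdot\otimes \rho)$. The first key step is the identity
\begin{equation}
\mathcal{U}^{\text{sys}}_g\bigl[\text{tr}_{RF}(Y)\bigr] = \text{tr}_{RF}\bigl[(\mathcal{U}^{\text{sys}}_g \otimes \mathcal{U}^{\text{RF}}_g)(Y)\bigr],
\end{equation}
which holds because the partial trace over the RF factor is invariant under a unitary conjugation on that factor (one can insert a free $\mathcal{U}^{\text{RF}}_g$ without affecting $\text{tr}_{RF}$). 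This upgrades the one-sided $\mathcal{U}^{\text{sys}}_g$ to the joint action $\mathcal{U}^{\text{sys}}_g \otimes \mathcal{U}^{\text{RF}}_g$, which by hypothesis commutes with $\tilde{\mathcal{E}}$.

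Pushing this joint action through $\tilde{\mathcal{E}}$ and noting that $\mathcal{U}^{\text{sys}}_g \circ \mathcal{U}^{\text{sys}}_{g^{-1}}$ acts as the identity on $X$, I obtain
\begin{equation}
\mathcal{E}^{(\mu,m)}(X) = \text{tr}_{RF}\,\tilde{\mathcal{E}}\!\left(X \otimes d_\mu \int dg\, \bar{u}^{(\mu)}_{mm}(g)\, \mathcal{U}^{\text{RF}}_g(\rho)\right),
\end{equation}
where I have used the linearity of $\tilde{\mathcal{E}}$ and of the partial trace to pull the integral inside. By Lemma~\ref{Mod-decomp-lem} applied to the RF representation, the bracketed reference-frame operator is exactly $\rho^{(\mu,m)}$, giving the desired identity.

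The main step to handle carefully is the interchange of $\mathcal{U}^{\text{sys}}_g$ with the partial trace over RF, since it is what converts the given system-side group action into the joint action required to invoke G-covariance. Everything else is a linear rearrangement of the defining integrals, so I expect no further technical obstacles.
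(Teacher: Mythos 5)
Your proposal is correct and follows essentially the same route as the paper: substitute the simulation relation into $\mathfrak{U}_g[\mathcal{E}](X)$, use G-covariance of $\tilde{\mathcal{E}}$ to move the group action onto $\rho$, then integrate against $d_\mu\bar{u}^{(\mu)}_{mm}(g)$ and invoke Lemma~\ref{Mod-decomp-lem}. The only difference is that you make explicit the step of inserting a free $\mathcal{U}^{\text{RF}}_g$ under the partial trace, which the paper leaves implicit in the phrase ``because $\tilde{\mathcal{E}}$ is G-covariant.''
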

\begin{proof}
First, note that
\begin{align*}
\mathfrak{U}_{g}[\mathcal{E}](X)&=\mathcal{U}_{g}\circ\mathcal{E}\circ  \mathcal{U}_{g^{-1}}(X)\\&=\mathcal{U}_{g}\circ \text{tr}_{RF}\left( \tilde{\mathcal{E}}(\mathcal{U}_{g^{-1}}(X)\otimes \rho) \right).
\end{align*}
Then, because $\tilde{\mathcal{E}}$ is G-covariant, we have
\begin{align*}
\mathfrak{U}_{g}[\mathcal{E}](X) &=\text{tr}_{RF}\left( \tilde{\mathcal{E}}(X\otimes \mathcal{U}_{g}(\rho)) \right).
\end{align*}
By multiplying both sides by $\bar{u}_{mm}^{(\mu)}(g)$ and taking the integral over $G$, we prove the lemma.
\end{proof}

In the previous section, we defined $\text{Modes}(\rho)$ to be the set of all modes in which state $\rho$ has nonzero components. Similarly, we define $\text{Modes}(\mathcal{E})$ to be the set of all modes in which a channel $\mathcal{E}$ has nonzero components.
Then the above lemma implies that
\begin{proposition}\label{prop-sim-rf-chan}
If a quantum reference frame $\rho$ can simulate a quantum channel $\mathcal{E}$  then
\begin{equation}
\text{Modes}(\mathcal{E}) \subseteq \text{Modes}(\rho).
\end{equation}
  \end{proposition}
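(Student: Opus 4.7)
The plan is to derive Proposition \ref{prop-sim-rf-chan} as an essentially immediate corollary of Lemma \ref{lem-mode-mode}, which already identifies the $(\mu,m)$ modal component of the simulated channel $\mathcal{E}$ with the partial trace of $\tilde{\mathcal{E}}$ applied to $X\otimes\rho^{(\mu,m)}$. Since the right-hand side is linear in the reference-frame argument, a vanishing modal component of $\rho$ must force the corresponding modal component of $\mathcal{E}$ to vanish.

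Concretely, I would argue by contrapositive. Suppose $(\mu,m)\notin \text{Modes}(\rho)$, meaning $\rho^{(\mu,m)}=0$. Then by Lemma \ref{lem-mode-mode},
\begin{equation}
\mathcal{E}^{(\mu,m)}(X)=\text{tr}_{RF}\bigl(\tilde{\mathcal{E}}(X\otimes 0)\bigr)=0
\end{equation}
for every $X\in\mathcal{B}(\mathcal{H}_{\text{sys}})$, since $\tilde{\mathcal{E}}$ is linear and tracing out is linear. Hence the superoperator $\mathcal{E}^{(\mu,m)}$ is identically zero, so $(\mu,m)\notin \text{Modes}(\mathcal{E})$. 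Contrapositively, $\text{Modes}(\mathcal{E})\subseteq \text{Modes}(\rho)$.

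The only subtlety worth remarking on is that Lemma \ref{lem-mode-mode} is itself the substantive step: it relied on the $G$-covariance of the dilation $\tilde{\mathcal{E}}$, which allowed the group average defining the modal projection to be transferred from the channel $\mathcal{E}$ to the reference-frame state $\rho$ through the partial trace. Once that lemma is in hand, the present proposition is a one-line consequence. Thus there is no real obstacle; the ``hard part'' has been absorbed into the preceding lemma, and writing out the proof is simply a matter of stating the contrapositive and invoking linearity.
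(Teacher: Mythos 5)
Your proof is correct and follows the same route as the paper, which likewise presents Proposition \ref{prop-sim-rf-chan} as an immediate consequence of Lemma \ref{lem-mode-mode}; the contrapositive-plus-linearity step you spell out is exactly the implicit content of the paper's one-line derivation. You have correctly identified that all the substance lives in the lemma.
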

So if a quantum reference frame does not have a particular mode of asymmetry, it cannot simulate a time evolution or measurement which has that mode of asymmetry. Also, the lemma implies that to specify whether a given quantum reference frame $\rho$ can simulate a quantum channel $\mathcal{E}$ or not we only need to know the components of $\rho$ in modes contained in $\text{Modes}(\mathcal{E})$. So, as we will see in an example, although the Hilbert space of the quantum reference frame might be arbitrary large, the number of parameters required to specify its performance for some specific simulation can be very small.

Furthermore, for any given finite dimensional Hilbert space $\mathcal{H}_{\text{sys}}$, there are a finite set of modes  in which a channel acting on $\mathcal{B}(\mathcal{H}_{\text{sys}})$ can have nonzero components. So for a given quantum reference frame $\rho_{\text{RF}}$ on an arbitrarily large Hilbert space and having amplitude over arbitrarily many representations of the group, the properties of the quantum reference frame which are relevant for simulating arbitrary channels acting on $\mathcal{B}(\mathcal{H}_{\text{sys}})$ can be specified merely by specifying the components of $\rho_{\text{RF}}$ in that finite set of modes.

\subsubsection*{Example: Reference frames of unbounded size may still lack modes}
In the case of $G$=SO(3),  consider the family of quantum reference frames defined by
$$|\psi_{N}\rangle\equiv \frac{1}{\sqrt{N}}\sum_{k=1}^{N} |{j=N^2+2 k ,m=N^2+k }\rangle.$$
One can easily show that, at the limit of large $N$ these states are very sensitive to rotations around $\hat{z}$ and also rotations around any axis in the $\hat{x}-\hat{y}$ plane. In other words, for any small such rotation, $|\psi_{N}\rangle$ is almost orthogonal to the rotated version of $|\psi_{N}\rangle$. So, one may think that at the limit of large $N$ this quantum reference frame completely breaks the symmetry and so it can be used to simulate any arbitrary measurement on a spin-1/2 system.  This is not the case, however. Indeed, it turns out that even though at the limit of large $N$ these states are very sensitive to rotations around $\hat{z}$ (that is, they break this symmetry), nonetheless they cannot simulate any measurement  on a spin-1/2 system which is not invariant under rotations around $\hat{z}$. To see this, first note that if the POVM elements of a measurement on a spin-1/2 system are not invariant under rotations around $\hat{z}$ then they have nonzero components in the modes $(\mu=1,m=\pm 1)$. Then, lemma \ref{lem-modes-meas} implies that the channel describing that measurement will have modes $(\mu=1,m=\pm 1)$. Now proposition \ref{prop-sim-rf-chan} implies that to be able to simulate  any such measurement a quantum reference frame needs to have a non-zero component in the modes $(\mu=1,m=\pm 1)$. 
However, using the Wigner-Eckart theorem, one can easily show that  none of the states in the above family have a nonzero-component in the modes $(\mu=1,m=\pm 1)$\footnote{Consider the terms in the decomposition 
\begin{align*}
\left|\psi_{N}\rangle\langle\psi_{N}\right|&= \frac{1}{{N}}\sum_{k,k'=1}^{N}\\ &\left|{j=N^2+2 k ,m=N^2+k }\rangle\langle {j'=N^2+2 k' ,m'=N^2+k' }\right|
\end{align*}
Any term in this decomposition with $k=k'$ is invariant under rotations around $\hat{z}$ and so it only has components in modes $(\mu,m=0)$. On the other hand, terms with $k\neq k'$ have no components in any of the modes $(\mu=1, m=-1,0,+1)$ because $j$ and $j'$ always differ by at least 2.  
}. 

The conclusion is that there are measurements that break rotational symmetry that cannot be simulated by this family of quantum reference frames.



\section{Example: Spin-$j$ system as a quantum reference frame} \label{sec:Spin-j}

The problem of using a spin-$j$ system as a quantum reference frame to simulate dynamics or measurements which are not invariant under rotations has been studied in several papers (See e.g. \cite{BRST04, BRST06b, BRST06, BRS07, Poulin-Yard, Mar-Man,Sher-Bart, BIM, Ahm-Rud,MarvianSpekkensWAY}). In this section, we 
show that the mode decomposition of states provides an extremely powerful insight into this problem. In particular,  we show that  using this approach some of the previously known results which have been  achieved in an ad hoc manner can be reproduced and generalized in a systematic way.

\subsection{Simulating measurements and dynamics on a spin-1/2 system}

We start with the problem of simulating measurements on a spin-1/2 system. Here, the assumption is that we are restricted to use rotationally-invariant unitaries, ancillary systems in rotationally-invariant states  and measurements whose POVM elements are all rotationally-invariant. Under this restriction,   we  are given a spin-$j$ system in an arbitrary state $\rho$ as a quantum reference frame  and our goal is to simulate a non-invariant measurement on a spin-1/2 system. Here, we focus on the informative aspect of the measurement, that is, we are not concerned with how the state of system is updated after the simulated measurement. 

For an arbitrary measurement on a spin-1/2 system, consider the  channel which describes  the informative aspects of   this measurement, as defined in Eq. (\ref{Meas-chann}). Then, consider the set of all modes $\{(\mu,m)\}$ in which this channel can have nonzero components. We can conclude from lemma \ref{lem-modes-meas}  that this set is equal to  $\{(\mu=0),(\mu=1,m=-1,0,+1)\}$ (This is also shown in example \ref{Examp-Modes-Meas}.).

 Then, it follows from proposition \ref{lem-mode-mode} that the only properties of $\rho$ that are relevant for its performance in simulating a measurement on a spin-1/2 system are its components in the modes $\{(\mu=0),(\mu=1,m= -1,0,+1)\}$, i.e.,   $\rho^{(\mu=0)}$ and $\{ \rho^{(\mu=1,m)}:m=-1,0,1\}$.  Furthermore, since the irreducible tensor operator basis on a spin-$j$ system is multiplicity-free, each of these components
is determined by only one parameter, namely, the Hilbert Schmidt inner product between $\rho$ and the corresponding component of the irreducible tensor operator basis,
 $$T^{(\mu=0)}=c_{0} \mathbb{I},\ \  \ T_{m=0}^{(\mu=1)}=c_{1} L_{z},\ \ \  \text{and}\ \ \ T_{m=\pm 1}^{(\mu=1)}=\pm \frac{c_{1}}{\sqrt{2}} L_{\pm},$$ 
where $c_{0}$ and $c_{1}$ are normalization factors. It follows that the components of $\rho$ in the modes  $\{(\mu=0),(\mu=1,m= -1,0,+1)\}$ are uniquely specified by the vector of expectation values of angular momentum for $\rho$, i.e.,
$\left(\langle L_{x}\rangle, \langle L_{y}\rangle, \langle L_{z}\rangle\right).$

So we conclude that 
\begin{proposition}\label{prop-spin-j}
Consider a spin-$j$ system in state $\rho$ as a quantum reference frame.  Its performance in simulating (informative aspects of) measurements on a spin-1/2 system is uniquely specified  by 
the vector of expectation values of angular momentum of $\rho$.
\end{proposition}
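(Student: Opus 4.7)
The plan is to combine three ingredients already established in the paper: (i) the mode characterization of measurement channels from lemma \ref{lem-modes-meas} / example \ref{Examp-Modes-Meas}, (ii) the simulation-to-mode lemma \ref{lem-mode-mode} (equivalently proposition \ref{prop-sim-rf-chan}), and (iii) the explicit form of the rank-0 and rank-1 irreducible tensor operators on a spin-$j$ irrep from section \ref{Ex-tensors-SU(2)}. The claim to be proved is really two claims in one: that the simulated measurement depends on $\rho$ only through finitely many of its modes, and that on a spin-$j$ carrier those modes are in one-to-one correspondence with the three components of $\langle \vec L \rangle_\rho$.

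First I would fix an arbitrary POVM $\{M_\lambda\}$ on the spin-$\tfrac{1}{2}$ system and pass, as in Eq.~\eqref{Meas-chann}, to the associated channel $\mathcal{M}$ whose output lives in the span of $G$-invariant flag states $\{|\lambda\rangle\}$. The input space carries the spin-$\tfrac{1}{2}$ irrep and the output space carries the trivial irrep, so decomposing $U_{\text{in}}\otimes\bar U_{\text{in}}\otimes U_{\text{out}}\otimes\bar U_{\text{out}}$ (i.e.\ $\tfrac{1}{2}\otimes\tfrac{1}{2}$) shows, exactly as in example \ref{Examp-Modes-Meas}, that $\mathrm{Modes}(\mathcal{M})\subseteq\{(0,0),(1,-1),(1,0),(1,+1)\}$. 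Lemma \ref{lem-modes-meas} in addition expresses each modal component of $\mathcal{M}$ explicitly in terms of the corresponding modal components of the POVM elements, so any simulation task one could ask of $\rho$ against a spin-$\tfrac{1}{2}$ measurement lies entirely within these four modes.

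Next, for any $G$-covariant extension $\tilde{\mathcal{E}}$ implementing the simulation, lemma \ref{lem-mode-mode} yields $\mathcal{M}^{(\mu,m)}(X)=\mathrm{tr}_{\text{RF}}\!\left(\tilde{\mathcal{E}}(X\otimes \rho^{(\mu,m)})\right)$ mode by mode. Combining this with the previous step, only $\rho^{(0,0)}$ and $\rho^{(1,m)}$ for $m\in\{-1,0,+1\}$ can influence $\mathcal{M}$; all higher-rank modes of the quantum reference frame are washed out. On a spin-$j$ irrep the tensor operator basis is multiplicity-free and, from section \ref{Ex-tensors-SU(2)}, $T^{(0)}\propto \mathbb{I}$ and $T^{(1)}_{0}\propto L_z$, $T^{(1)}_{\pm1}\propto L_\pm$. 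Using definition \ref{Def:Modes of asymmetry} this gives $\rho^{(0,0)}=\mathbb{I}/(2j+1)$ (fixed by $\mathrm{tr}\rho=1$, hence contributes nothing distinguishing) and $\rho^{(1,m)}$ is a fixed tensor multiplied by the scalar $\mathrm{tr}(\rho\, T^{(1)\dagger}_{m})$, which is a linear combination of $\langle L_x\rangle,\langle L_y\rangle,\langle L_z\rangle$.

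Therefore two states $\rho,\rho'$ on the spin-$j$ system with identical angular-momentum expectation vectors satisfy $\rho^{(\mu,m)}=\rho'{}^{(\mu,m)}$ for every $(\mu,m)\in\mathrm{Modes}(\mathcal{M})$, and by the displayed formula of lemma \ref{lem-mode-mode} they produce identical $\mathcal{M}$ under any $G$-covariant $\tilde{\mathcal{E}}$; conversely, different values of $\langle \vec L\rangle$ give different mode-1 components of $\rho$, which can be detected by a suitable choice of POVM on the spin-$\tfrac{1}{2}$ system. I do not anticipate a genuine obstacle: each link in the chain is a direct invocation of an earlier lemma, and the only mild subtlety is making sure to use the fact that for a spin-$j$ irrep the $\mu=0$ and $\mu=1$ modes are multiplicity-free, so that no hidden parameters survive beyond the three components of $\langle\vec L\rangle_\rho$.
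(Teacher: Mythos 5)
Your proposal is correct and follows essentially the same route as the paper: identify the modes of the measurement channel as $\{(\mu=0),(\mu=1,m=-1,0,+1)\}$ via lemma \ref{lem-modes-meas} and example \ref{Examp-Modes-Meas}, invoke lemma \ref{lem-mode-mode} to conclude that only $\rho^{(0,0)}$ and $\rho^{(1,m)}$ matter, and then use the multiplicity-free tensor operator basis on the spin-$j$ irrep ($T^{(0)}\propto\mathbb{I}$, $T^{(1)}_{0}\propto L_z$, $T^{(1)}_{\pm1}\propto L_\pm$) to reduce these components to the three numbers $\langle L_x\rangle,\langle L_y\rangle,\langle L_z\rangle$. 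The only addition beyond the paper's argument is your brief remark on the converse (that distinct angular-momentum vectors are operationally distinguishable), which the paper does not spell out but which is consistent with its claims.
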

This result has been previously obtained in \cite{Poulin-Yard} using a totally 
different and rather ad hoc argument.  But, using our approach we can easily generalize it to the case of measurements on systems supporting arbitrary representations of SO(3) (as opposed to just the spin-1/2 representation). 
Before presenting this generalization we investigate some implications of proposition \ref{prop-spin-j}.

An interesting consequence 
is the following: Suppose the system is spin-$j$ and the vector of expectation values of angular momentum of $\rho$ is polarized in the $\hat{n}$ direction. In general, for $j \ne 1/2$,  such a state need not be invariant under rotations around $\hat{n}$. Now consider the symmetrized version of $\rho$, i.e., the state
$$\rho_{\text{sym}}\equiv\frac{1}{2\pi}\int d\theta\ e^{-i\theta \vec{L}.\hat{n}} \rho\ e^{i\theta \vec{L}.\hat{n}},$$
which \emph{is} invariant under rotations around $\hat{n}$. One can easily see that this state has the same vector of expectation values of angular momentum as the original state $\rho$. Therefore, any measurement on the spin-1/2 system which can be simulated using $\rho$ can also be simulated using $\rho_{\text{sym}}$ and vice versa. But, since   $\rho_{\text{sym}}$ is invariant under rotations around $\hat{n}$, it can only simulate those measurements whose POVM elements are invariant under rotations around $\hat{n}$. This argument implies that

\begin{corollary}
Consider a spin-$j$ system in state $\rho$ as a quantum reference frame for direction.  If $\rho$ has a vector of expectation values of angular momentum that is polarized in the $\hat{n}$ direction, then it can only simulate those measurements on a spin-1/2 system for which the POVM elements are invariant under rotations around $\hat{n}$.
\end{corollary}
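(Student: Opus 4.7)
The plan is to follow the argument sketched in the paragraph immediately preceding the corollary and make it precise using proposition \ref{prop-spin-j}, proposition \ref{prop-sim-rf-chan}, and lemma \ref{lem-modes-meas}. The idea is to replace the given reference frame $\rho$ by its symmetrization under rotations about $\hat{n}$, show that the two have identical simulation power, and then invoke the manifest symmetry of the symmetrized state.

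First, I would define
\begin{equation}
\rho_{\text{sym}} \equiv \frac{1}{2\pi}\int_{0}^{2\pi} d\theta\ e^{-i\theta L_{\hat{n}}}\ \rho\ e^{i\theta L_{\hat{n}}}.
\end{equation}
Next, I would verify that $\rho$ and $\rho_{\text{sym}}$ have the same vector of expectation values of angular momentum. The component along $\hat{n}$ is preserved because $L_{\hat{n}}$ commutes with $e^{i\theta L_{\hat{n}}}$. The components transverse to $\hat{n}$ are averaged to zero by the twirl, but these components are zero in $\rho$ to begin with by the polarization hypothesis, so they agree trivially. Hence $\langle \vec L\rangle_{\rho}=\langle \vec L\rangle_{\rho_{\text{sym}}}$.

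By proposition \ref{prop-spin-j}, the set of measurements on a spin-$1/2$ system that can be simulated using a spin-$j$ reference frame is determined entirely by the vector of expectation values of angular momentum. Therefore $\rho$ and $\rho_{\text{sym}}$ simulate exactly the same set of measurements on the spin-$1/2$ system. Now consider any measurement simulable using $\rho$, with POVM $\{M_{\lambda}\}$ and associated channel $\mathcal{M}$ defined via Eq.~(\ref{Meas-chann}). Since $\rho_{\text{sym}}$ simulates it and $\rho_{\text{sym}}$ is invariant under rotations about $\hat{n}$, its mode decomposition (taking $\hat{n}$ as the quantization axis) has nonzero components only in modes $(\mu,m)$ with $m=0$. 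Proposition \ref{prop-sim-rf-chan} then forces $\text{Modes}(\mathcal{M})\subseteq \text{Modes}(\rho_{\text{sym}})$, so $\mathcal{M}$ itself has no modes with $m\neq 0$ relative to $\hat{n}$.

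Finally, lemma \ref{lem-modes-meas} relates the modes of $\mathcal{M}$ to those of the POVM elements: $\mathcal{M}^{(\mu,m)}$ vanishes if and only if every $M_{\lambda}^{(\mu,m)}$ vanishes. Hence each $M_{\lambda}$ has only components in modes with $m=0$ relative to $\hat{n}$, which by Eq.~(\ref{transformation rule for U(1)}) (applied to the $U(1)$ subgroup of rotations about $\hat{n}$) is equivalent to $[e^{i\theta L_{\hat{n}}}, M_{\lambda}]=0$ for all $\theta$, i.e. invariance of each POVM element under rotations around $\hat{n}$. The only subtle step is the first one, verifying that the twirl preserves the angular momentum vector, but this reduces to the observation that the transverse components are assumed zero and the parallel component is conserved under conjugation by $e^{i\theta L_{\hat{n}}}$; so no real obstacle arises.
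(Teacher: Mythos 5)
Your proposal is correct and follows essentially the same route as the paper: symmetrize $\rho$ about $\hat{n}$, note that the polarization hypothesis guarantees the angular momentum vector is unchanged, invoke proposition \ref{prop-spin-j} to equate the simulation power of $\rho$ and $\rho_{\text{sym}}$, and conclude from the manifest $U(1)$ symmetry of $\rho_{\text{sym}}$ that only rotationally invariant POVMs about $\hat{n}$ can be simulated. Your explicit use of proposition \ref{prop-sim-rf-chan} and lemma \ref{lem-modes-meas} to justify the last step merely makes precise what the paper treats as an immediate consequence of its general mode framework.
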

So, even at the limit of large $j$, a single spin-$j$ system  cannot act as a perfect reference frame for direction  even if it does not have any symmetries. 


As an example of simulating measurements on a spin-1/2 system consider the following problem: suppose one uses a spin-$j$ system in the state $\rho$ as a quantum reference frame to  measure the angular momentum of a spin-1/2 system in the $\hat{z}$ direction, that is, to measure the  observable $\sigma_{z}$. It can be shown that this measurement cannot be simulated perfectly with a quantum reference frame of bounded size.  This result is known as the Wigner-Araki-Yanase theorem~\cite{Wigner,ArakiYanase,LoveridgeBusch}, and is quite an intuitive result in the context of the resource theory of asymmetry~\cite{MarvianSpekkensWAY,Ahm-Rud}. Now the question is: using  the  spin-$j$ system in the state $\rho$ as a quantum reference frame, how well can one simulate this measurement? In other words,  using this quantum reference frame, what is the highest  precision attainable in a simulation of a measurement of $\sigma_{z}$?  We evaluate the precision of the realized measurement using the following figure of merit: the highest probability of successfully distinguishing an unknown eigenstate  of $\sigma_{z}$ when we are given each of the two eigenstates  with equal probability.

The answer, which we prove in the appendix, is as follows.
 \begin{proposition}\label{pro-max-succ}
Suppose we are restricted to the rotationally invariant measurements but we have access to the state $\rho$ of a spin-$j$ system as a quantum reference frame. Then the  maximum probability of successfully distinguishing the two eigenstates of $\sigma_{z}$ for a spin-1/2 system, $|j=1/2,m=1/2\rangle$ and $|j=1/2,m=-1/2\rangle$, when the prior probabilities of the two are equal,   is given by 
\begin{equation}
p_{\text{succ}}(\rho)=\frac{1}{2}\left[1+\frac{\left|\text{tr}(\rho L_{z})\right|}{j+1/2}\right].
\end{equation}
\end{proposition}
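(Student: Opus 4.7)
The plan is to reduce the optimization to a Helstrom discrimination between two twirled states, and then exploit the Clebsch--Gordan decomposition $\tfrac12\otimes j=(j+\tfrac12)\oplus(j-\tfrac12)$, which makes the commutant of the joint representation only two-dimensional, so the trace-norm can be evaluated by hand.

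\textbf{Step 1 (reduction to twirled states).} I would first note that for any rotationally invariant POVM $\{M,I-M\}$ on the joint system one has $\text{tr}(M\xi)=\text{tr}(M\bar\xi)$, where $\bar\xi\equiv\int dg\,(U\otimes V)(g)\,\xi\,(U\otimes V)^\dagger(g)$ is the $G$-twirl; conversely, any non-invariant optimal POVM for a pair of invariant states can be invariantised without changing its statistics. Hence optimising the success probability over rotationally invariant POVMs applied to $|\pm\rangle\langle\pm|\otimes\rho$ gives the ordinary Helstrom bound for the twirled pair $\tilde\rho_\pm\equiv\overline{|\pm\rangle\langle\pm|\otimes\rho}$. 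With equal priors this is $p_{\text{succ}}(\rho)=\tfrac12+\tfrac14\,\|\tilde\rho_+-\tilde\rho_-\|_1$. Any additional rotationally invariant ancillas and unitaries can be absorbed into the POVM, so the reduction is without loss of generality.

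\textbf{Step 2 (commutant structure).} Next I compute $\tilde\rho_+-\tilde\rho_-=\int dg\,(U\otimes V)(g)(\sigma_z\otimes\rho)(U\otimes V)^\dagger(g)$, which lies in the commutant of $U\otimes V$. Since $\tfrac12\otimes j$ decomposes into the inequivalent irreps $J=j\pm\tfrac12$, Schur's lemma forces $\tilde\rho_+-\tilde\rho_-=aP_++bP_-$, where $P_\pm$ project onto those total-angular-momentum subspaces. Contracting against $P_\pm$ (which are invariant, so survive the twirl) gives
\begin{equation*}
a(2j+2)=\text{tr}[P_+(\sigma_z\otimes\rho)],\qquad b(2j)=\text{tr}[P_-(\sigma_z\otimes\rho)].
\end{equation*}
Because $\text{tr}(\sigma_z\otimes\rho)=0$, these two quantities are negatives of each other; calling the first $X$, one gets $\|\tilde\rho_+-\tilde\rho_-\|_1=|a|(2j+2)+|b|(2j)=2|X|$.

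\textbf{Step 3 (evaluating $X$).} I would use the standard identity $P_+=\bigl(2\vec S\cdot\vec L+(j+1)I\bigr)/(2j+1)$ and expand $\vec S\cdot\vec L=S_z\otimes L_z+\tfrac12(S_+\otimes L_-+S_-\otimes L_+)$. The ladder terms drop out of $\text{tr}[P_+(\sigma_z\otimes\rho)]$ because $\text{tr}(S_\pm\sigma_z)=0$, and the identity term drops out because $\text{tr}(\sigma_z)=0$. Only $2S_z\otimes L_z$ survives, and using $\text{tr}(S_z\sigma_z)=1$ gives $X=\text{tr}(\rho L_z)/(j+\tfrac12)$. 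Substituting back yields $p_{\text{succ}}(\rho)=\tfrac12\bigl(1+|\text{tr}(\rho L_z)|/(j+\tfrac12)\bigr)$ as claimed.

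The only conceptually nontrivial point is Step 1, which is the standard duality between $G$-twirling of states and restriction to $G$-invariant observables; Steps 2 and 3 are essentially a one-page representation-theoretic calculation. The main obstacle I would anticipate is being careful that simulation with auxiliary invariant ancillas and invariant unitaries really does reduce to a single invariant POVM on $\mathcal H_{\text{sys}}\otimes\mathcal H_{\text{RF}}$, since otherwise the Helstrom comparison would not apply directly.
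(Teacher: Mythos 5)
Your proof is correct, but a direct comparison is not possible: although the text promises a proof of this proposition in the appendix, the appendix of this source contains only the proofs of Lemma \ref{G-cov-Sup-tensor}, Theorem \ref{prop-gen-spin-l} and Corollary \ref{cor-prop-gen-spin-l}, so the paper's own argument is simply absent. On its merits, every step of yours checks out. The reduction in Step 1 is sound: a protocol built from invariant ancilla states, invariant unitaries and an invariant two-outcome measurement composes to a single invariant POVM on $\mathcal{H}_{1/2}\otimes\mathcal{H}_{j}$, and the twirling duality then gives $p_{\text{succ}}(\rho)=\tfrac12+\tfrac14\|\tilde\rho_+-\tilde\rho_-\|_1$. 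In Step 2, since $\tfrac12\otimes j=(j+\tfrac12)\oplus(j-\tfrac12)$ is multiplicity-free with inequivalent summands, Schur's lemma indeed forces $\tilde\rho_+-\tilde\rho_-=aP_++bP_-$, and tracelessness of $\sigma_z\otimes\rho$ gives $a\,\text{tr}(P_+)=-b\,\text{tr}(P_-)$ with $\text{tr}(P_\pm)=2j+1\pm1$, so $\|\tilde\rho_+-\tilde\rho_-\|_1=2|X|$ with $X=\text{tr}[P_+(\sigma_z\otimes\rho)]$. In Step 3 the identity $P_+=\bigl(2\vec S\cdot\vec L+(j+1)I\bigr)/(2j+1)$ is verified by evaluating $J^2$ on the two subspaces, and since $\text{tr}(S_\pm\sigma_z)=\text{tr}(\sigma_z)=0$ and $\text{tr}(S_z\sigma_z)=1$, only the $S_z\otimes L_z$ term survives, giving $X=\text{tr}(\rho L_z)/(j+\tfrac12)$ and hence the stated formula. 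Your derivation also fits naturally with the paper's framework: the answer depends on $\rho$ only through $\text{tr}(\rho L_z)$, i.e.\ its $(\mu=1,m=0)$ mode, exactly as Proposition \ref{prop-spin-j} requires, and it supplies the operational interpretation of the monotone $\|\rho^{(1,0)}\|\propto|\text{tr}(\rho L_z)|/(j+1/2)$ computed in Section \ref{sec-modes-monotones}.
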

So, as we expected from proposition \ref{prop-spin-j}, this probability only depends on the expectation value of the vector of angular momentum (in this case, just the $\hat{z}$ component).  Note that, as one may expect intuitively, in the limit where $j$ becomes arbitrarily large, the coherent state $|j,j\rangle$ can be used as  a quantum reference frame to simulate the measurement of $\sigma_{z}$ perfectly.


Finally, it is worth mentioning that if the Hilbert space of the quantum reference frame under consideration carries different irreps of SO(3) or if it has more than one copy of an irrep, then the vector of angular momentum of the reference frame state alone is not sufficient to specify the measurements it can simulate on a spin-1/2 system. For example, suppose the quantum reference frame is formed from a spin-$j$ system and a qubit whose states are invariant under rotation. This means that the total Hilbert space of the quantum reference frame has two copies of irrep $j$ of SO(3). Suppose the quantum reference frame is in the state
$$\frac{1}{\sqrt{2}}\left(|j,m=j\rangle|\lambda=1\rangle+|j,m=-j\rangle|\lambda=2\rangle\right)$$
 where $\lambda$ labels different orthogonal states of the qubit.   Then, one can easily show that at the limit of large $j$ this reference frame can simulate any arbitrary measurement which is invariant under rotation around $\hat{z}$. But, the expectation value of angular momentum for this state is zero in all directions.  So, for a general representation of SO(3) these expectation values cannot characterize the ability of the state to simulate measurements on a spin-1/2 system.

Proposition \ref{prop-spin-j} can be easily generalized to the problem of simulating arbitrary \emph{dynamics} on a spin-1/2 system.\footnote{This includes, as a special case, the problem of simulating measurements when we are concerned with simulating not just their informative aspect, but the particular update rule as well.} 

Recalling example \ref{Examp-Modes-Meas}, it is clear that the modes of asymmetry appearing in the modal decomposition of a channel on a spin-1/2 system are $\{(\mu=0),(\mu=1,m=-1,0,1), (\mu=2,m=-2,-1,0,1,2)\}$.
 So, proposition \ref{lem-mode-mode} implies that to specify the ability of a particular state $\rho$ of the spin-$j$ system to simulate an arbitrary dynamics on a spin-1/2 system, we merely need to specify these modes of asymmetry of $\rho$. Again from the result of  section \ref{Ex-tensors-SU(2)} we can see that the components of $\rho$ in these modes are uniquely specified by the following eight real parameters: 
\begin{align} \label{eight-par}
&\mu=1\ \text{modes}:\ \langle L_{x}\rangle, \langle L_{y}\rangle, \langle L_{z}\rangle \nonumber\\
&\mu=2\ \text{modes}:\  \langle L^{2}_{x}\rangle, \langle L^{2}_{y}\rangle, \ \langle L_{x}L_{y}+L_{y}L_{x}\rangle, \nonumber \\ &\langle L_{x}L_{z}+L_{z}L_{x}\rangle,\ \ \langle L_{y}L_{z}+L_{z}L_{y}\rangle
\end{align}
So to summarize we have shown that
\begin{proposition}\label{prop-spin-j-chan}
Consider a spin-$j$ system in state $\rho$ as a quantum reference frame.  Its performance in simulating channels on a spin-1/2 system is uniquely specified  by the eight  real parameters specified in Eq.(\ref{eight-par}). 
\end{proposition}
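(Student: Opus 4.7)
The plan is to combine three results already established earlier in the paper: (i) that the simulated channel depends on the reference frame state only through certain mode components, via Lemma~\ref{lem-mode-mode} and Proposition~\ref{prop-sim-rf-chan}; (ii) the enumeration of which modes can appear in a channel on a spin-$1/2$ system, from Example~\ref{Examp-Modes-Meas}; and (iii) the multiplicity-free structure of the irreducible tensor operators on a spin-$j$ space, from section~\ref{Ex-tensors-SU(2)}. Once these are assembled, the remaining work is to identify the resulting scalar coefficients with the eight expectation values listed in Eq.~(\ref{eight-par}).

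First I would invoke Lemma~\ref{lem-mode-mode} to write
\begin{equation}
\mathcal{E}^{(\mu,m)}(X) = \mathrm{tr}_{\mathrm{RF}}\bigl(\tilde{\mathcal{E}}(X \otimes \rho^{(\mu,m)})\bigr),
\end{equation}
so that a simulated $\mathcal{E}$ is determined by those components $\rho^{(\mu,m)}$ of the reference frame for which $\mathcal{E}^{(\mu,m)}$ is nonzero. Example~\ref{Examp-Modes-Meas} then tells us that for a channel from a spin-$1/2$ system to a spin-$1/2$ system, the only nonzero modes lie in $\{(\mu=0),(\mu=1,m\in\{-1,0,1\}),(\mu=2,m\in\{-2,\dots,2\})\}$, a total of nine modes. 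Next I would use the fact that the spin-$j$ irrep is multiplicity-free: each mode is a one-dimensional subspace spanned by $T^{(\mu)}_m$, so $\rho^{(\mu,m)}= \mathrm{tr}\bigl(T^{(\mu)}_m{}^\dagger \rho\bigr)\, T^{(\mu)}_m$ is characterized by a single complex coefficient. Hermiticity of $\rho$ relates the coefficient at $(\mu,m)$ to that at $(\mu,-m)$ up to conjugation and sign, while normalization $\mathrm{tr}\,\rho=1$ fixes the $\mu=0$ coefficient, leaving $3+5=8$ independent real parameters.

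The remaining step is purely bookkeeping: translate these eight coefficients into the anticommutator expectation values of Eq.~(\ref{eight-par}). For the $\mu=1$ modes, the explicit forms $T^{(1)}_0\propto L_z$ and $T^{(1)}_{\pm 1}\propto L_\pm = L_x\pm iL_y$ give directly the three components $\langle L_x\rangle,\langle L_y\rangle,\langle L_z\rangle$. For the $\mu=2$ modes, using $T^{(2)}_{\pm 2}\propto L_\pm^2$, $T^{(2)}_{\pm 1}\propto L_\pm L_z + L_z L_\pm$, and $T^{(2)}_0\propto 3L_z^2 - L^2$ together with $L_\pm = L_x\pm iL_y$, the real and imaginary parts of the five coefficients become explicit linear combinations of the five listed anticommutators, once one eliminates $\langle L_z^2\rangle$ using the constant $\langle L^2\rangle = j(j+1)$. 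The main obstacle, such as it is, amounts to verifying that this $5\times 5$ linear map is invertible, which is a routine change-of-basis check on $\mathbb{R}^5$; no deeper difficulty arises given the three preceding ingredients.
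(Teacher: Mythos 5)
Your proposal is correct and follows essentially the same route as the paper: Lemma~\ref{lem-mode-mode} to reduce the question to the modal components of $\rho$, Example~\ref{Examp-Modes-Meas} to restrict to modes with $\mu\le 2$, and the multiplicity-free tensor-operator structure on a spin-$j$ space from section~\ref{Ex-tensors-SU(2)} to identify those components with the eight expectation values of Eq.~(\ref{eight-par}). Your additional bookkeeping (Hermiticity pairing $m$ with $-m$, normalization fixing the $\mu=0$ mode, and the invertibility of the change of basis to the anticommutator expectation values) only makes explicit what the paper leaves implicit.
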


\subsection{Generalization to arbitrary systems}
In the previous section we considered the problem of simulating measurements and  channels on spin-1/2 systems. In this section we generalize these results to arbitrary systems. We start by a generalization of propositions \ref{prop-spin-j} and \ref{prop-spin-j-chan}.

 \begin{theorem}\label{prop-gen-spin-l}
Consider a  spin-$j$ system  in state $\rho$ as a quantum reference frame.  Its performance for simulating a measurement (respectively channel) on a system with Hilbert space $\mathcal{H}$ is uniquely specified  by $(2l+1)^{2}-1$ (respectively $(4l+1)^{2}-1$) real parameters, where $l$ is the largest angular momentum quantum number which  shows up in the decomposition of $\mathcal{H}$ into irreps of SO(3). These parameters of $\rho$ correspond to the expectation values of  all the non-trivial irreducible tensor operators with rank less than or equal to $2l$ (respectively $4l$). 
\end{theorem}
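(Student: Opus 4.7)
My plan is to combine the mode-decomposition machinery of Sections~\ref{sec:Quant-Mode} and~\ref{Modes of Q.RF} with the explicit structure of SO(3) tensor operators reviewed in Section~\ref{Ex-tensors-SU(2)}. The key lever is Lemma~\ref{lem-mode-mode}, which expresses the $(\mu,m)$-modal component of the simulated channel as $\mathcal{E}^{(\mu,m)}(X)=\text{tr}_{\text{RF}}\bigl(\tilde{\mathcal{E}}(X\otimes \rho^{(\mu,m)})\bigr)$. Consequently, the only components of $\rho$ that can influence the simulation are those whose mode $(\mu,m)$ lies in $\text{Modes}(\mathcal{E})$ (resp.\ $\text{Modes}(\mathcal{M})$), and two reference-frame states agreeing on all such components are operationally indistinguishable across every G-covariant coupling $\tilde{\mathcal{E}}$.

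First I would determine $\text{Modes}(\mathcal{E})$ in the two cases. For a measurement, Lemma~\ref{lem-modes-meas} identifies the modes of the informative channel $\mathcal{M}$ with those of the POVM elements $\{M_\lambda\}\subset\mathcal{B}(\mathcal{H})$; on this space the symmetry acts via the Liouville representation $U\otimes\bar{U}$, whose irreps, by Clebsch-Gordan, range from $0$ up to $2l$ when $l$ is the highest spin in the SO(3) decomposition of $\mathcal{H}$. For a general channel $\mathcal{E}:\mathcal{B}(\mathcal{H})\to\mathcal{B}(\mathcal{H})$, the super-operator space carries $U\otimes\bar{U}\otimes\bar{U}\otimes U$ (as used just before Example~\ref{Examp-Modes-Meas}), so the maximum rank is $4l$. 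Discarding the trivial rank-$0$ mode, which is fixed by normalization (resp.\ trace preservation), the relevant modes are $\{(\mu,m):1\le\mu\le 2l,\,|m|\le\mu\}$ for measurements and $\{(\mu,m):1\le\mu\le 4l,\,|m|\le\mu\}$ for channels.

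Next I would count the independent real parameters of $\rho$ sitting in these modes. Since $\mathcal{B}(\mathcal{H}_j)$ is multiplicity-free by Section~\ref{Ex-tensors-SU(2)}, each $\rho^{(\mu,m)}$ is determined by the single complex coefficient $\rho_{\mu,m}\equiv\text{tr}(T_m^{(\mu)\dagger}\rho)$. Hermiticity of $\rho$, together with the SO(3) fact noted after Eq.~\eqref{comp-conj} that $T_m^{(\mu)\dagger}$ is proportional to $T_{-m}^{(\mu)}$, imposes a reality constraint of the form $\overline{\rho_{\mu,m}}\propto\rho_{\mu,-m}$. Thus the $2\mu+1$ complex coefficients at rank $\mu$ collapse to $2\mu+1$ independent real parameters (a real $m=0$ entry together with $\mu$ conjugate pairs). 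Summing,
\begin{align*}
\sum_{\mu=1}^{2l}(2\mu+1)&=(2l+1)^2-1,\\
\sum_{\mu=1}^{4l}(2\mu+1)&=(4l+1)^2-1,
\end{align*}
which are exactly the expectation values of all non-trivial irreducible spherical tensors of rank up to $2l$ (resp.\ $4l$), as asserted.

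The main obstacle is the Hermiticity bookkeeping: one must carefully verify that the relation between the $m$ and $-m$ components leaves precisely $2\mu+1$ independent real numbers per rank rather than overcounting. A minor caveat is that when $j$ is small enough that some $\mu$ exceeds $2j$, the corresponding $\rho_{\mu,m}$ vanish automatically, so the announced count should be read as an upper bound on the number of statistically sufficient parameters; with that understood, the theorem is a short corollary of Lemma~\ref{lem-mode-mode} combined with the rank bounds above.
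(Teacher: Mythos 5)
Your proposal is correct and follows essentially the same route as the paper's own proof: Lemma~\ref{lem-mode-mode} to restrict attention to the modes of $\rho$ lying in $\text{Modes}(\mathcal{M})$ or $\text{Modes}(\mathcal{E})$, the rank bounds $2l$ and $4l$ from decomposing $U\otimes\bar{U}$ and $U\otimes\bar{U}\otimes U\otimes\bar{U}$, multiplicity-freeness of the spin-$j$ tensor operator basis for the count, and closure of each rank-$\mu$ subspace under Hermitian conjugation (which the paper phrases as the existence of a Hermitian basis, and you phrase as the pairing $\overline{\rho_{\mu,m}}\propto\rho_{\mu,-m}$) to reduce the complex coefficients to real parameters. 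Your caveat that the count is an upper bound when $2l$ or $4l$ exceeds $2j$ is consistent with the paper's own ``at most'' phrasing.
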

The proof is presented in appendix \ref{app:proofs}. Note that an arbitrary state $\rho$ of a spin-$j$ system is specified by $(2j+1)^{2}-1$ real parameters. But the above result implies that the number of parameters of $\rho$ that are relevant for its performance in the simulation task depends only on $l$, the size of the system to which the measurement or channel is applied, and not on $j$, the size of the quantum reference frame.

An important special case is where the state $\rho$ is invariant under rotations around some axis $\hat{n}$. This special case has been previousy considered for example in \cite{Sher-Bart}. It follows from theorem \ref {prop-gen-spin-l} that
\begin{corollary} \label{cor-prop-gen-spin-l}
Consider a  spin-$j$ system  in state $\rho$ as a quantum reference frame, and suppose that $\rho$  
is invariant under rotations around the direction $\hat{n}$. Its performance in simulating a measurement (respectively channel) can be uniquely specified by $2l$ (respectively $4l$) real parameters corresponding to the moments $\{\text{tr}\left(\rho L^{k}_{\hat{n}}\right): 1 \le k\le 2l\}$ (respectively $\{\text{tr}\left(\rho L^{k}_{\hat{n}}\right): 1 \le k\le 4l\}$) where  $L_{\hat{n}}$ is the angular momentum operator in the $\hat{n}$ direction.
\end{corollary}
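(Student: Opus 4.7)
Without loss of generality I would take $\hat{n}=\hat{z}$ by rotating coordinates. The plan is to invoke Theorem~\ref{prop-gen-spin-l}, then use $\hat{z}$-invariance of $\rho$ to kill every tensor-operator component except the $m=0$ ones, and finally identify the surviving coefficients with the moments of $L_{\hat{n}}$.

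First, Theorem~\ref{prop-gen-spin-l} says that the quality of the simulation is uniquely determined by the expectation values $\text{tr}(\rho\, T_{m}^{(\mu)})$ for all non-trivial ranks $1\le\mu\le 2l$ (respectively $4l$) and all $-\mu\le m\le\mu$ (the spin-$j$ representation is multiplicity free so there is no $\alpha$ index). Next, from Eq.~\eqref{eq-def-tens-irrep} restricted to the U(1) subgroup generated by $L_{z}$, I have $\mathcal{U}_{R_{z}(\theta)}[T_{m}^{(\mu)}] = e^{im\theta}T_{m}^{(\mu)}$. Since $\mathcal{U}_{R_{z}(\theta)}[\rho]=\rho$, moving the unitary onto the tensor operator inside the trace yields
\begin{equation}
\text{tr}(\rho\, T_{m}^{(\mu)}) = e^{-im\theta}\,\text{tr}(\rho\, T_{m}^{(\mu)}) \quad\forall\theta,
\end{equation}
which forces $\text{tr}(\rho\, T_{m}^{(\mu)})=0$ whenever $m\ne 0$. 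So only the $2l$ (respectively $4l$) coefficients $\text{tr}(\rho\, T_{0}^{(\mu)})$ remain relevant.

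The last step is to exhibit an invertible change of basis between $\{T_{0}^{(\mu)}: 1\le\mu\le 2l\}$ and $\{L_{z}^{k}: 1\le k\le 2l\}$ on the spin-$j$ irrep. By Wigner--Eckart, $\langle j,m_{1}|T_{0}^{(\mu)}|j,m_{2}\rangle \propto C^{j,m_{1}}_{\mu,0;j,m_{2}}$ vanishes unless $m_{1}=m_{2}$, so $T_{0}^{(\mu)}$ is diagonal in the $L_{z}$ eigenbasis and therefore a function of $L_{z}$ (any scalar dependence on $L^{2}$ collapses, since $L^{2}=j(j+1)I$ on the irrep). As a function of the eigenvalue $m$, the Clebsch--Gordan coefficient $C^{j,m}_{\mu,0;j,m}$ is a polynomial of degree exactly $\mu$ in $m$---this is visible in the $\mu=1,2$ examples of Section~\ref{Ex-tensors-SU(2)} and extends to all $\mu$ either by explicit Racah/Legendre formulas or by induction using the Clebsch--Gordan tensor product of Eq.~\eqref{Clebsh-Gordon-tens} applied to $T_{0}^{(1)}\propto L_{z}$. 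Thus $\{T_{0}^{(\mu)}:0\le\mu\le\mu_{\max}\}$ and $\{L_{z}^{k}:0\le k\le\mu_{\max}\}$ are related by an invertible upper-triangular change of basis for every $\mu_{\max}\le 2j$, so knowing the $2l$ (resp.\ $4l$) values $\text{tr}(\rho\, T_{0}^{(\mu)})$ is equivalent to knowing the $2l$ (resp.\ $4l$) moments $\text{tr}(\rho\, L_{\hat{n}}^{k})$.

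The main obstacle I anticipate is the last step: giving a clean proof that $T_{0}^{(\mu)}$ is a polynomial in $L_{z}$ of degree \emph{exactly} $\mu$ with nonzero leading coefficient. Diagonality in the $L_{z}$ basis is immediate from Wigner--Eckart, but pinning down the degree either requires the explicit Clebsch--Gordan formula or the inductive tensor-product construction. An edge case to mention is $2l>2j$ (resp.\ $4l>2j$): there the minimal polynomial of $L_{z}$ has degree $2j+1$, so some of the listed moments become redundant, consistent with the fact that $T_{0}^{(\mu)}$ does not exist on the spin-$j$ irrep for $\mu>2j$.
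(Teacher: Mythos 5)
Your proposal is correct and follows essentially the same route as the paper's proof: invoke Theorem~\ref{prop-gen-spin-l}, use invariance under rotations about $\hat{n}$ to kill all modes with $m\neq 0$, and then establish an invertible triangular change of basis between $\{T_{0}^{(\mu)}:1\le\mu\le 2l\}$ and $\{L_{\hat{n}}^{k}:1\le k\le 2l\}$. The only cosmetic difference is the direction of the triangularity argument: the paper expands the powers $\bigl(T_{0}^{(1)}\bigr)^{k}$ into tensor operators via Eq.~\eqref{Clebsh-Gordon-tens} (showing a nonzero component in mode $(\mu=k,0)$ and none above), whereas you expand $T_{0}^{(\mu)}$ as a degree-$\mu$ polynomial in $L_{z}$ via Wigner--Eckart; these are equivalent, and the ``exact degree'' step you flag as the main obstacle is resolved by precisely the tensor-product induction you mention, which is the paper's argument.
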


Note that to specify  an arbitrary state $\rho$ of a spin-$j$ system which has the relevant symmetry property (invariance under rotations around direction $\hat{n}$), one needs $2j$ real parameters.  An instance of these parameters are $\{\text{tr}\left(\rho L^{k}_{\hat{n}}\right): 1 \le k\le 2j\}$. This particular characterization of states is used in \cite{Sher-Bart}  to specify how the quality of a quantum reference frame \emph{degrades} after using it to simulate a  channel or measurement. 

 In particular, they use this characterization to study the problem of simulating channels on a spin-1/2 system and of simulating measurements  on a spin-1 system.  But from the result of corollary  \ref{cor-prop-gen-spin-l}  we know that  to specify the performance of the quantum reference frame in both of these cases, we only need to specify two real parameters, namely, $\text{tr}\left(\rho L_{\hat{n}}\right) $ and $\text{tr}\left(\rho L^{2}_{\hat{n}}\right)$. As we will see next, this can lead to a significant simplification of the problem of characterizing how a quantum reference frame degrades when used to implement measurements and channels on another system.

\subsection{Degradation of quantum reference frames}
 Using a quantum reference frame to simulate a symmetry-breaking measurement or channel will inevitably \emph{degrade} it. This \emph{degradation} of quantum reference frames can be understood as a manifestation of the fact that obtaining information about a quantum system will necessarily disturb it.
 
  For example, in the case of rotational symmetry, consider a quantum reference frame which specifies an unknown direction in space.  We can use this quantum reference frame to simulate a rotation around this unknown direction on an object system. But by comparing the initial and final state of the object system we can  obtain some information about the unknown direction. So, using a quantum reference frame for simulating a rotation can be understood as performing a measurement on the quantum reference frame and since we thereby obtain some information about the quantum reference frame, its state is necessarily  disturbed in the process.

Different aspects of the degradation of quantum reference frames have been studied in  several papers (See e.g. \cite{Aha-Kauf,Aha-Popes, BRST06, BRST06b, Poulin-Yard, Sher-Bart,Ahm-Rud,MarvianSpekkensWAY} and the references in \cite{BRS07}). A central question studied in these papers is how 
the performance of the quantum reference frame for simulating the measurement or channel drops as a function of the number of implementations of the latter.

A natural special case of the degradation problem, 
considered in \cite{BRST06, BRST06b,Sher-Bart},  is where the average of the state of the system to which the measurement or channel is applied is symmetric. In other words, each time we use the quantum reference frame to simulate an operation on the object system, the initial state of the object system  is chosen at random from an ensemble the average state of which does not break the symmetry.  So, for example, in the case of rotational symmetry, which we study in this section,  the average state of the object system is assumed to be rotationally-invariant.

Then it follows that under this assumption the degradation of the quantum reference frame will be described by a rotationally-covariant channel.  In other words, the state of the quantum reference frame after $k$ uses, denoted $\rho_k$,
will be 
\begin{equation}
\rho_k=\mathcal{E}_{\text{Deg}}(\rho_{k-1})
\end{equation}
where $\mathcal{E}_{\text{Deg}}$ is a G-covariant channel. 
 This implies that under this assumption about the distribution of the states of the object system, different modes of asymmetry of the quantum reference frame degrade independently, i.e., 
\begin{equation}
\forall(\mu,m):\ \  \rho^{(\mu,m)}_k=\mathcal{E}_{\text{Deg}}(\rho^{(\mu,m)}_{k-1})
\end{equation}
This simple observation can greatly simplify the analysis.

Consider the case of spin-$j$ quantum reference frames for direction. First, note that this observation together with theorem   \ref{prop-gen-spin-l}  implies that  the quality of simulation of a channel or measurement on an object  system  after using the quantum reference frame for arbitrary number of times only depends on a fixed number of parameters of the initial state of the quantum reference frame and this number is  independent of the size of the quantum reference frame.
  
Furthermore, from example \ref{Ex-Rot-cov} we know that since the channel $\mathcal{E}_{\text{Deg}}$ which describes the degradation of the quantum reference frame is rotationally covariant it holds that
 \begin{equation}
\forall(\mu,m):\ \  \rho^{(\mu,m)}_k=c^{{(\mu)}}  \rho^{(\mu,m)}_{k-1}
\end{equation}
where $\{c^{(\mu)}\}$ is a set of real coefficients which describe the channel $\mathcal{E}_{\text{Deg}}$ and $\forall \mu: \left|c^{(\mu)}\right|\le1$.
But since, for a spin-$j$ representation of SO(3), the elements of the irreducible tensor operator basis $\{T_{m}^{(\mu)} \}$ are multiplicity-free it holds that $\rho^{(\mu,m)}_k=\text{tr}\left(\rho_{k} T_{m}^{(\mu)}{ }^{\dag} \right)T_{m}^{(\mu)}$ and therefore
 \begin{equation}
\forall(\mu,m):\ \  \text{tr}\left(\rho_{k} T_{m}^{(\mu)}{ }^{\dag} \right)=c^{{(\mu)}} \text{tr}\left(\rho_{k-1} T_{m}^{(\mu)}{ }^{\dag} \right)
\end{equation}
So if $\rho$ and $\rho_{k}$ are respectively the  initial state of the quantum reference frame and its state after $k$ uses,
then it holds that
 \begin{equation}\label{degrade-exp}
\forall(\mu,m):\ \  \text{tr}\left(\rho_{k} T_{m}^{(\mu)}{ }^{\dag} \right)=\left(c^{(\mu)}\right)^{k} \text{tr}\left(\rho T_{m}^{(\mu)}{ }^{\dag} \right)
\end{equation}
Since $|c^{(\mu)}|\le 1$ we can conclude that each of the modal components of the state of the quantum reference frame either remains constant or decays exponentially.

\begin{example}
Here, we consider the scenarios studied in \cite{Sher-Bart} where a spin-$j$ system is used as a quantum reference frame to simulate channels on a spin-1/2 system and measurements on a spin-one system.  Furthermore, it is also assumed that the average of the state of the object system is rotationally-invariant. This implies that the channel which describes the degradation of the quantum reference frame is also rotationally-covariant. It is also assumed in \cite{Sher-Bart} that the state of the quantum reference frame is initially invariant under rotation around an arbitrary direction which we denote by $\hat{z}$. Note that since the degradation of the quantum reference frame is described by a rotationally-covariant channel, the state of the quantum reference frame will remain invariant under rotations around $\hat{z}$. 

Now from theorem \ref{prop-gen-spin-l} and corollary \ref{cor-prop-gen-spin-l} we know that the performance of the state  $\rho$ of this quantum reference frame for these simulations is uniquely specified by two real parameters: the components of $\rho$ in modes $(\mu=1,m=0)$ and  $(\mu=2,m=0)$. But these are specified by   
\begin{align*}
\text{tr}\left(\rho T_{m=0}^{(\mu=1)}{ }^{\dag} \right)&= A _{1}\text{tr}\left(\rho L_{\hat{z}}\right),\  \\ \text{and}\ \ \ \    \text{tr}\left(\rho T_{m=0}^{(\mu=2)}{ }^{\dag} \right)&=A_{2}\text{tr}\left(\rho (3L^{2}_{z}-L^{2})\right)
\end{align*} 
where $A_{1}$ and $A_{2}$ are independent of $\rho$.\footnote{ $A^{-1}_{1}=\sqrt{\text{tr}(L^{2}_{z})}$ and $A^{-1}_{2}=\sqrt{\text{tr}([3L^{2}_{z}-L^{2}]^{2})}$.} Note that since the state $\rho$ by assumption is confined to the  irrep $j$ of SO(3), it follows that $\text{tr}(\rho L^{2})=j(j+1)$ and so 
$$ \text{tr}\left(\rho T_{m=0}^{(\mu=2)}{ }^{\dag} \right)=A_{2}\left[3\text{tr}\left(\rho L^{2}_{z}\right)-j(j+1)\right] $$
In other words, the quality of simulation is uniquely specified by the expectation values of the first and the second moments of $L_{z}$ for $\rho$.
  
Now using  Eq.~(\ref{degrade-exp}) we can conclude that if the initial state of the quantum reference frame is $\rho$ and if we have used the quantum reference frame $k$ times then the quality of the $(k+1)$th simulation is uniquely specified by
\begin{equation}\label{decay-ang}
\text{tr}(\rho_{k} L_{z})= \left(c^{(1)}\right)^{k}\text{tr}\left(\rho L_{z}\right)
\end{equation}
and 
\begin{equation}\label{decay-ang-sq}
\text{tr}(\rho_{k} L^{2}_{z})= \left[c^{(2)}\right]^{k}\text{tr}\left(\rho L^{2}_{z}\right)+\left[1- \left(c^{(2)}\right)^{k}\right] \frac{j(j+1)}{3} 
\end{equation}
where $\{c^{(\mu)}\}$ is the set of coefficients which describe the degradation channel $\mathcal{E}_{\text{Deg}}$. So, in the example studied in \cite{Sher-Bart}, the only properties of the channel $\mathcal{E}_{\text{Deg}}$ which  
are relevant to specify the drop in the quality of simulation   after 
many uses
are the two real coefficients $c^{(1)}$ and  $c^{(2)}$. Finally, note that  since $|c^{(1)}|\le 1$ then Eq.(\ref{decay-ang}) implies  that the absolute value of $\text{tr}(\rho_{k} L_{z})$  is either constant or decays exponentially with $k$. Similarly,  since $|c^{(2)}|\le 1$ then Eq.(\ref{decay-ang-sq}) implies  that $\text{tr}(\rho_{k} L^{2}_{z})$ is either constant or exponentially saturates to $ {j(j+1)}/{3}$, which is the expectation value of $L^{2}_{z}$ for the completely mixed state.
\end{example}

\section{Acknowledgements}
We acknowledge helpful discussions with Gilad Gour. Also, RWS acknowledges an early discussion with Matthias Christandl on the topic of simulating channels with quantum reference frames. Research at Perimeter Institute is supported in part by the Government of Canada through NSERC and by the Province of
Ontario through MRI. IM acknowledges support from NSERC, a Mike and Ophelia Lazaridis fellowship, and ARO MURI grant W911NF-11-1-0268.

\appendix

\section{Quantum coherence as asymmetry relative to phase shifts}\label{App:coherence}
We here argue that quantum coherence, considered as a resource, is simply the resource of asymmetry relative to phase shifts.  

Consider the case of coherence between the eigenspaces of a number operator $N$. 
For the phase complementary to $N$, the group of phase shifts is represented by the set of unitaries  $\{e^{i\theta N}: \theta\in (0,2\pi]\}$.
Clearly, the set of states that are invariant under phase shifts are those that are block-diagonal relative to the eigenspaces of the number operator, i.e., precisely those that have no coherence relative to these eigensapces.  Hence, a state with coherence is one that has some asymmetry relative to phase shifts.

Another observation which supports this view of coherence is that phase-insensitive operations, that is, operations that commute with all phase shifts, cannot generate coherence. In other words, under a phase-insensitive time evolution, incoherent states such as $|n\rangle\langle n|$ or $\frac{1}{{2}}(|0\rangle\langle 0|+|n\rangle\langle n|)$ cannot evolve to states which have coherence. From the point of view of the theory of asymmetry, this is a special example of the more general fact that for any given symmetry, symmetric time evolutions cannot take symmetric states to asymmetric states. Therefore, the problem of quantifying and classifying coherence can be considered as a special case of the theory of asymmetry where the group under consideration is U(1).   The theory of asymmetry is more general, however, because it also concerns non-Abelian groups, such as SO(3), where there is no preferred set of subspaces coherences between which imply asymmetry.

A slightly different approach to the study of coherence was  recently proposed  in \cite{Plenio}. Instead of focusing on phase-insensitive operations, the focus is on a class of \emph{incoherent operations}, which is defined as the set of all operations that transform every incoherent state to another incoherent state. One can easily show that incoherent operations are a proper subset of phase-insensitive operations: For instance, for any $n\neq m$ and $n'\neq m'$, the transformation $\alpha |n\rangle+\beta |m\rangle\longrightarrow\alpha |n'\rangle+\beta |m'\rangle$  can be realized via an incoherent operation while it is forbidden under phase-insensitive operations unless $n-m=n'-m'$. In other words, incoherent operations allow arbitrary permutations among the number eigenspaces. 
It follows that all states of the form $\alpha |n\rangle+\beta |m\rangle$ for $n\neq m$ are equivalent (i.e. reversibly interconvertable) under incoherent operations, whereas they are not all equivalent relative to phase-insenstive operations. 

To decide which of these two subsets of quantum operations is best suited to define coherence as a resource,
one should consider whether there is some realistic restriction on experimental capabilities  that would imply the realizability of only this subset.  In other words, we ask whether one can provide an operational interpretation of either subset.  Just an interpretations exists for phase-insensitive operations: these arise from the restriction that is imposed on a pair of parties when they lack a shared phase reference~\cite{BRS07}.  For instance, if the phase describes the configuration of an oscilator that is acting as a clock, then if two parties fail to have synchronized clocks, they lack a shared phase reference.
In these situations  the transformation $\alpha |n\rangle+\beta |m\rangle\longrightarrow\alpha |n'\rangle+\beta |m'\rangle$ cannot happen unless $n-m=n'-m'$. On the other hand, it is not clear if there is any operational scenario which motivates the study of incoherent operations. 

This analysis is bolstered by considering a similar distinction in entanglement theory.  
The subset of quantum operations that is taken to define the resource of entanglement is the set of Local Operations and Classical Communication (LOCC).  This is distinct from the set of  \emph{non-entangling} operations, the operations which map unentangled states to unentangled states. 
The latter set in particular includes nonlocal operations such as swapping two separated systems.  There is no obvious restricton on experimental capabilities that would permit swapping without also permiting the use of a quantum channel.  In other words, while the set of LOCC operations has a clear operational meaning, the set of nonentangling operations does not. 
Nonlocal operations such as swap are the counterpart, within the set of nonentangling operations, of the general permutations within the set of incoherent operations: neither seems to admit of a good operational interpretation.

\section{Proofs}\label{app:proofs}

\subsection{Proof of lemma \ref{G-cov-Sup-tensor}}

Since  $\{S_{m}^{(\mu,\alpha)}\}$ is a basis for $\mathcal{B}(\mathcal{H}_{out})$ then for any map $\mathcal{E}$
\begin{equation} \label{proof-lem-tensor}
\mathcal{E}(T_{m}^{(\mu,\alpha)})=\sum_{\mu',m',\beta} c_{(\mu,\mu';m,m';\alpha,\beta)} S_{m'}^{(\mu',\beta)}
\end{equation}
for some coefficients $c(\mu,\mu';m,m';\alpha,\beta)$. Now we apply the super-operator $\mathcal{U}_{g}$ to both sides of the above equation. Applying $\mathcal{U}_{g}$ on the left hand side and using G-covariance of $\mathcal{E}$ we get
\begin{equation}
\mathcal{U}_{g} (\mathcal{E}(T_{m}^{(\mu,\alpha)}))=\mathcal{E}(\mathcal{U}_{g}(T_{m}^{(\mu,\alpha)}))=\sum_{m''} u^{(\mu)}_{m''m}(g) \ \mathcal{E}(T_{m''}^{(\mu,\alpha)})
\end{equation}
On the other hand, applying $\mathcal{U}_{g}$ to the right-hand side of Eq.(\ref{proof-lem-tensor}) we get
\begin{align*}
\mathcal{U}_{g}(\sum_{\mu',m',\beta} &c_{(\mu,\mu';m,m';\alpha,\beta)} S_{m'}^{(\mu',\beta)})=\\ &\sum_{\mu',m',\beta} c_{(\mu,\mu';m,m';\alpha,\beta)} \sum_{m''} u^{(\mu')}_{m''m'}(g) S_{m''}^{(\mu',\beta)}
\end{align*}
Equating the right hand sides of the above two equations and using the orthogonality of the functions  $\{u_{mm''}^{(\mu)}(g)\}$ we find that $c_{(\mu,\mu';m,m';\alpha,\beta)}$ can be written as
\begin{equation}
c_{(\mu,\mu';m,m';\alpha,\beta)}=\delta_{mm'}\delta_{\mu\mu'} c_{\beta\alpha}^{(\mu)}
\end{equation}
So we conclude that
\begin{equation}
\mathcal{E}(T_{m}^{(\mu,\alpha)})=\sum_{\beta} c_{\beta\alpha}^{(\mu)}  S_{m}^{(\mu,\beta)}
\end{equation}
Note that  the orthonormality of the basis  $\{S_{m}^{(\mu,\alpha)}\}$ implies
\begin{equation}
c_{\beta\alpha}^{(\mu)}= \text{tr}\left( S_{m}^{(\mu,\beta)}{ }^{\dag}   \mathcal{E}(T_{m}^{(\mu,\alpha)})\right)
\end{equation}
which holds for all $m$. Finally, we notice that the orthonormality of the basis  $\{T_{m}^{(\mu,\alpha)}\}$ together with linearity of  $\mathcal{E}$ implies
\begin{equation}
\mathcal{E}(X)= \sum_{\mu,m,\alpha}  \text{tr}\left( T_{m}^{(\mu,\alpha)}{ }^{\dag} X \right)  \mathcal{E}(T_{m}^{(\mu,\alpha)})
\end{equation}
These last three equations together prove the lemma.

\subsection{Proof of Theorem \ref{prop-gen-spin-l}} 
We start by the proof in the case of  measurements.  
This proof follows  exactly the same as the proof in the special case of spin-1/2 systems. Suppose $\mathcal{H}$ is the Hilbert space of the system on which we simulate the measurement. Then by assumption the largest irrep of SO(3)
 showing up in  $\mathcal{H}$ is $l$. Let $g\rightarrow U(g)$ denote the projective representation of SO(3) on $\mathcal{H}$.
 
Then, as we have seen in section \ref{sec-rev} the set of possible ranks of the irreducible tensor operators acting on $\mathcal{H}$ is  the same as the set of all irreps of SO(3) which show up in the representation $g\rightarrow U(g)\otimes \bar{U}(g)$. But from section \ref{Ex-tensors-SU(2)} we know that in the case of SO(3) this set is equal to the set of all irreps which show up in the representation $g\rightarrow U(g)\otimes {U}(g)$. Now since the maximum irrep in the representation $g\rightarrow U(g)$ is $l$ then the maximum irrep in the representation $g\rightarrow U(g)\otimes {U}(g)$ is $2l$. Therefore, we conclude that the maximum rank of an irreducible tensor operator acting on $\mathcal{H}$ is $2l$. 

Now from lemma \ref{lem-modes-meas} we know that the channel describing the informative aspect of an arbitrary measurement on this space  has mode in the set $\{(\mu,m): \ \mu\le 2l\}$. This together with lemma \ref{lem-mode-mode} implies that to specify the performance of state $\rho$ of quantum reference frame to simulate measurements on this system  we only need to specify the  components of $\rho$ in all modes with $\mu\le 2l$. But, since  the irreducible tensor operators acting on the space of spin-$j$ system have no multiplicity there is only
$$\sum_{k=0}^{2l} (2k+1)=(2l+1)^{2}$$
independent irreducible tensor operators with rank less than or equal to $2l$. Furthermore, the rank $0$ tensor operator is proportional to the identity operator and so the component of $\rho$ in this mode is fixed by the normalization. This implies that  the performance of the quantum reference frame is determined by specifying at most $(2l+1)^{2}-1$ complex numbers corresponding to the expectation values of the density operator $\rho$ for all non-trivial irreducible tensor operators with rank less than or equal to $2l$ . Furthermore,  in the case of SO(3),  as we have seen in the discussion after Eq.(\ref{comp-conj}),
the Hermitian conjugate of a component  of an irreducible tensor operator with rank $\mu$ is still in the subspace  spanned by rank $\mu$ irreducible tensor operators. This implies that this subspace has a basis which is formed only from Hermitian operators. This together with the fact that  the  density operator $\rho$ itself  is a Hermitian operator imply that the components of $\rho$ for modes  with rank less than or equal $2l$ is uniquely specified by at most  $(2l+1)^{2}-1$ \emph{real} parameters. This completes the proof of theorem \ref{prop-gen-spin-l} in the case of measurements.
 
 Proof in the case of channels  follows  in the same way. The only difference is that the set of all possible modes that a  quantum channel acting on the system with Hilbert space $\mathcal{H}$ can have is determined by irreps which show up in the representation
$$g\rightarrow U(g)\otimes \bar{U}(g)\otimes U(g)\otimes \bar{U}(g)$$
of SO(3). Now, since the highest angular momentum which shows up in the representation $g\rightarrow U(g)$ is $l$, then the highest angular momentum which shows up in the above representation is $4l$. So an arbitrary channel acting on $\mathcal{H}$ can have mode in the set $\{(\mu,m): \ \mu\le 4l\}$. So, from lemma \ref{lem-mode-mode} to specify the performance of the quantum reference frame for simulating channels acting on this space we need to specify the components of $\rho$ for all modes with rank less than or equal $4l$. The rest of argument follows exactly the same as the  argument for the case of measurements.

\subsection{Proof of corollary \ref{cor-prop-gen-spin-l} }

We present the proof for the case of measurements. The proof for the case channels follows exactly in the same way. 

From theorem \ref{prop-gen-spin-l}  we know that to specify the performance of state $\rho$ as a quantum reference frame we need to specify all components of $\rho$ for all modes $\{(\mu,m): 1 \le \mu\le 2l\}$.  

Without loss of generality we assume state $\rho$ is invariant under rotations around $\hat{z}$. Now  for each mode $(\mu,m\neq 0)$, the corresponding component of the irreducible tensor operator basis, i.e. $T_{m}^{(\mu)}$ is not invariant under rotation around $\hat{z}$. Then, it follows that for all modes $(\mu,m\neq 0)$ the component of $\rho$ in those modes are zero, i.e. 
$$\forall (\mu,m\neq 0):\ \rho^{(\mu,m)}\equiv \text{tr}(\rho T_{m}^{(\mu)}{ }^{\dag})=0.$$
So we conclude that if the state $\rho$ is invariant under rotation around $\hat{z}$, then to specify its as a quantum reference frame we only need to specify its components in  modes $\{(\mu,0): 1 \le \mu\le 2l\}$. Now using Eq. (\ref{Clebsh-Gordon-tens}) we can 
easily show that the subspace spanned by 
$$\{T_{m=0}^{(\mu)}: 1 \le \mu\le 2l\} $$
is the same as the subspace spanned by 
$$\{ \left(T_{m=0}^{(1)}\right)^{k}: 1 \le k\le 2l\} $$
To see this we use  Eq. (\ref{Clebsh-Gordon-tens}) to decompose the product of irreducible tensor operators to the sum of irreducible tensor operators. Then Eq. (\ref{Clebsh-Gordon-tens}) implies that 
the problem of decomposing $ \left(T_{m=0}^{(1)}\right)^{k}$ to irreducible tensor operators is exactly equivalent to the problem of decomposing state $|j=1,m=0\rangle^{\otimes k}$ to irreps of SO(3). It follows that that  i) $ \left(T_{m=0}^{(1)}\right)^{k}$ has a nonzero component in mode $(\mu=k,0)$ and  ii) $ \left(T_{m=0}^{(1)}\right)^{k}$ does not have any  nonzero component in modes $(\mu>k,0)$.

So it follows that the span of $\{T_{m=0}^{(\mu)}: 1 \le \mu\le 2l\} $ is the same as span of $\{ \left(T_{m=0}^{(1)}\right)^{k}: 1 \le k\le 2l\} $. So to specify all the components of $\rho$ in modes  $\{(\mu,0): 1 \le \mu\le 2l\}$ one can specify all the moments of $\{\text{tr}(\rho \left(T_{m=0}^{(1)}\right)^{k}): 1 \le k\le 2l\}$ or equivalently the moments $\{\text{tr}(\rho L^{k}_{z}): 1 \le k\le 2l\}$. This completes the proof of corollary   \ref{cor-prop-gen-spin-l}.

\appendix




 




\bibliography{bibfile}

\nocite{*}

\end{document}